\newif\ifhidecomments
\newtheorem{theorem}{Theorem}
\newtheorem{corollary}{Corollary}
\newtheorem{proposition}{Proposition}
\newtheorem{lemma}{Lemma}
\newtheorem{remark}{Remark}
\theoremstyle{definition}
\newtheorem{definition}{Definition}
\providecommand{\Expect}[2][]{\ensuremath{%
\ifthenelse{\equal{#1}{}}{\mathbf{E}}{\mathbf{E}_{#1}}%
\left[#2\right]}\xspace}
\renewcommand{\varnothing}{\ensuremath{\emptyset}}
\newcommand{\tkedit}[1]{}
\newcommand{\tk}[1]{{#1}}
\newcommand{\bkedit}[1]{}
\newcommand{\bk}[1]{{#1}}
\newcommand{\yzedit}[1]{}
\newcommand{\remove}[1]{}
\newcommand{\tkedit}[1]{{\color{red}$[\bullet]$}{\marginpar{\scriptsize\color{red}[GS: #1]}}}
\newcommand{\tk}[1]{{\color{red} #1}}
\newcommand{\bkedit}[1]{{\color{magenta}$[\bullet]$}{\marginpar{\scriptsize\color{magenta}[BK: #1]}}}
\newcommand{\bk}[1]{{\color{magenta} #1}}
\newcommand{\yzedit}[1]{{\color{green}$[\bullet]$}{\marginpar{\scriptsize\color{blue}[PC: #1]}}}
\newcommand{\remove}[1]{{\color{yellow} #1}}
\newcommand{\Omit}[1]{}
\newenvironment{proof-sketch}{\noindent{\textit{Proof sketch.}}}{\smallskip}
\title{Solving Weighted Voting Game \\ Design Problems Optimally: \\ Representations, Synthesis, and Enumeration}
\author{Bart de Keijzer\thanks{Algorithms, Combinatorics and Optimization; Centrum Wiskunde \& Informatica; The Netherlands; Email: \texttt{keijzer@cwi.nl}.} \and Tomas B. Klos\thanks{Algorithmics; Delft University of Technology; The Netherlands; Email: \texttt{T.B.Klos@tudelft.nl}.} \and Yingqian Zhang\thanks{Department of Econometrics; Erasmus University Rotterdam; The Netherlands; Email: \texttt{yqzhang@ese.eur.nl}}}
\date{}
\begin{document}
\pagestyle{plain}
\bibliographystyle{plain}
\sloppy
\maketitle
\begin{abstract}
We study the \emph{power index voting game design problem} for weighted voting games: the problem of finding a weighted voting game in which the power of the players is as close as possible to a certain target distribution. Our goal is to find algorithms that solve this problem exactly. Thereto, we consider various subclasses of simple games, and their associated representation methods. We survey algorithms and impossibility results for the \emph{synthesis problem}, i.e., converting a representation of a simple game into another representation.

We contribute to the synthesis problem by showing that it is impossible to compute in polynomial time the list of \emph{ceiling coalitions} (also known as \emph{shift-maximal losing coalitions}) of a game from its list of \emph{roof coalitions} (also known as \emph{shift-minimal winning coalitions}), and vice versa.

Then, we proceed by studying the problem of enumerating the set of weighted voting games. We present first a naive algorithm for this, running in doubly exponential time. Using our knowledge of the synthesis problem, we then improve on this naive algorithm, and we obtain an enumeration algorithm that runs in quadratic exponential time (that is, $O(2^{n^2} \cdot p(n))$ for a polynomial $p$). Moreover, we show that this algorithm runs in \emph{output-polynomial time}, making it the best possible enumeration algorithm up to a polynomial factor.

Finally, we propose an exact \emph{anytime} algorithm for the power index voting game design problem that runs in exponential time. This algorithm is straightforward and general: it computes the error for each game enumerated, and outputs the game that minimizes this error. By the genericity of our approach, our algorithm can be used to find a weighted voting game that optimizes any exponential time computable function. We implement our algorithm for the case of the normalized Banzhaf index, and we perform experiments in order to study performance and error convergence.
\end{abstract}

\section{Introduction}\label{intro}

In many real-world problems that involve multiple agents, for instance elections, there is a need for fair decision making protocols in which different agents have different amounts of influence in the outcome of a decision. \textit{Weighted voting games} are often used in these decision making protocols. In a weighted voting game, a quota is given, and each agent (or also: player) in the game has a certain weight. If the total weight of a coalition of agents exceeds the quota, then that coalition is said to be \emph{winning}, and \emph{losing} otherwise.

Weighted voting games arise in various settings, such as political decision making (decision making among larger and smaller political parties), stockholder companies (where people with different numbers of shares are supposed to have a different amount of influence), and elections (e.g., in the US Presidential Election, where each state can be regarded as a player who has a weight equal to its number of electors).

The weight that a player has in a weighted voting game turns out not to be equal to his actual influence on the outcome of the decisions that are made using the weighted voting game. Consider for example a weighted voting game in which the quota is equal to the sum of the weights of all players. In such a game, a player's influence is equal to the influence of any other player, no matter what weight he has. Throughout the literature, various \emph{power indices} have been proposed: ways to measure a player's influence (or \textit{(a priori) power}) in a voting game. However, computing a power index turns out to be a challenge in many cases.

In this paper, instead of analyzing the power of each agent in a voting game, we investigate the problem that has been referred to as the ``inverse problem'' and the ``generalized apportionment problem''. We will call this problem the \emph{power index voting game design problem}. In the power index voting game design problem we are given a target power index for each of the agents, and we study how to design a weighted voting game for which the power of each agent is as close as possible to the given target power index. The power index voting game design problem is an instantiation of a larger class of problems that we will call \emph{voting game design problems}, in which the goal is to find a game $G$ in a class of voting games such that $G$ has a given set of target properties. 
cite{azizinverse}, who propose an algorithm to find a WVG for the Banzhaf index.

The practical motivation behind our work is obvious: It is desirable to have an algorithm that can quickly compute a fair voting protocol, given that we want each agent to have some specified amount of influence in the outcome. When new decision making bodies must be formed, or when changes occur in the formation of these bodies, such an algorithm may be used to design a voting method that is as fair as possible. Only little work is known that tries to solve this problem. Two of the proposed algorithms (see~\cite{shaheeninverse,azizinverse}) are local search methods that do not guarantee an optimal answer. There is one other paper by Kurz \cite{kurzinverse} that proposes a method for finding an exact algorithm. Such an algorithm to solve the inverse problem exactly is also the topic of this paper: We are interested in finding a game for which the power index of that game is the \textit{closest possible} to a certain target power index. The work we present here is independent from \cite{kurzinverse}, and 
differs from it in the sense that we put more emphasis on run-time analysis and giving proofs of various desirable properties of our algorithm. Moreover, our approach is vastly different from the approach of \cite{kurzinverse}, and the theory behind our algorithm is of independent interest.

It seems that the most straightforward approach to solve the inverse problem would be to simply enumerate \emph{all possible} weighted voting games of $n$ players, and to compute for each of these weighted voting games its power index. We can then output the game of which the power index is the closest to the given target power index. This is precisely what we will do in this paper. Unfortunately, it turns out that enumerating all weighted voting games efficiently is not so straightforward.

The enumeration method that we present in this paper leads to a \emph{generic exponential time exact anytime algorithm} for solving voting game design problems. We implemented our algorithm for the power index voting game design problem where our power index of choice is the \emph{(normalized) Banzhaf index}: one of the two most widely used power indices. Using this implementation, we experimentally study the runtime and error convergence of this algorithm.

\subsection{Contributions}

This paper is based on the master's thesis of De Keijzer \cite{dekeijzer}; one of the authors of this manuscript. A shorter discussion of this work has also appeared \cite{dekeijzer2}. We present and discuss the results of \cite{dekeijzer}, and remove various redundancies, imprecisions, typos, and mistakes that were present in \cite{dekeijzer}. The results we present are thus as follows:
\begin{itemize}
\item We provide a general definition of voting game design problems, and show how the power index voting game design problem is one of these problems.
\item We present lower and upper bounds on the cardinalities of various classes of games. As it turns out, for many of these classes there is a very strong connection with certain classes of boolean functions; allowing us to borrow many bounds directly from boolean function theory.
\item We investigate thoroughly the problem of transforming various representations for simple games into each other. We give an overview of known results, and we present a new result: We prove that it is not possible to transform within polynomial time the \emph{roof-representation} of a game into a \emph{ceiling-representation}, and vice versa.\footnote{The roof-representation is also known as the \emph{shift-minimal winning coalition} representation \cite{simplegames}. Likewise, the ceiling representation is also known as the \emph{shift-minimal losing coalition} representation.}
\item We present exact algorithms for solving power index voting game design problems: first, a doubly exponential one for the large class of monotonic simple games; and subsequently we show that it is possible to obtain a (singly) exponential algorithm for the important special case of weighted voting games. This can be regarded as the main result of this paper.
\begin{itemize}
\item At the core of these algorithms lie methods for enumerating classes of games. Therefore, it actually follows that the same approach can be used for solving practically any voting game design problem.
\item The method that we use for enumerating weighted voting games is based on a new partial order on the class of weighted voting games, that has some specific interesting properties. This result is of independent interest from a mathematical point of view.
\end{itemize}
\item The algorithm for solving the power index voting game design problem for the case of weighted voting games (mentioned in the previous point) is based on working with families of minimal winning coalitions. We show how it is possible to improve the runtime of this algorithm by showing that it suffices to only work with a subset of these minimal winning coalitions: The \emph{roof coalitions}. Using this idea, we provide various techniques to improve our algorithm. Among these improvements is an output-polynomial time algorithm for outputting the list of ceiling coalitions of a \emph{linear game}, given the list of roof coalitions.
\item Finally, we implement the aforementioned enumeration algorithm for weighted voting games, in order to measure its performance, obtain some interesting data about the class of weighted voting games, and validate some theoretical results related to weighted voting games.
\end{itemize}

\subsection{Related work}\label{knownmethods}

Although some specific variants of voting game design problems are mentioned sporadically in the literature, not many serious attempts to solve these problems are known to us. The voting game design problem that is usually studied is that of finding a weighted voting game, represented as a weight vector, for which the power index lies as close as possible to a given target power index. This specific version of the voting game design problem is sometimes referred to as \textit{the inverse problem}, and is also the focus of this paper.

We know of only a few papers where the authors propose algorithms for the inverse problem. One of them is by Fatima et al.\ \cite{shaheeninverse}, where the authors present an algorithm for the inverse problem with the Shapley-Shubik index \cite{shapleyshubik} as the power index of choice. This algorithm works essentially as follows: It first receives as input a target Shapley-Shubik index and a vector of initial weights. After that, the algorithm enters an infinite loop where repeatedly the Shapley-Shubik index is computed, and the weight vector is updated according to some rule. The Shapley-Shubik index is computed using a linear time randomized approximation algorithm, proposed in \cite{shaheenapprox1,shaheenapprox2} by the same authors. For updating the weights, the authors propose two different rules of which they prove that by applying them, the Shapley-Shubik index of each player cannot get worse. Hence, the proposed algorithm is an \textit{anytime} algorithm: it can be terminated at any time, but
gets closer to the optimal answer the longer the algorithm runs. No analysis on the approximation error is done, although the authors mention in a footnote that analysis will be done in future work. The runtime of one iteration of the algorithm is shown to be $O(n^2)$ (where $n$ denotes the number of players).

Another algorithm is by Aziz et al.\ \cite{azizinverse} for the inverse problem with the Banzhaf index as the power index of choice.
The algorithm the authors present here resembles that of \cite{shaheeninverse}, in that the algorithm repeatedly updates the weight vector in order to get closer to the target power index. The algorithm gets as input a target Banzhaf index. As an initial step, an integer weight vector is estimated according to a normal distribution approximation. Subsequently, the algorithm enters an infinite loop, and consecutively computes the Banzhaf index and updates the weight. For computing the Banzhaf index, the \textit{generating function method} is used \cite{bilbaofernandez,generatingshapley,generatingbrams}. This is an exact pseudopolynomial time method that works only when the weights in the weighted representation of a game are integers. Therefore, the output of the algorithm is always an integer weighted representation (contrary to the method in  \cite{shaheeninverse} for which the output may have rational weights). The updating is done by interpolating a best fit curve. This results in a rational weight vector.
 To obtain integer weights, the weight vector is rounded to integers, but prior to that it is multiplied by a suitable constant that reduces the error when rounding to integers.

For Aziz's approach, there is no approximation guarantee and the convergence rate is unknown, so it is not certain whether this method is \textit{anytime} just like Fatima's algorithm. Moreover, not much is known about the time complexity and practical performance of this algorithm (one example is presented of this algorithm working on a specific input).

Leech proposes in \cite{leechsurvey,leechdesign} an approach that largely resembles the method of Aziz et al.: it is the same, with the exception that a different updating rule is used. The method that Leech uses for computing the Banzhaf index is not mentioned. The focus in this paper is on the results that are obtained after applying the method to the 15-member EU council (also see \cite{leechdesigneu}), and to the board of governors of the International Monetary Fund.

There are two more recent interesting works on the voting game design problem. One is by Kurz \cite{kurzinverse}. Kurz proposes an exact method using integer linear programming, for solving the weighted voting game design problem for both the Shapley-Shubik index and the Banzhaf index.
The set of linear games is taken as the search space, and branch-and-bound techniques (along with various insights about the set of weighted voting games) are used in order to find in this set a weighted voting game with a power index closest to the target.
Kurz does not provide a runtime analysis. The experiments performed show that the algorithm works well for small numbers of players.
As mentioned in Section \ref{intro}, our work is independent and differs from \cite{kurzinverse} because we are interested in devising an algorithm with a provably good runtime. Moreover, the approach we take is different from that of \cite{kurzinverse}, and
the theory necessary to develop our algorithm can be considered interesting in itself.

Kurz moreover correctly points out that in the master's thesis of De Keijzer \cite{dekeijzer} (on which the present paper is based) the numbers of canonical weighted voting games for $6$, $7$, and $8$ players are wrongly stated. After investigation on our part, it turned out that this is due to a bug in the first implementation of the algorithm. In this paper, we correct this mistake and report the numbers of canonical weighted voting games correctly, although these numbers are already known by now due to the recent paper \cite{kurzminsum}, also by Kurz (see below).

The other recent work is \cite{servedioapprox}, by De et al. This paper gives provides as a main result an algorithm for the inverse power index problem for the case of the Shapley-Shubik index, and has a certain approximation guarantee: in addition to a target power index, the algorithm takes a precision parameter $\epsilon$ and guarantees to output a weighted voting game of which the power index is $\epsilon$-close to it, on the precondition that there \emph{exists} an $\epsilon$-close weighted voting game with the property that the quota is not too skewed, in a particular sense. This is, to our knowledge, the only polynomial time algorithm for a power index voting game design problem that provides an approximation guarantee in any sense.

Closely related to our work are two papers that deal with the Chow parameters problem \cite{servediochow,dechow}. The results in their paper are stated in terms of boolean function theory and learning theory, but when translated to our setting, these papers can be seen to deal with approximation algorithms for a type of value that can be considered a power index: The Chow parameters of a given player in a given game is defined to be the total number of winning coalitions that the players is in. The authors present in these papers, as a main result, a polynomial time approximation scheme for computing the Chow parameters of a weighted voting game.

The problem of \emph{enumerating} the set of weighted voting games on a fixed number of players is, as we will see, closely related to the approach that we take for solving the weighted voting game design problem. This enumeration problem has been studied before in a paper by Kurz \cite{kurzminsum}, where the author uses integer programming techniques in order to enumerate all canonical weighted voting games on up to nine players. The author generates integer weighted representations for all of these games and classifies the games that do not have a unique minimum-sum integer weighted representation.

In \cite{krohnsudholter}, Krohn and Sudh\"olter study the enumeration of canonical linear games and a subclass thereof, using various order theoretic concepts. It does not directly address the problem of enumerating weighted voting games, although it discusses a correspondence between the $n$-player \emph{proper} weighted voting games and the $(n+1)$-player \emph{decisive} weighted canonical linear games.\footnote{A game is called proper if the complement of any winning coalition is losing. A game is called decisive if it is proper and the complement of any losing coalition is winning.} Because the class of canonical linear games is much bigger than the class of weighted voting games, their algorithms do not imply an efficient enumeration procedure for weighted voting games, as is one of our main contributions in the present work. However, there are some connections between our work and \cite{krohnsudholter}: their enumeration procedures work by exploiting graded posets, just like ours; although the posets
in question there are the subsets winning coalitions together with the set inclusion relation (for the case of decisive canonical linear games, they use a variant of this poset), and not on the subsets of minimal winning coalitions. Although their idea of using a graded poset corresponds with ours, it seems to us that our results cannot be connected to theirs in a stronger sense. Moreover, the proofs of the properties that we establish for the partially ordered set we propose here, use vastly different ideas, and crucially exploit weightedness.

Threshold functions \cite{hu1965threshold,murogathreshold} are of fundamental research interest in voting games, circuit complexity and neural networks. The problem of realizing Boolean threshold functions by neural networks has been extensively studied \cite{parberry1994circuit,siu1995discrete,freixas2008}, where upper and lower bounds are derived on the synaptic weights for such realization. The enumeration of threshold functions is closely related to the enumeration of weighted voting games (as threshold functions are essentially weighted voting games where negative weights are allowed.). 
The enumeration of threshold functions up to six variables has been done in \cite{murogasix}. Subsequently, in \cite{winderseven,murogaeight}, all threshold functions of respectively seven and eight variables were enumerated. Krohn and Sudh\"olter \cite{krohnsudholter} enumerated the canonical weighted voting games up to eight players, as well as the class of canonical linear games. Kurz \cite{kurzminsum} was the first to enumerate all nine player canonical weighted voting games, and Freixas and Molinero \cite{freixasunique} were the first to enumerate all nine player canonical linear games.

There exists some litature on enumeration of special subclasses of voting games as well: see \cite{freixastwotypes} for linear games with two desirability classes; \cite{kurzroof} for weighted voting games with one roof; and \cite{kurztypes} for linear games with certain special types of voters and few desirability classes.

Alon and Edelman observe that we need to know \textit{a priori} estimates of what power indices are achievable in simple games, in order to analyze the accuracy of these kinds of iterative algorithms, i.e., there is a need for information about the distribution of power indices in $[0,1]^n$. As a first step into solving this problem, they prove in \cite{alonedelman} a specific result for the case of the Banzhaf index for monotonic simple games.

Also, some applied work has been done on the design of voting games. In two papers, one by Laruelle and Widgr\'en \cite{laurellewidgren} and one by Sutter \cite{sutter}, the distribution of voting power in the European Union is analyzed and designed using iterative methods that resemble the algorithm of Aziz \cite{azizinverse}. Similar work was done by Leech for the EU \cite{leechdesigneu}, and for the IMF \cite{leechimf}.

Finally, a research direction that is related to our problem is that of studying minimal integer representation for a weighted voting game: Bounds on the maximum weight in such a representation provide us with a finite set of weighted representations to search through, as a means of solving our design problem. We explain this in greater detail in the next section. Some classical relevant bounds can be found in \cite{murogathreshold}, Section 9.3. See \cite{freixasmin,freixasunique} for some recent work in this direction.

\subsection{Outline}

The paper is divided into seven sections.

Section \ref{sec:preliminaries} introduces the required preliminary knowledge and defines some novel concepts. In particular it introduces cooperative games, with an emphasis on simple games (since our paper deals exclusively with simple games). We will also explain the notion of a power index, because a great part of what motivates the results of this paper has to do with a problem related to power indices. We give a definition of one of the most popular power indices (the Banzhaf index), and we briefly discuss algorithms for computing them, as well as the computational complexity of this problem.

We define in Section \ref{sec:problemstatement} the main problem of interest that we attempt to solve in this paper: the problem where we are given a target power index, and where we must find a game such that its power index is as close as possible to the given target power index. We explain that this specific problem is part of a more general family of problems that we call \text{voting game design problems}. 

Before directly trying to solve the problem introduced in Section \ref{sec:problemstatement}, we first discuss in Section \ref{sec:vgsynth} the problem of transforming various representations of games into each other. We give polynomial time algorithms for some of these problems, and also in some cases impossibility results regarding the existence of polynomial time algorithms. Also, in this section, we make some statements about the cardinality of certain classes of simple games.

Some of the results given in Section \ref{sec:vgsynth} are necessary for Section \ref{sec:PVGDsolving}, where we devise exact algorithms for the power index voting game design problem (our main problem of interest). We first explain a naive approach for the class of monotonic simple games, and after that, improve on this exponentially for the subclass of weighted voting games. We show how this improvement is possible by the existence of a certain partial order with some specific desirable properties. Next, we give various improvements to this algorithm by making use of the concepts of \emph{roof} and \emph{ceiling} coalitions.

After that, in Section \ref{sec:experiments} we will show various experimental results of a simple implementation of this exact algorithm. Because we can also use these algorithms as enumeration algorithms, we are able to provide some exact information about voting games, such as how many weighted voting games with a fixed number of minimal winning coalitions exist.

We conclude this paper in Section \ref{sec:conclusions}, with a discussion and some ideas for future work.

\section{Preliminaries}\label{sec:preliminaries}

In this section, we will discuss some required preliminary definitions and results.
Throughout this paper, we assume familiarity with big-O notation and analysis of algorithms.
In some parts of this paper, some basic knowledge of computational complexity theory is assumed as well, although these parts are not crucial for understanding the main results presented. We will not cover these topics in this section.

We use some order-theoretic notions throughout various sections of the paper. These are given in the following definition.
\begin{definition}[Partial order, (graded) poset, cover, rank function, least element]\label{def:posets}
For a set $S$, a \emph{partial order} $\preceq$ is a relation on $S$ that is \emph{reflexive}, so $\forall x\in S:x\preceq x$; \emph{antisymmetric}, so $\forall x,y\in S:((x\preceq y\wedge y\preceq x)\to x=y)$; and \emph{transitive}, so $\forall x,y,z\in S:((x\preceq y\wedge y\preceq z)\to x\preceq z)$. A \emph{partially ordered set} or \emph{poset} is a set $S$ equipped with a partial order $\preceq$, i.e., a pair $(S, \preceq)$.
A \emph{least element} of a poset $(S,\preceq)$ is an element $x \in S$ such that $x \preceq y$ for all $y \in S$.
A \emph{minimal element} of $(S,\preceq)$ is an element $x \in S$ such that $y \preceq x$ implies $y = x$ for all $y \in S$.
We say that $y \in S$ \emph{covers} $x \in S$ in $(S, \preceq)$ when $x \preceq y$ and there is no $z \in S$ such that $x \preceq z \preceq y$.
A poset $(S, \preceq)$ is \emph{graded} when there exists a \emph{rank function} $\rho : S \rightarrow \mathbb{N}$ such that: \bk{i.) $\rho$ is constant on all minimal elements of $(S, \preceq)$}. ii.) $\rho(x) \leq \rho(y)$ for all $x,y \in S$ such that $x \preceq y$. iii.) for any pair $x,y \in S$ it holds that if $y$ covers $x$ in $(S, \preceq)$, then $\rho(y) = \rho(x)+1$.
\end{definition}

The remainder of this section on preliminaries will be devoted to the theory of cooperative simple games.
A lot of the information in this section can be looked up in an introductory text on cooperative game theory, for example \cite{introcoopgametheory} or in Taylor and Zwicker's book on simple games \cite{simplegames}. We start with defining some essential terminology.
\begin{definition}[Cooperative (simple) games, grand coalition, characteristic function, monotonicity]\mbox{}
\begin{itemize}
\item A \emph{cooperative game} is a pair $(N,v)$, where $N$ is a finite set of \emph{players}; subsets of $N$ are called \emph{coalitions} and $v : 2^N \rightarrow \mathbb{R}_{\geq 0}$ is a function mapping coalitions to non-negative real numbers. Intuitively, $v$ describes how much collective payoff a coalition of players can gain when they cooperate.

\item $N$ is called the \emph{grand coalition}. $v$ is called the \emph{characteristic function} or \emph{gain function}.

\item A \emph{simple game} is a cooperative game $(N,v)$ where the codomain of $v$ is restricted to $\{0,1\}$. In this context, subsets of $N$ are referred to as \emph{winning coalitions} if $v(S) = 1$, and \emph{losing coalitions} otherwise (i.e., if $v(S) = 0$). (We purposefully do not exclude the game with only losing coalitions, as is customary. The reason is that including this game will make it more convenient later to show that a particular structure exists in a subclass of the simple games.)

\item A cooperative game $(N,v)$ is \emph{monotonic} if and only if $v(S) \leq v(T)$ for all pairs of coalitions $S,T \in 2^N$ that satisfy $S \subseteq T$.
\end{itemize}
\end{definition}
Note that in a large body of literature, the additional assumption is made that $v(\varnothing)$ in a cooperative game. For the sake of stating the results of our paper elegantly, we do not make this assumption.

We will often use simply the word \emph{game} to refer to a cooperative game.
A cooperative game $(N,v)$ will often be denoted by just $v$ when it is clear what the set of players is.
Later, we define various important additional classes of simple games. Since we will be working with these classes extensively, it is convenient to introduce the following notation in order to denote classes of games that are restricted to a fixed number of players $n$:
\begin{definition}
Let $\mathcal{G}$ be a class of games. Then we use $\mathcal{G}(n)$ to denote the class of games restricted to the set of players $\{1, \ldots, n\}$. In more formal language:
\begin{equation*}
\mathcal{G}(n) = \{ G : G \in \mathcal{G} \wedge G = (\{1,\ldots,n\}, v) \}.
\end{equation*}
\end{definition}
Throughout this paper, $n$ will always be the symbol we use to denote the number of players in a cooperative game.

The monotonic simple games are the games that we are concerned with in this paper.

\begin{definition}[The class of monotonic simple games]
We define $\mathcal{G}_{\mathsf{mon}}$ to be the class of all monotonic simple games.
\end{definition}

Some of the definitions in the remainder of this section are taken or adapted from \cite{harisazizinfluence} and \cite{simplegames}. Next, we turn to some syntactic definitions of certain classes of simple games. There are various important ways to represent simple games:
\begin{definition}[Representations of simple games]
Suppose that $(N,v)$ is a simple game. Let $W = \{S : S \in 2^N \wedge v(S) = 1\}$ and $L = \{S : S \in 2^N \wedge v(S) = 0\}$ be its sets of respectively \emph{losing coalitions} and \emph{winning coalitions}. Define $W_{\min} = \{S \in W : (\forall i \in S) v(S \setminus \{i\}) = 0\}$ and $L_{\max}= \{S \in L : (\forall i \in N \setminus S) v(S \cup \{i\}) = 1\}$ as their respective sets of \emph{minimal winning coalitions} and \emph{maximal losing coalitions}. We can describe a simple game in the following forms:
\begin{description}
 \item[Winning coalition form] $(N, W)$ is called the \textit{winning coalition form} of $(N,v)$.
 \item[Losing coalition form] $(N, L)$ is called the \textit{losing coalition form} of $(N,v)$.
 \item[Minimal winning coalition form] If $(N,v)$ is monotonic, then $(N, W_{\min})$ is the \textit{minimal winning coalition form} of $(N,v)$. Observe that $W_{\min}$ fully describes $v$ if and only if $(N,v)$ is monotonic.
 \item[Maximal losing coalition form] If $(N,v)$ is monotonic, then $(N, L_{\max})$ is the \textit{maximal losing coalition form} of $(N,v)$. Observe that $L_{\max}$ fully describes $v$ if and only if $(N,v)$ is monotonic.
 \item[Weighted form] If there exists a quota $q \in \mathbb{R}_{\geq 0}$ and a weight $w_i \in \mathbb{R}_{\geq 0}$ for each player $i \in N$, such that for each coalition $S \in 2^N$ it holds that $v(S) = 1 \Leftrightarrow \sum_{i \in S} w_i \geq q$, then the vector $w = (q, w_1, \ldots,w_n)$, also written as $[q; w_1, \ldots, w_n]$, is called a \text{weighted form} of $(N,v)$. Observe that every game that has a weighted form is also monotonic.
\end{description}
\end{definition}

Games that have a weighted form are of our main interest and have a special name:
\begin{definition}[Weighted voting games]
If a monotonic simple game has a weighted form, then it is called a \emph{weighted voting game}. The class of all weighted voting games is denoted by $\mathcal{G}_{\mathsf{wvg}}$.
\end{definition}
It is well known that the class of weighted voting games is strictly contained in the class of monotonic simple games: examples of monotonic simple games that are not weighted are numerous and easily constructed. Later, in Section \ref{cardinalities}, we discuss the cardinalities of these classes with respect to $n$.

A weighted voting game is an important type of simple game because it has a compact representation. Also, weighted voting games are important because they are used in a lot of practical situations, i.e., in a lot of real-life decision making protocols, for example: elections, politics and stockholder companies.
An important property of weighted voting games that we will use, is that a weighted representation of such a game is invariant to scaling:
\begin{proposition}\label{wvginvmult}
Let $G \in \mathcal{G}_{\mathsf{wvg}}(n)$ be a weighted voting game, and let $\ell = [q; w_1, \ldots, w_n]$ be a weighted representation for $G$. For every $\lambda \in \mathbb{R}^+$, we have that $\ell' = [\lambda q; \lambda w_1, \ldots, \lambda w_n]$ is a weighted representation for $G$.
\end{proposition}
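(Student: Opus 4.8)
The plan is to unwind the definition of a weighted form and verify the biconditional directly, using only the scaling invariance of the threshold inequality by a positive constant. Concretely, I would fix $G = (N, v)$ with $N = \{1, \ldots, n\}$, fix the weighted representation $\ell = [q; w_1, \ldots, w_n]$ for $G$, fix $\lambda \in \mathbb{R}^+$, and set $\ell' = [\lambda q; \lambda w_1, \ldots, \lambda w_n]$. Since all the original weights and the quota are non-negative reals and $\lambda > 0$, the components $\lambda q$ and $\lambda w_i$ are again non-negative reals, so $\ell'$ is at least a syntactically valid weighted form; it then remains only to check that it represents the same game $v$.

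For the core verification, I would take an arbitrary coalition $S \in 2^N$ and show $v(S) = 1 \Leftrightarrow \sum_{i \in S} \lambda w_i \geq \lambda q$. Starting from the fact that $\ell$ represents $G$, we have $v(S) = 1 \Leftrightarrow \sum_{i \in S} w_i \geq q$. The key elementary observation is that for $\lambda > 0$, the inequality $\sum_{i \in S} w_i \geq q$ holds if and only if $\lambda \sum_{i \in S} w_i \geq \lambda q$, i.e., multiplying both sides of a real inequality by a strictly positive scalar preserves its truth value in both directions. Finally, $\lambda \sum_{i \in S} w_i = \sum_{i \in S} \lambda w_i$ by distributivity of multiplication over the finite sum. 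Chaining these equivalences gives $v(S) = 1 \Leftrightarrow \sum_{i \in S} \lambda w_i \geq \lambda q$ for every $S$, which is exactly the condition for $\ell'$ to be a weighted representation of $G$.

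There is no real obstacle here: the statement is essentially immediate from the definition, and the only thing one must be slightly careful about is that $\lambda$ is required to be \emph{strictly} positive — if $\lambda$ were allowed to be $0$, the direction ``$\sum_{i \in S} \lambda w_i \geq \lambda q \Rightarrow \sum_{i \in S} w_i \geq q$'' would fail (every coalition would become winning). Since the hypothesis is $\lambda \in \mathbb{R}^+$, this is not an issue, but I would state the use of strict positivity explicitly at the point where the inequality is scaled. The whole argument is a one-paragraph chain of equivalences, so I would present it as such rather than breaking it into separate lemmas.
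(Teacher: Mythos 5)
Your proposal is correct and uses essentially the same argument as the paper: both amount to observing that multiplying the coalition-weight inequality by a strictly positive $\lambda$ preserves it, with the paper phrasing this as two cases ($w_\ell(C) < q$ and $w_\ell(C) \geq q$) and you phrasing it as a single chain of equivalences. No gap to report.
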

\begin{proof}
For any coalition $C \subseteq N$ such that $w_\ell(C) < q$:
\begin{equation*}
w_{\ell'}(C) = \sum_{i \in C} \lambda w_i = \lambda \sum_{i \in C} w_i = \lambda w_\ell(C) < \lambda q,
\end{equation*}
and for any coalition $C \subseteq N$ such that $w_\ell(C) \geq q$:
\begin{equation*}
w_{\ell'}(C) = \sum_{i \in C} \lambda w_i = \lambda \sum_{i \in C} w_i = \lambda w_\ell(C) \geq \lambda q.
\end{equation*}
\end{proof}

We will be using the following notational abuse in the remainder of this paper: Whenever we are discussing a weighted voting game $G$ with players $N = \{1,\ldots, n\}$ and weighted form $[q; w_1, \ldots, w_n]$, we use $w(S)$ as a shorthand for $\sum_{i \in S} w_i$ for any subset $S$ of $N$.

We next turn our attention to the topic of influence and power in monotonic simple games. For a monotonic simple game, it is possible to define a relation called the \textit{desirability relation} on the players (see \cite{isbelldesirability}):
\begin{definition}[Desirability relation]
For a monotonic simple game $(N,v)$, the \emph{desirability relation} $\succeq_v$ is defined by:
\begin{itemize}
\item For any $i,j \in N:$ if $\forall S \subseteq N \setminus \{i,j\} : v(S \cup \{i\}) \geq v(S \cup \{j\})$, then $i \succeq_v j$. In this case we say that $i$ is \textit{more desirable} than $j$.
\item For any $i,j \in N:$ if $\forall S \subseteq N \setminus \{i,j\} : v(S \cup \{i\}) = v(S \cup \{j\})$, then $i \sim_v j$. In this case we say that $i$ and $j$ are \textit{equally desirable}.
\item For any $i,j \in N:$ if $\forall S \subseteq N \setminus \{i,j\} : v(S \cup \{i\}) \leq v(S \cup \{j\})$, then $i \preceq_v j$. In this case we say that $i$ is \textit{less desirable} than $j$.
\item For any $i,j \in N:$ if $i \succeq_v j$ and not $i \sim_v j$, then $i \succ_v j$. In this case we say that $i$ is \textit{strictly more desirable} than $j$.
\item For any $i,j \in N:$ if $i \preceq_v j$ and not $i \sim_v j$, then $i \prec_v j$. In this case we say that $i$ is \textit{strictly less desirable} than $j$.
\end{itemize}
Moreover, if neither $i \succeq_v j$ nor $j \succeq_v i$ holds for some $i,j \in N$, then we say that $i$ and $j$ are \textit{incomparable}.

In cases that it is clear which game is meant, we drop the subscript and write $\preceq, \prec, \succeq, \succ, \sim$ instead of $\preceq_v, \prec_v, \succeq_v, \succ_v, \sim_v$.
\end{definition}
There exist other notions of desirability, for which different properties hold \cite{desirability}.
In the context of other desirability relations, the desirability relation that we have defined here is refered to as the \emph{individual desirability relation}. Since this is the only desirability relation that we will use in this paper, we will refer to it as simply the \emph{desirability relation}.

Using the notion of this desirability relation, it is now possible to define the class of \textit{linear games}.
\begin{definition}[Linear game]
A simple game $(N,v)$ is a \textit{linear game} if and only if it is monotonic, and in $(N,v)$ no pair of players in $N$ is incomparable with respect to $\preceq$. Thus, for a linear game $(N,v)$, $\preceq$ is a total preorder on $N$. We denote the class of linear games by $\mathcal{G}_{\mathsf{lin}}$.
\end{definition}

It is straightforward to see that all weighted voting games are linear: let $(N,v)$ be a weighted voting game where $N = \{1, \ldots, n\}$, and let $[q; w_1, \ldots, w_n]$ be a weighted form of $(N,v)$. Then it holds that $i \preceq j$ when $w_i \leq w_j$. Hence, every pair of players is comparable with respect to $\preceq$.

In fact, the following sequence of strict containments holds: $\mathcal{G}_{\mathsf{wvg}} \subset \mathcal{G}_{\mathsf{lin}} \subset \mathcal{G}_{\mathsf{mon}}$. This brings us to the definition of two special classes of games that will be convenient for use in subsequent sections.
\begin{definition}[Canonical weighted voting games \& canonical linear games]
A linear game $(N,v)$ is a \textit{canonical linear game} whenever $N = \{1, \ldots, n\}$ for some $n \in \mathbb{N}_{> 0}$, and the desirability relation $\succeq$ satisfies $1 \succeq 2 \succeq \cdots \succeq n$. When $G$ is also weighted, then $G$ is a \textit{canonical weighted voting game}. The class of canonical linear games is denoted by $\mathcal{G}_{\mathsf{clin}}$, and the class of canonical weighted voting games is denoted by $\mathcal{G}_{\mathsf{cwvg}}$.
\end{definition}
Note that a canonical weighted voting game always has a weighted representation that is non-increasing.

It is now time to introduce two special ways of representing canonical linear games.
\begin{definition}[Left-shift \& right-shift]\label{leftshiftrightshift}
Let $N$ be the set of players $\{1, \ldots, n\}$ and let $S$ be any subset of $N$. A coalition $S' \subseteq N$ is a \textit{direct left-shift} of $S$ whenever 
there exists an $i\in S$ and $i-1 \not\in S$ with $2 \leq i \leq n$
such that $S' = (S \setminus \{i\}) \cup \{i-1\}$. A coalition $S' \subseteq N$ is a \textit{left-shift} of $S$ whenever for some $k \geq 1$ there exists a sequence $(S_1, \ldots, S_k) \in (2^n)^k$, such that
\begin{itemize}
\item $S_1 = S$,
\item $S_k = S'$,
\item for all $i$ with $1 \leq i < k$, we have that $S_{i+1}$ is a direct left-shift of $S_{i}$.
\end{itemize}
The definitions of \textit{direct right-shift} and \textit{right-shift} are obtained when we replace in the above definition $i-1$ with $i+1$ and $i+1$ with $i-1$.
\end{definition}

For example, coalition $\{1,3,5\}$ is a \emph{direct} left-shift of coalition $\{1,4,5\}$, and coalition $\{1,2,5\}$ is a left-shift of $\{1,4,5\}$.

The notions of left-shift and right-shift make sense for canonical linear games and canonical weighted voting games:
Because of the specific desirability order that holds in canonical linear games, a left-shift of a winning coalition is always winning in such a game, and a right-shift of a losing coalition is always losing in such a game. This allows us to represent a canonical linear game in one of the following two forms.
\begin{definition}[Roof/ceiling coalition/form]
Let $(N = \{1, \ldots, n\}, v)$ be a canonical linear game. Also, let $W_{\min}$ be $(N,v)$'s list of minimal winning coalitions and let $L_{\max}$ be $(N,v)$'s list of maximal losing coalitions. A minimal winning coalition $S \in W_{\min}$ is a \textit{roof coalition} whenever every right-shift of $S$ is losing. Let $W_{\mathsf{roof}}$ denote the set of all roof coalitions of $G$. The pair $(N, W_{\mathsf{roof}})$ is called the \textit{roof form} of $G$.
A maximal losing coalition $S \in L_{\max}$ is a \textit{ceiling coalition} whenever every left-shift of $S$ is winning. Let $W_{\mathsf{ceil}}$ denote the set of all ceiling coalitions of $G$. The pair $(N, W_{\mathsf{ceil}})$ is called the \textit{ceiling form} of $G$.
\end{definition}
The terminology (``roof'' and ``ceiling'') is taken from \cite{hopskipjump}, although they have also been called \textit{shift-minimal winning coalitions} and \emph{shift-maximal losing coalitions} \cite{simplegames}.

Because we will be discussing simple games from a computational perspective, we next introduce the concept of \textit{representation languages} for simple games.

\subsection{Representation languages}
We have introduced several ways of representing simple games: by the sets of winning and losing coalitions; by the sets of minimal winning coalitions and maximal losing coalitions; by the sets of roof coalitions and ceiling coalitions; and by their weighted representation.

We now make precise the notion of \emph{representing a simple game} by turning these methods representing simple games into languages: sets of strings, such that the strings are a description of a game according to one of the methods in the list above.

Prior to defining these languages, we need a way of describing coalitions. Coalitions can be described using their \textit{characteristic vector}.
\begin{definition}\label{characteristicvector}
Let $N = \{1, \ldots, n\}$ be a set of $n$ players. The \textit{characteristic vector} $\vec{\chi}(S)$ of a coalition $S \subseteq N$ is the vector $(\chi(1,S), \ldots, \chi(n,S))$ where
\begin{equation*}
\chi(i,S) = \begin{cases}1 \text{ if } i \in S \\ 0 \text{ otherwise. }\end{cases}
\end{equation*}
\end{definition}

A characteristic vector of a coalition in a game of $n$ players is described by $n$ bits.

\begin{definition}[Representation Languages]
We define the following \emph{representation languages} to represent simple games.
\begin{itemize}
\item $\mathcal{L}_{W}$. Strings $\ell \in L_{W}$ are lists of characteristic vectors of coalitions. The string $\ell$ represents a simple game $G$ if and only if the set of coalitions that $\ell$ describes is precisely the set of coalitions that are winning in $G$.
\item The languages $\mathcal{L}_{W, \min}$, $\mathcal{L}_{L}$, $\mathcal{L}_{L, \max}$, $\mathcal{L}_{\mathsf{roof}}$, and $\mathcal{L}_{\mathsf{ceil}}$ are defined in the obvious analogous fashion.
\item $\mathcal{L}_{\mathsf{weights}}$. Strings $\ell \in L_{\mathsf{weights}}$ are lists of numbers $\langle q, w_1, \ldots, w_n \rangle$. The string $\ell$ represents the simple game $G$ if and only if $G$ is a weighted voting game with weighted form $[q; w_1, \ldots, w_n]$.
\end{itemize}
\end{definition}

We will use the following convention: For a representation language $\mathcal{L}$, we denote with $\mathcal{L}(n)$ the set of strings in the language $\mathcal{L}$ that represent games of $n$ players. Also, let $\ell$ be a string from a representation language $\mathcal{L}(n)$. Then we write $G_{\ell}$ to denote the simple game on players $\{1,\ldots,n\}$ that is represented by $\ell$.

\begin{definition}
We say that a class of games $\mathcal{G}$ \textit{is defined by} a language $\mathcal{L}$ if and only if $\forall \ell \in \mathcal{L} : \exists G \in \mathcal{G} : G_{\ell} = G$ and vice versa $\forall G \in \mathcal{G} : \exists \ell \in \mathcal{L} : G_{\ell} = G$.
\end{definition}
Using the above definition, we see that
\begin{itemize}
\item $\mathcal{G}_{\mathsf{sim}}$ is defined by both $\mathcal{L}_{W}$ and $\mathcal{L}_{L}$;
\item $\mathcal{G}_{\mathsf{mon}}$ is defined by both $\mathcal{L}_{W, \min}$ and $\mathcal{L}_{L, \max}$;
\item $\mathcal{G}_{\mathsf{lin}}$ is defined by both $\mathcal{L}_{\mathsf{roof}}$ and $\mathcal{L}_{\mathsf{ceil}}$;
\item $\mathcal{G}_{\mathsf{wvg}}$ is defined by $\mathcal{L}_{\mathsf{weights}}$.
\end{itemize}

\subsection{Power indices}\label{powerindices}

\emph{Power indices} can be used to measure the amount of influence that a player has in a monotonic simple game. Power indices were originally introduced because it was observed that in weighted voting games, the weight of a player is not directly proportional to the influence he has in the game. This is easy to see through the following trivial example weighted voting game:
\begin{equation*}
[1000; 997, 1, 1, 1].
\end{equation*}
Here, each player is in only one winning coalition: the grand coalition. All players are required to be present in this coalition for it to be winning, and can therefore be said to have the same influence, despite the fact that there is a huge difference between the weights of the first vs.\ the other three players.

Many proposals have been put forward to answer the question of what constitutes a good definition of power in a voting game. These answers are in the form of \textit{power indices}, which are mathematical formulations for values that try to describe the `true' influence a player has in a weighted voting game. We refer the reader to \cite{powerwebsite} for an excellent WWW information resource on power indices.

Power indices try to measure a player's \textit{a priori} power in a voting game. That is, they attempt to objectively measure the influence a player has on the outcome of a voting game, without having any statistical information on which coalitions are likely to form due to the preferences of the players. To do this, we cannot avoid making certain assumptions, but we let these assumptions be as neutral as possible. For example, in the Banzhaf index we describe below, the assumption is that each coalition will form with equal probability.

While the need for power indices originally arose from studying weighted voting games, all of the power indices that have been devised up till now also make sense for (and are also well-defined for) simple games. So, for any simple coalitional game, we can use a power index as a measure of a player's a priori power in it.

In this paper we use the \emph{normalized Banzhaf index} as our power index of choice. This power index is used in the experiments discussed in Section \ref{sec:experiments}. However, we will see that for the theoretical part of our work, the particular choice of power index is irrelevant.

\begin{definition}[normalized Banzhaf index \& raw Banzhaf index]
For a monotonic simple game $(N = \{1, \ldots, n\}, v)$, the \textit{normalized Banzhaf index} of $(N,v)$ is defined as $\beta = (\beta_1, \ldots, \beta_n)$, where for $1 \leq i \leq n$,
\begin{equation*}
\beta_i = \frac{\beta_i'}{\sum_{j=1}^n \beta_j'},
\end{equation*}
and
\begin{equation}\label{rawbanzhaf}
\beta_i' = |\{S \subseteq N \setminus \{i\} : v(S) = 0 \wedge v(S \cup \{i\}) = 1\}|.
\end{equation}
Here, $\beta_i'$ is called the \textit{raw Banzhaf index of player $i$}.
\end{definition}
Note that the Banzhaf index of an $n$-player simple game is always a member of the unit simplex of $\mathbb{R}^n$, i.e., the set $\left\{x \in \mathbb{R}^n : \sum_{i=1}^n x_i = 1\right\}$.

The problem of computing power indices, and its associated computational complexity, has been widely studied (e.g., in \cite{matsui01npcompleteness,matsuisurvey,hemaspaandracomparison,bachrachnetworkflow,bachrachconnectivity,algababilbao,woeginger,bilbaofernandez,unoimprovement,Bachrach:2008:AAMASd,leechapproximation,leechsurvey,powerindexsurvey}). For a survey of complexity results, exact algorithms and approximation algorithms for computing power indices, see \cite{powerindexsurvey}. In general, computing power indices is a hard task, and the case of the normalized Banzhaf index is no exception: Computation of the raw Banzhaf index is known to be $\mathsf{\#P}$-complete \cite{prasadkelly1990}, and the fastest known exponential time algorihm for computing the Banzhaf index is due to Klinz and Woeginger \cite{woeginger}. It achieves a runtime in $O((\sqrt{2})^n \cdot n^2)$.

\section{The problem statement}\label{sec:problemstatement}
In this section, we will introduce the problem that we call the \textit{voting game design} problem:
the problem of finding a simple game that satisfies a given requirement (or set of requirements) as well as possible. We will focus on the problem of finding games in which the power index of the game is as close as possible to a given target power index.

We define a voting game design problem as an optimization problem where we are given three parameters $f$, $\mathcal{G}$, and $\mathcal{L}$.
In such a voting game design problem we must minimize some function $f : \mathcal{G} \rightarrow \mathbb{R}_{\geq 0}$, with $\mathcal{G}$ being some class of simple games. $\mathcal{L}$ is a representation language for $\mathcal{G}$. We require the game that we output to be in the language $\mathcal{L}$.

\begin{definition}[$(f, \mathcal{G}, \mathcal{L})$-voting game design ($(f, \mathcal{G}, \mathcal{L})$-VGD)]
Let $\mathcal{G}$ be a class of simple games, let $\mathcal{L}$ be a representation language for $\mathcal{G}$, and let $f : \mathcal{G} \rightarrow \mathbb{R}^+  \cup \{0\}$ be a function.
The \textit{$(f, \mathcal{G}, \mathcal{L})$-voting game design problem} (or \textit{$(f, \mathcal{G}, \mathcal{L})$-VGD}) is
the problem of finding an $\ell \in \mathcal{L}$ such that $G_{\ell} \in \mathcal{G}$ and $f(G_{\ell})$ is minimized.
\end{definition}
Hence, $f$ can be seen as a function indicating the \emph{error}, or the \emph{distance} from the game that we are ideally looking for.
By imposing restrictions on the choice of $f$, and by fixing $\mathcal{G}$ and $\mathcal{L}$, we can obtain various interesting optimization problems.
The cases that we will focus on will be those where $f$ is a function that returns the distance of a game's power index from a certain \emph{target} power index.

\begin{definition}[($g, \mathcal{G}, \mathcal{L}$)-power index voting game design (($g, \mathcal{G}, \mathcal{L}$)-PVGD)]\label{pvgdproblemdef}
Suppose $\mathcal{G}$ is a class of games, and $\mathcal{L}$ is a representation language for a class of games. Furthermore, suppose $g : \mathcal{G} \rightarrow \mathbb{R}^n$ is a function that returns a type of power index (e.g., the normalized Banzhaf index) for games in $\mathcal{G}$.
Then, the \textit{($g, \mathcal{G}, \mathcal{L}$)-power index voting game design problem} (or \textit{($g, \mathcal{G}, \mathcal{L}$)-PVGD}) is the
($f, \mathcal{G}, \mathcal{L}$)-VGD problem with $f$ restricted to those functions for which there exists a vector $(p_1, \ldots, p_n)$
such that for each $G \in \mathcal{G}$,
\begin{equation*}
f(G) = \sqrt{\sum_{i=1}^n (g(G)_i - p_i)^2}.
\end{equation*}
\end{definition}
In words, in a ($g, \mathcal{G}, \mathcal{L}$)-PVGD problem we must find a voting game in the class $\mathcal{G}$ that is as close as possible to a given target power index $(p_1, \ldots, p_n)$ according to power index function $g$ and error function $f$. In this paper we measure the error by means of the Euclidean distance in $\mathbb{R}^n$ between the power index of the game and the target power index.
We made this particular choice of $f$ because from intuition it seems like a reasonable error function. In principle, we could also choose $f$ differently. For example, we could take for $f$ any other norm on $\mathbb{R}^n$. For our purpose, the precise choice of $f$ does not really matter, as long as the error function is not hard to compute, given $g$.

We can analyze this problem for various power index functions, classes of games, and representation languages. So, an instance of such a problem is then represented by only a vector $(p_1, \ldots, p_n)$, representing a target power index.

We will focus in this paper on the problem ($\beta, \mathcal{G}_{\mathsf{wvg}}, \mathcal{L}_{\mathsf{weights}}$)-PVGD, i.e., the problems of finding a weighted voting game in weighted representation, that is as close as possible to a certain target (normalized) Banzhaf index.

\section{Voting game synthesis}\label{sec:vgsynth}
The method that we will propose for solving the power index voting game design problem involves transforming between different representations for classes of simple games. In this section we will give an overview of transforming representations of simple games into each other. We call these problems \textit{voting game synthesis problems}, inspired by the term \textit{threshold synthesis} used in \cite{hopskipjump} for finding a weight vector for a so-called \textit{threshold function}, to be defined later in this section.

In Section \ref{cardinalities}, we first find out what we can say about the cardinalities of various classes of voting games.
We state the synthesis problem formally in Section \ref{formalsynthesisproblem}. In Section \ref{solvingsynthesisproblems} we will look at how to solve it.

\subsection{On cardinalities of classes of simple games}\label{cardinalities}
In some variants of the voting game synthesis problem, we want to transform a simple game into a specific representation language that defines only a subclass of the class of games that is defined by the input-language.

It is interesting to know what fraction of a class of games is synthesizable in which language, i.e., we are interested in the cardinalities of all of these classes of simple games. This is an interesting question in its own right, but we also require it in order to analyze the algorithms for the voting game design problems that we will present in the next sections.

First we will discuss the number of monotonic simple games and linear games on $n$ players. After that, we will also look at the number of weighted voting games on $n$ players.

\subsubsection{The number of monotonic simple games and linear games}\label{cardinalities1}
Let us start off with the cardinality of the class of monotonic simple games of $n$ players: $\mathcal{G}_{\mathsf{mon}}(n)$. This class is defined by language  $\mathcal{L}_{W,\min}$, i.e., each game in this class can be described by a set of minimal winning coalitions (MWCs), and for each possible set of MWCs $W_{\min}$ there is a monotonic simple game $G$ such that the MWCs of $G$ are precisely $W_{\min}$. We see therefore that the number of monotonic simple games on $n$ players is equal to the number of families of MWCs on $n$ players. In a family of MWCs, there are no two coalitions $S$ and $S'$ such that $S'$ is a superset of $S$. In other words: all elements in a family of MWCs are pairwise incomparable with respect to $\subseteq$, or: A family of MWCs on the set of players $N = \{1, \ldots, n\}$ is an \emph{antichain} in the poset $(2^N, \subseteq)$. Hence, the number of antichains in this poset is equal to $|\mathcal{G}_{\mathsf{mon}}(n)|$. Counting the number of antichains in this poset is a famous
open problem in combinatorics, known as \textit{Dedekind's problem} and $|\mathcal{G}_{\mathsf{mon}}(n)|$ is therefore also referred to as the \textit{$n$th Dedekind number} $D_n$. Dedekind's problem was first stated in \cite{dedekindfirst}. To the best of our knowledge, exact values for $D_n$ are known only up to $n = 8$. We will return to the discussion of the Dedekind number in Section \ref{mongamedesign}, where we will also mention some known upper and lower bounds for it. For now, let us simply say that $D_n$ grows rather quickly in $n$: as $n$ gets larger, $D_n$ increases exponentially.

For linear games, we know only of the following lower bound on the number of canonical linear games. The prove that we give here is from \cite{hopskipjump}:
\begin{theorem}\label{lineargameslowerbound}
For large enough $n$,
\begin{equation*}
|\mathcal{G}_{\mathsf{clin}}(n)| \geq 2^{(\sqrt{\frac{2}{3}\pi}2^n) / (n \sqrt n)}.
\end{equation*}
\end{theorem}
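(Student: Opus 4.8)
The bound counts canonical linear games, and a natural way to produce many of them is to exhibit many distinct antichains in the poset $(2^N,\subseteq)$ that are each "invariant enough" under right-shifts to be realizable as the minimal winning coalitions of a canonical linear game. The cleanest route is to restrict attention to coalitions of a single fixed size $k$: any antichain consisting only of $k$-sets is automatically an antichain, and any such family is the set of minimal winning coalitions of *some* monotonic simple game. To make that game canonical linear, we need the family to be closed under left-shifts among the $k$-sets (equivalently, its complement among $k$-sets is closed under right-shifts), so that the induced desirability order is the total order $1\succeq 2\succeq\cdots\succeq n$. So the plan is: (i) fix $k=\lfloor n/2\rfloor$ to maximize $\binom{n}{k}$; (ii) identify a large sub-poset of the $k$-sets on which we are free to choose an arbitrary down-set (in the left-shift order) and have it yield a distinct canonical linear game; (iii) count the number of such down-sets, or more simply the number of elements in a large antichain of that sub-poset, and take $2$ to that power.

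**Key steps.** First I would set $m=\binom{n}{\lfloor n/2\rfloor}$ and recall the standard estimate $m \geq 2^n/\sqrt{n}$ up to constants, or more precisely $m \sim \sqrt{2/(\pi n)}\,2^n$ by Stirling; the exponent in the theorem, $(\sqrt{\tfrac{2}{3}\pi}\,2^n)/(n\sqrt n)$, should come out of combining this with the width of a suitable sub-poset. Second, I would consider the partial order on $\binom{N}{k}$ given by the left-shift relation (a $k$-set $S'$ is below $S$ if $S'$ is a left-shift of $S$); this is a graded poset, and its down-sets correspond bijectively to canonical linear games whose winning coalitions are exactly the up-sets of $k$-sets together with all $(>k)$-sets — one checks monotonicity and that the desirability order is total with the canonical orientation. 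Third, rather than count all down-sets (which would give $2$ to the power of the number of down-sets, far more than needed, but also harder to estimate cleanly), I would just pick a single antichain $A$ in the left-shift poset on $\binom{N}{k}$ of size roughly $m/n$ — this exists because the longest chain in the left-shift order has length $O(k(n-k)) = O(n^2)$, no: more carefully, the rank function (sum of elements of $S$, say) ranges over an interval of length $k(n-k)\le n^2/4$, so some rank level contains at least $m/(n^2/4+1)$ elements, and these form an antichain. Each subset of $A$ generates a distinct down-set (take its left-shift-closure), hence a distinct canonical linear game, giving $|\mathcal{G}_{\mathsf{clin}}(n)| \geq 2^{|A|} \geq 2^{\,4m/(n^2+4)}$, and plugging in $m \sim \sqrt{2/(\pi n)}\,2^n$ yields an exponent of order $2^n/(n^{2}\sqrt n)$ times a constant — I would need to be slightly more careful about the constant $\sqrt{\tfrac23\pi}$ to match the statement exactly, possibly by choosing the rank level more cleverly or using a sharper antichain (e.g. the middle rank level) rather than the crude pigeonhole bound.

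**Main obstacle.** The combinatorial content is routine; the delicate part is getting the *constant* in the exponent right. Specifically I would need (a) the precise asymptotics of $\binom{n}{\lfloor n/2\rfloor}$ and (b) a sharp enough lower bound on the size of the largest rank-level (or a largest antichain) in the left-shift poset on the middle-layer $k$-sets — the naive "divide by the number of ranks" step loses a polynomial factor, and matching $\sqrt{\tfrac{2}{3}\pi}$ exactly presumably requires either a more careful accounting of which rank is largest or a mild concentration argument about how the $k$-sets distribute across the $\Theta(n^2)$ ranks of the left-shift order. Since the excerpt attributes this proof to \cite{hopskipjump}, I would follow that source for the exact constant-chasing; conceptually, though, the entire argument is: middle layer has $\approx 2^n/\sqrt n$ sets, the left-shift order on it has $O(n^2)$ ranks, so a largest rank has $\Omega(2^n/n^{5/2})$ sets forming an antichain, and arbitrary subsets of an antichain give distinct games.
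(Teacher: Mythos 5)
There is a genuine gap, and it is quantitative rather than a matter of ``constant-chasing'': your restriction to a single cardinality layer cannot reach the stated exponent. The middle layer has only $\binom{n}{\lfloor n/2\rfloor}\sim\sqrt{2/(\pi n)}\,2^n$ sets, and the statistic you grade by (sum of elements of a $k$-set) is spread over an interval of length $\Theta(n^2)$ with standard deviation $\Theta(n^{3/2})$; hence even the largest rank level inside that layer has only $O(2^n/n^{2})$ sets, and your pigeonhole over $\Theta(n^2)$ ranks gives only $\Omega(2^n/n^{5/2})$. The theorem requires an antichain of size $\sim\sqrt{\tfrac{2}{3}\pi}\,2^n/(n\sqrt n)=\Theta(2^n/n^{3/2})$, so your construction falls short by a polynomial factor (at least $\sqrt n$, even with an optimal concentration argument), and no cleverer choice of rank level within the middle layer can repair this. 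The mechanism you use at the end --- every subset of an antichain yields a distinct canonical linear game --- is sound and is exactly what the paper uses; the failure is in where you look for the antichain.

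The paper avoids the loss by not fixing a coalition size. It works with the poset $(2^N,\preceq_{\mathsf{ssrs}})$ on \emph{all} coalitions, where $S\preceq_{\mathsf{ssrs}} S'$ iff $S$ is a superset of a right-shift of $S'$; antichains of this poset are in bijection with canonical linear games (they are the possible sets of roof coalitions), so one may lower-bound by $2^{\text{(size of one antichain)}}$. This poset is graded by $r(S)=\sum_{i\in S}(n-i+1)$, its level sizes are symmetric and unimodal (citing Stanley), so the largest level is the middle one, whose size is the middle coefficient of $(1+q)(1+q^2)\cdots(1+q^n)$; by Odlyzko--Richmond this is asymptotically $\sqrt{\tfrac{2}{3}\pi}\,2^n/(n\sqrt n)$, which is exactly the exponent in the statement. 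The crucial difference from your plan is that all $2^n$ subsets, not just the $\Theta(2^n/\sqrt n)$ middle-layer ones, are distributed over ranks of spread $\Theta(n^{3/2})$, which is what buys the extra $\sqrt n$ in the antichain size. If you want to salvage your write-up, replace the left-shift order on $\binom{N}{\lfloor n/2\rfloor}$ by this shift-superset order on $2^N$ and invoke the unimodality and the middle-coefficient asymptotics; the rest of your argument then goes through unchanged.
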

\begin{proof}
First observe that $|\mathcal{G}_{\mathsf{clin}}(n)|$ is equal to the number of antichains in the poset $(2^N, \preceq_{\mathsf{ssrs}})$, where $\preceq_{\mathsf{ssrs}}$ is defined as follows: for two coalitions $S \subseteq N$ and $S' \subseteq N$, we have $S \preceq_{\mathsf{ssrs}} S'$ if and only if $S$ is a superset of a right-shift of $S'$. It can be seen that $(2^N, \preceq_{\mathsf{ssrs}})$ is a graded poset, with the following rank function $r$:
\begin{eqnarray*}
r & : & 2^N \rightarrow \mathbb{N} \\
 & & S \mapsto \sum_{i \in S} n-i+1.
\end{eqnarray*}
A set of points of the same rank is an antichain in $(2^N, \preceq_{\mathsf{ssrs}})$. Let $A_k$ denote the set of points of rank $k$. $k$ is at most $\frac{n(n+1)}{2}$. For each coalition $S$ in $A_k$, its complement $N \setminus S$ is in $A_{n(n+1)/2 - k}$; therefore $|A_k| = |A_{n(n+1)/2 - k}|$. It is shown in \cite{stanley} that the sequence $(|A_1|, \ldots, |A_{n(n+1)/2}|)$ is unimodal, i.e., first non-increasing, then non-decreasing. By this fact and the fact that $|A_k| = |A_{n(n+1)/2 - k}|$, it must be the case that the largest antichain is $|A_{n(n+1)/4}|$. $|A_{n(n+1)/4}|$ is equal to the number of points $(x_1, \ldots, x_n)$ satisfying $x_1 + 2x_2 + \cdots + nx_n = \frac{n(n+1)}{4}$, and this number of points is equal to the middle coefficient of the polynomial $(1 + q)(1 + q^2)\cdots(1+q^n)$. It is shown in \cite{odlyzkorichmond} that this middle coefficient is asymptotically equal to
\begin{equation*}
\frac{\sqrt{\frac{2}{3}\pi}2^n}{n \sqrt n}
\end{equation*}
Since every subset of an antichain is also an antichain, there must be more than
\begin{equation*}
2^{(\sqrt{\frac{2}{3}\pi}2^n) / (n \sqrt n)}
\end{equation*}
antichains in $(2^N, \preceq_{\mathsf{ssrs}})$.
\end{proof}

\subsubsection{The number of weighted voting games}\label{cardinalities2}
To our knowledge, the existing game theory literature does not provide us with any general insights in the number of weighted voting games on $n$ players. Fortunately there is a closely related field of research, called \textit{threshold logic} (see for example \cite{murogathreshold}), that has some relevant results.
\begin{definition}[Boolean threshold function, realization, $\mathbf{LT}$]
Let $f$ be a boolean function on $n$ boolean variables. $f$ is a \textit{(boolean) threshold function} when there exists a weight vector of real numbers
$r = (r_0, r_1, \ldots r_n) \in \mathbb{R}^{n+1}$ such that $r_1x_1 + \cdots + r_nx_n \geq r_0$ if and only if $f(x_1, \ldots, x_n) = 1$.
We say that $r$ \textit{realizes} $f$. We denote the set of threshold functions of $n$ variables $\{x_1, \ldots, x_n\}$ by $\mathbf{LT}(n)$.\footnote{``LT'' stands for ``Linear Threshold function''.}
\end{definition}
Threshold functions resemble weighted voting games, except for that we talk about \textit{boolean variables} instead of \textit{players} now.
Also, an important difference between threshold functions and weighted voting games is that $r_0, r_1, \ldots, r_n$ are allowed to be negative for threshold functions, whereas $q, w_1, \ldots, w_n$, must be non-negative in weighted voting games.

\cite{zunicenumeration} gives an upper bound on the number of threshold functions of $n$ variables $|\mathbf{LT}(n)|$:
\begin{equation*}
|\mathbf{LT}(n)| \leq 2^{n^2-n+1}.
\end{equation*}
Also, the following asymptotic lower bound is known, as shown in \cite{zuevasymptotics}: For large enough $n$, we have
\begin{equation}\label{lowerboundthresholdfunctions}
|\mathbf{LT}(n)| \geq 2^{n^2(1-\frac{10}{\log n})}.
\end{equation}

From these bounds, we can deduce some easy upper and lower bounds for $|\mathcal{G}_{\mathsf{wvg}}|$.

First we observe the following property of the set of threshold functions on $n$ variables.
Let $\mathbf{LT}^+(n)$ be the set of \textit{non-negative threshold functions} of variables $\{x_1, \ldots, x_n\}$: threshold functions $f \in \mathbf{LT}(n)$ for which there exists a \textit{non-negative weight vector} $r$ that realizes $f$
It is then not hard to see that there is an obvious one-to-one correspondence between the games in $\mathcal{G}_{\mathsf{wvg}}(n)$ and the threshold functions in $\mathbf{LT}^+(n)$, so $|\mathcal{G}_{\mathsf{wvg}}(n)| = |\mathbf{LT}^+(n)|$.
An easy upper bound then follows:
\begin{corollary}
For all $n$, $|\mathcal{G}_{\mathsf{wvg}}(n)| \leq 2^{n^2-n+1}$.
\end{corollary}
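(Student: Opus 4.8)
The plan is to derive the bound as an immediate consequence of two facts already in hand: the equality $|\mathcal{G}_{\mathsf{wvg}}(n)| = |\mathbf{LT}^+(n)|$ established just above the corollary, and \v{Z}uni\'c's upper bound $|\mathbf{LT}(n)| \leq 2^{n^2-n+1}$. The only connective tissue needed is the trivial observation that every non-negative threshold function is in particular a threshold function, i.e., $\mathbf{LT}^+(n) \subseteq \mathbf{LT}(n)$: if $f \in \mathbf{LT}^+(n)$ then by definition there is a non-negative weight vector $r \in \mathbb{R}^{n+1}$ realizing $f$, and such an $r$ is \emph{a fortiori} a real weight vector realizing $f$, so $f \in \mathbf{LT}(n)$.

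Chaining these together, I would write: $|\mathcal{G}_{\mathsf{wvg}}(n)| = |\mathbf{LT}^+(n)| \leq |\mathbf{LT}(n)| \leq 2^{n^2-n+1}$, where the first equality is the one-to-one correspondence noted earlier (a weighted voting game $(N,v)$ with weighted form $[q; w_1, \ldots, w_n]$ corresponds to the Boolean function whose value on $\vec{\chi}(S)$ is $v(S)$, which is realized by the non-negative vector $(q, w_1, \ldots, w_n)$, and conversely), the middle inequality is the set inclusion above, and the last inequality is the cited bound from \cite{zunicenumeration}.

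There is essentially no obstacle here; the statement is purely a corollary in the literal sense. The only point that warrants a sentence of care is making sure the ``one-to-one correspondence'' is understood at the level of \emph{games} and \emph{functions} rather than \emph{representations} — a single weighted voting game has many weighted forms and a single threshold function has many realizing vectors, but each game determines a unique Boolean function and vice versa, so the cardinalities genuinely agree. Since this correspondence is already asserted in the text immediately preceding the corollary, I would simply invoke it rather than re-prove it, and keep the proof of the corollary to two or three lines.
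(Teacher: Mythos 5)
Your proof is correct and follows exactly the route the paper intends: the identity $|\mathcal{G}_{\mathsf{wvg}}(n)| = |\mathbf{LT}^+(n)|$ from the preceding paragraph, the trivial inclusion $\mathbf{LT}^+(n) \subseteq \mathbf{LT}(n)$, and the cited bound $|\mathbf{LT}(n)| \leq 2^{n^2-n+1}$ chained together. The paper leaves this corollary without an explicit proof precisely because the argument is this two-line chain, and your added remark about working at the level of games and Boolean functions (rather than representations) is a harmless and accurate clarification.
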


We will proceed by obtaining a lower bound on the number of weighted voting games.
\begin{corollary}\label{no-wvgs}
For large enough $n$, it holds that
\begin{equation*}
|\mathcal{G}_{\mathsf{wvg}}(n)| \leq 2^{n^2(1-\frac{10}{\log n}) - n - 1}
\end{equation*}
\end{corollary}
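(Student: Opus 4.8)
As worded the inequality cannot hold: the preceding sentence announces that we ``proceed by obtaining a lower bound,'' and the right-hand side $2^{n^2(1-\frac{10}{\log n}) - n - 1}$ is strictly smaller than the upper bound $2^{n^2-n+1}$ just established (the exponents differ by $-10n^2/\log n - 2 < 0$), so a ``$\leq$'' reading is at best vacuous and in fact false. I therefore read the claim as the lower bound $|\mathcal{G}_{\mathsf{wvg}}(n)| \geq 2^{n^2(1-\frac{10}{\log n}) - n - 1}$ and plan to prove that. The two ingredients are the already-stated identity $|\mathcal{G}_{\mathsf{wvg}}(n)| = |\mathbf{LT}^+(n)|$ and Zuev's lower bound $|\mathbf{LT}(n)| \geq 2^{n^2(1-\frac{10}{\log n})}$ of \eqref{lowerboundthresholdfunctions}; the work is to convert a bound on \emph{all} threshold functions into one on the \emph{non-negative} ones.

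The key step is to show $|\mathbf{LT}(n)| \leq 2^n \cdot |\mathbf{LT}^+(n)|$, i.e.\ that passing from all threshold functions to the non-negative ones costs only a factor $2^n$. I would set up an action of the group $(\mathbb{Z}/2)^n$ on boolean functions by complementation of variables: for $T \subseteq \{1, \ldots, n\}$, the group element indexed by $T$ sends a function $f$ to the function obtained by negating every input variable whose index lies in $T$. The essential point, verified directly from the definition of ``realizes,'' is that this action preserves $\mathbf{LT}(n)$: if $(r_0, r_1, \ldots, r_n)$ realizes $f$, then negating variable $j$ (substituting $1 - x_j$ for $x_j$) yields a function realized by the same data except that $r_j$ is replaced by $-r_j$ and the threshold $r_0$ by $r_0 - r_j$.

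Granting this, the counting is short. Each orbit of the action has size at most $|(\mathbb{Z}/2)^n| = 2^n$. Every orbit also meets $\mathbf{LT}^+(n)$: given any realization $r$ of $f$, flip exactly the variables $j$ with $r_j < 0$; by the substitution identity the result is realized with all variable-weights equal to $|r_j| \geq 0$, hence lies in $\mathbf{LT}^+(n)$ (in the degenerate case that the transformed threshold becomes negative, all variable-weights are non-negative and the function is identically true, which is still in $\mathbf{LT}^+(n)$). Selecting one non-negative representative from each orbit gives an injection from the set of orbits into $\mathbf{LT}^+(n)$, so the number of orbits is at most $|\mathbf{LT}^+(n)|$; summing orbit sizes yields $|\mathbf{LT}(n)| \leq 2^n \cdot |\mathbf{LT}^+(n)|$.

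Assembling the pieces, for large enough $n$,
\[
|\mathcal{G}_{\mathsf{wvg}}(n)| = |\mathbf{LT}^+(n)| \geq 2^{-n}\,|\mathbf{LT}(n)| \geq 2^{n^2(1-\frac{10}{\log n}) - n} \geq 2^{n^2(1-\frac{10}{\log n}) - n - 1},
\]
the last step being harmless slack. The one place demanding care — the main obstacle — is the realization identity under variable complementation, together with the check that the flipped function really lands in $\mathbf{LT}^+(n)$ even when the new threshold turns negative; once that is pinned down, the orbit-counting estimate is routine.
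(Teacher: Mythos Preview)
Your proposal is correct and follows the same high-level strategy as the paper: both arguments establish $|\mathbf{LT}(n)| \leq 2^{O(n)}\cdot|\mathbf{LT}^+(n)|$ by sign-flipping, then invoke Zuev's lower bound \eqref{lowerboundthresholdfunctions} and the identity $|\mathcal{G}_{\mathsf{wvg}}(n)| = |\mathbf{LT}^+(n)|$.

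The execution differs, and yours is the more careful one. The paper fixes, for each $f\in\mathbf{LT}^+(n)$, a non-negative realization $r$ and observes that negating subsets of the $n+1$ entries of $r$ produces at most $2^{n+1}$ threshold functions; it then asserts $|\mathbf{LT}^+(n)|\geq|\mathbf{LT}(n)|/2^{n+1}$. Strictly speaking this last step needs the claim that \emph{every} $g\in\mathbf{LT}(n)$ arises from the \emph{chosen} $r_f$ of some $f$ by such a negation, which is not immediate since the same $f$ has many non-negative realizations. Your group-action formulation sidesteps this: variable complementation is an operation on \emph{functions}, independent of the realization, so orbits are well-defined and each orbit provably meets $\mathbf{LT}^+(n)$. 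You also pick up a slightly sharper factor ($2^n$ versus $2^{n+1}$), which you then discard to match the stated exponent. Your handling of the case where the transformed threshold becomes negative (the function is identically true, hence realized by the all-zero vector) is exactly the check needed to make the orbit-representative argument airtight.
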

\begin{proof}
Let $f$ be a non-negative threshold function and let $r$ be a non-negative weight vector that realizes $f$.
There are $2^{n+1}$ possible ways to negate the elements of $r$, so there are at most
$2^{n+1} - 1$ threshold functions $f' \in \mathbf{LT}(n) \setminus \mathbf{LT}^+(n)$ such that $f'$ has a realization that is obtained by negating some of the elements of $r$. From this, it follows that $|\mathbf{LT}^+(n)| \geq \frac{|\mathbf{LT}(n)|}{2^{n+1}}$, and thus also $|\mathcal{G}_{\mathsf{wvg}}(n)| \geq |\frac{\mathbf{LT}(n)}{2^{n+1}}|$. Now by using (\ref{lowerboundthresholdfunctions}) we get $|\mathcal{G}_{\mathsf{wvg}}(n)| \geq \frac{2^{n^2(1-\frac{10}{\log n})}}{2^{n+1}} = 2^{n^2(1-\frac{10}{\log n}) - n - 1}$.
\end{proof}
We have obtained this lower bound on the number of weighted voting games by upper-bounding the factor, say $k$, by which the number of threshold functions is larger than the number of non-negative threshold functions. If we could find the value of $k$ exactly, or at least lower-bound $k$, then we would also be able to sharpen the upper bound on the number of weighted voting games.

Our next question is: what about the canonical case, $\mathcal{G}_{\mathsf{cwvg}}(n)$? $\mathcal{G}_{\mathsf{cwvg}}(n)$ is a subset of $\mathcal{G}_{\mathsf{wvg}}(n)$, and for each non-canonical weighted voting game there exists a permutation of the players that makes it a canonical one. Since there are $n!$ possible permutations, it must be that $|\mathcal{G}_{\mathsf{cwvg}}(n)| \geq \frac{|\mathcal{G}_{\mathsf{wvg}}(n)|}{n!}$, and thus we obtain that
\begin{equation}\label{no-cwvgs}
|\mathcal{G}_{\mathsf{cwvg}}(n)| \geq \frac{2^{n^2(1-\frac{10}{\log n}) - n - 1}}{n!}
\end{equation}
for large enough $n$.

\subsection{The synthesis problem for simple games}\label{formalsynthesisproblem}
In a \textit{voting game synthesis problem}, we are interested in transforming a given simple game from one representation language into another representation language.

\begin{definition}[Voting game synthesis (VGS) problem]
Let $\mathcal{L}_1$ and $\mathcal{L}_2$ be two representation languages for (possibly) distinct classes of simple games. Let $f_{\mathcal{L}_1 \rightarrow \mathcal{L}_2} : \mathcal{L}_1 \rightarrow \mathcal{L}_2 \cup \{\mathsf{no}\}$ be the function that, on input $\ell$,
\begin{itemize}
\item outputs $\mathsf{no}$ when $G_{\ell}$ is not in the class of games defined by $\mathcal{L}_2$,
\item otherwise maps a string $\ell \in \mathcal{L}_1$ to a string $\ell' \in \mathcal{L}_2$ such that $G_{\ell} = G_{\ell'}$ .
\end{itemize}
In the \textit{($\mathcal{L}_1, \mathcal{L}_2)$-voting game synthesis problem}, or \textit{($\mathcal{L}_1, \mathcal{L}_2)$-VGS problem}, we are given a string $\ell \in \mathcal{L}_1$ and we must compute $f_{\mathcal{L}_1 \rightarrow \mathcal{L}_2}(\ell)$.
\end{definition}

\subsection{Algorithms for voting game synthesis}\label{solvingsynthesisproblems}
In this section we will discuss algorithms and hardness results for various VGS problems.
In Sections \ref{weightedvgs}, \ref{roofvgs} and \ref{othervgs} we will consider respectively the problems of
\begin{itemize}
\item transforming games into weighted representation ($\mathcal{L}_{\mathsf{weights}}$);
\item transforming games into roof- or ceiling-representation ($\mathcal{L}_{\mathsf{roof}}, \mathcal{L}_{\mathsf{ceil}}$);
\item transforming games into the languages $\mathcal{L}_W, \mathcal{L}_{W,\min}, \mathcal{L}_L, \mathcal{L}_{L,\max}$.
\end{itemize}

\subsubsection{Synthesizing weighted representations}\label{weightedvgs}
For our approach to solving the power index voting game design problem for weighted voting games, which we will present in Section \ref{sec:PVGDsolving}, it is of central importance that the problem ($\mathcal{L}_{W,\min}, \mathcal{L}_{\mathsf{weights}}$)-VGS has a polynomial time algorithm. This is a non-trivial result and was first stated in \cite{hopskipjump} by Peled and Simeone. In \cite{hopskipjump}, the problem is stated in terms of \emph{set-covering problems}. Because this algorithm is central to our approach for solving the PVGD-problem, we will here restate the algorithm in terms of simple games, and we will give a proof of its correctness and polynomial time complexity.

In order to state the algorithm, we first introduce a new total order on the set of coalitions $2^N$ of a set of players $N = \{1, \ldots, n\}$.
\begin{definition}[Positional representation]
Let $S \subseteq N = \{1, \ldots, n\}$ be a coalition.
The \textit{$i$th position} $p(i,S)$ of $S$ is defined to be the player $a$ in $S$ such that $|\{1, \ldots, a\} \cap N| = i$.
The \textit{positional representation} of $S$, $\mathsf{pr}(S)$, is defined as the $n$-dimensional vector $(p'(1,S), \ldots, p'(n,S))$ where
\begin{equation*}
p'(i,S) = \begin{cases} 0 \text{ if } |S| < i, \\ p(i,S) \text{ otherwise. } \end{cases}
\end{equation*}
for all $i$ with $1 \leq i \leq n$.
\end{definition}
As an example: if we have $N = \{1, \ldots, 5\}$ and $S = \{1,4,5\}$, then $\mathsf{pr}(S) = (1,4,5,0,0)$.
\begin{definition}[PR-lexi-order]
The \textit{PR-lexi-order} is the total order $(2^N,\preceq_{\mathsf{pr}})$, where for two coalitions $S \subseteq N$ and $S' \subseteq N$:
$S \preceq_{\mathsf{pr}} S'$ if and only if $\mathsf{pr}(S)$ \textit{lexicographically precedes} $\mathsf{pr}(S')$.
A vector $\vec{v}$ lexicographically precedes another vector $\vec{v'}$ when there exists a $i$ such that $v_i < v_i'$ and for all $j < i$ it holds that $v_i = v_j$.
\end{definition}
For example, we see that for $N = \{1,\ldots,5\}$, we have $\{1,2,3\} \preceq_{\mathsf{pr}} \{1,3,5\}$. The least element of $(2^N,\preceq_{\mathsf{pr}})$ is $\varnothing$ and the greatest element of $(2^N,\preceq_{\mathsf{pr}})$ is $N$.

Next, we introduce some operations that we can apply to coalitions.
For this, the reader should recall definitions \ref{leftshiftrightshift} and \ref{characteristicvector}.
\begin{definition}[fill-up, bottom right-shift, truncation, immediate successor]\label{coalitionoperations}
Let $N$ be the set of players $\{1,\ldots,n\}$ and let $S \subseteq N$ be a coalition. The functions $a$ and $b$ are defined as follows.
\begin{itemize}
\item $b(S)$ is the largest index $j$ such that $\chi(j,S) = 1$.
\item $a(S)$ is the largest index $j$ such that $\chi(j,S) = 0$ and $\chi(j+1,S) = 1$ (if such a $j$ does not exist, then $a(S) = 0$).
\end{itemize}

Now we can define the following operations on $S$:
\begin{itemize}
\item The \textit{fill-up} of $S$: $\mathsf{fill}(S) = S \cup \{b(S)+1\}$ (undefined if $S = N$).
\item The \textit{bottom right-shift} of $S$: $\mathsf{brs}(S) = S \cup \{b(S)+1\} \setminus \{b(S)\}$ (undefined if $b(S) = n$).
\item The \textit{truncation} of $S$: $\mathsf{trunc}(S) = S \setminus \{a(S)+1, \ldots, n\}$.
\item The \textit{immediate successor} of $S$:
\begin{equation*}
\mathsf{succ}(S) = \begin{cases} \mathsf{fill}(S) \text{ if } n \not\in S \text{ , } \\ \mathsf{brs}(S \setminus \{n\}) \text{ if } n \in S \text{ . } \end{cases}
\end{equation*}
\end{itemize}
\end{definition}
The immediate successor operation is named as such because it denotes the successor of $S$ in the total order $(2^N,\preceq_{\mathsf{pr}})$.

One last concept we need is that of a \textit{shelter} coalition.
\begin{definition}[Shelter]
A \textit{shelter} is a minimal winning coalition $S$ such that $\mathsf{brs}(S)$ is losing or undefined.
\end{definition}
Note that the set of roof coalitions of a canonical linear game is a subset of the set of shelter coalitions of that game.

\paragraph{The Hop-Skip-and-Jump algorithm}\label{actualhopskipjumpalgorithm}
We are now ready to state the algorithm.
The input to the algorithm is a string $\ell$ in $\mathcal{L}_{W,\min}$, i.e., the list of characteristic vectors describing the set of minimal winning coalitions $W_{\min}$.
The four main steps of the algorithm are:
\begin{enumerate}
\item Check whether $G_\ell$ is a linear game. If not, then stop. When it turns out that the game is linear, find a permutation of the players that turns the game into a canonical linear game. In the remaining steps, we assume that $G_{\ell}$ is a canonical linear game.
\item Generate a list of shelters $\mathcal{S}$, sorted according to the PR-lexi-order.
\item Use $\mathcal{S}$ as input for the \textit{Hop-Skip-and-Jump} algorithm. The Hop-Skip-and-Jump algorithm will give as output the set of all maximal losing coalitions $L_{\max}$. This step, is the most non-trivial part, and we will explain it in detail below.
\item Use $W_{\min}$ and $L_{\max}$ to generate the following system of linear inequalities, and solve it for any choice of $q$ in order to find the weights $w_1,\ldots,w_n$:
\begin{equation}\label{systemofinequalities}
\begin{split}
w_1 \chi(1,S) + \cdots + w_n \chi(n,S) \geq q , \forall S \in W_{\min} \\
w_1 \chi(1,S) + \cdots + w_n \chi(n,S) < q , \forall S \in L_{\max}
\end{split}
\end{equation}
If this system of linear inequalities has no solutions, then $G_\ell$ is not weighted; and otherwise the weights that have been found are the weights of the players, and $q$ is the quota: $[q; w_1, \ldots, w_n]$ is a weighted form of the weighted voting game.
\end{enumerate}

The first step of the algorithm is easy if we use an algorithm by Aziz, given in \cite{harisazizinfluence}. This algorithm decides whether a monotonic simple game represented as a listing of minimal winning coalitions is a linear game, and if so it outputs a strict desirability order\footnote{With this, we mean that the algorithm outputs a list $\vec{P} = (P_1, \ldots, P_j)$ such that $\{P_1, \ldots, P_j\}$ is a partition of $N$, where the players of a set in this partition are all equally desirable, and for all $i$ and $j$ with $i > j$ we have that any player in $P_i$ is strictly more desirable than any player in $P_j$.}. From the strict desirability order, the required permutation directly follows.

The generation of the sorted list of shelters can be done in polynomial-time: We can easily check for each minimal winning coalition whether its bottom right-shift is losing.

Linear programs are solvable in a time that is polynomial in the size of the linear program, by Karmarkar's algorithm \cite{karmarkar} for example.
For the linear program of the fourth part of the algorithm we will have to show that its size is bounded by a polynomial in $n$ and the number of minimal winning coalitions, i.e., we will have to show that there are only polynomially many more maximal losing coalitions than that there are minimal winning coalitions. This follows from the fact that the Hop-Skip-and-Jump algorithm (see below) runs in polynomial time and hence can output only a polynomial number of coalitions. Lastly, the fact that we can choose any $q \in \mathbb{R}_{> 0}$ follows from Theorem \ref{wvginvmult}.

The hard part that now remains is part three of the algorithm: outputting the list of maximal losing coalitions, given a sorted list of shelter coalitions. This is what the Hop-Skip-and-Jump-algorithm does. We will now state this algorithm, prove it correct and show that the runtime is bounded by a polynomial in the number of players $n$ and the number of shelter coalitions $t$. From this polynomial runtime it then also follows that $|L_{\max}|$ is polynomially bounded in $|W_{\min}|$.

The pseudocode for the Hop-Skip-and-Jump algorithm is given in Algorithm \ref{alg:hopskipjump}.
The basic idea is to consider all coalitions in the order induced by the PR-lexi-order, and output those coalitions that are maximal losing coalitions. During this process, we will be able to skip huge intervals of coalitions in order to achieve a polynomial run-time.

\begin{algorithm}
\caption[The Hop-Skip-and-Jump algorithm.]{The Hop-Skip-and-Jump algorithm. A polynomial-time algorithm that outputs the set of maximal losing coalitions of a monotonic simple game $G$ on players $N = \{1,\ldots,n\}$, given the sorted list of shelters of $G$ as input. An assumption we make in this algorithm is that the empty coalition does not occur in the list of shelters. If it does, it becomes a trivial task to output the list of maximal losing coalitions, so this is a safe assumption.}
\label{alg:hopskipjump}
\begin{algorithmic}[1]
\STATE $\mathsf{nextshelter} := $ first shelter on the list. \COMMENT{Output the coalition $N$ and stop if the list is empty.}
\STATE $\mathsf{currentcoalition} := \varnothing$ \COMMENT{Start with the least coalition, according to the PR-lexi-order.}
\LOOP
\WHILE{$\mathsf{currentcoalition} \not= \mathsf{nextshelter} \setminus \{b(\mathsf{nextshelter})\}$}
\IF{$n \not\in \mathsf{currentcoalition}$}
\STATE $\mathsf{currentcoalition} := \mathsf{fill}(\mathsf{currentcoalition})$
\ELSE
\STATE \textbf{output} $\mathsf{currentcoalition}$
\STATE $\mathsf{currentcoalition} := \mathsf{brs}(\mathsf{trunc}(\mathsf{currentcoalition}))$ \COMMENT{Stop if undefined.}
\ENDIF
\ENDWHILE
\IF{$n \not\in \mathsf{nextshelter}$}
\STATE $\mathsf{currentcoalition} := \mathsf{brs}(\mathsf{nextshelter})$
\ELSE
\STATE \textbf{output} $\mathsf{currentcoalition}$
\STATE $\mathsf{currentcoalition} := \mathsf{succ}(\mathsf{nextshelter})$ \COMMENT{Stop if $\mathsf{nextshelter} = \{n\}$.}
\ENDIF
\STATE $\mathsf{nextshelter} := $ next shelter on the list.
\ENDLOOP
\end{algorithmic}
\end{algorithm}

We will now proceed by giving a correctness-proof of this algorithm.
\begin{theorem}\label{hopskipjumponlymlcs}
Algorithm \ref{alg:hopskipjump} outputs only maximal losing coalitions.
\end{theorem}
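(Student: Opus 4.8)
The plan is to track the invariant of the algorithm: immediately before each execution of the outer \texttt{LOOP}, the variable $\mathsf{currentcoalition}$ is the PR-lexi-least coalition that is (a) PR-lexi-greater than all shelters processed so far and (b) losing. Then I would argue that every coalition that is explicitly \textbf{output} is maximal losing. There are two output statements inside the loop: the one inside the inner \texttt{WHILE} (reached when $n \in \mathsf{currentcoalition}$), and the one after the \texttt{WHILE} (reached when $n \in \mathsf{nextshelter}$). Both must be shown to print only maximal losing coalitions.

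First I would set up the relevant facts about the PR-lexi-order and the coalition operations from Definition~\ref{coalitionoperations}. The key structural observation is that for a coalition $S$ with $n \in S$, the coalitions strictly between $S$ and $\mathsf{succ}(S)$ in the PR-lexi-order are exactly the (strict) supersets of $\mathsf{trunc}(S)$ that are $\preceq_{\mathsf{pr}}$-below $\mathsf{brs}(\mathsf{trunc}(S))$ — in other words, $\mathsf{brs}(\mathsf{trunc}(S))$ is the "next" coalition once we have exhausted the fill-ups above $S$. Using monotonicity of the game: when $n \in \mathsf{currentcoalition}$ and $\mathsf{currentcoalition}$ is losing, then $\mathsf{currentcoalition} \cup \{j\}$ is either equal to $\mathsf{currentcoalition}$ or equals $\mathsf{fill}$ applied to some coalition below it; more to the point, $\mathsf{currentcoalition}$ being losing with $n$ already in it means the only way to add a player is to add a player with a smaller index, and I need that all such single-player additions are winning. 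This is where the shelter structure and the canonical-linear-game property enter: because we only \textbf{output} $\mathsf{currentcoalition}$ at a point where the loop has "caught up" to the position just below a shelter (the \texttt{WHILE} guard is $\mathsf{currentcoalition} \neq \mathsf{nextshelter} \setminus \{b(\mathsf{nextshelter})\}$), a left-shift / superset argument shows the candidate is dominated in the desirability order by something winning, hence adding any missing player makes it winning.

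The cleanest way to organize the argument, I think, is: (1) show that whenever $\mathsf{currentcoalition}$ is output, it is losing — this follows from the invariant, since we only ever move to $\mathsf{fill}$, $\mathsf{brs}(\mathsf{trunc}(\cdot))$, $\mathsf{brs}(\mathsf{nextshelter})$, or $\mathsf{succ}(\mathsf{nextshelter})$ of coalitions that are themselves losing or are shelters' predecessors, and one checks case by case that none of these overshoots a winning coalition; (2) show that whenever $\mathsf{currentcoalition}$ is output, every coalition covering it (i.e., $\mathsf{currentcoalition} \cup \{i\}$ for $i \notin \mathsf{currentcoalition}$) is winning. For (2), the two output sites correspond respectively to "$\mathsf{currentcoalition}$ has $n$ in it and sits strictly below the shelter's predecessor" and to "$\mathsf{currentcoalition}$ equals the shelter's predecessor $\mathsf{nextshelter}\setminus\{b(\mathsf{nextshelter})\}$ and $n\in\mathsf{nextshelter}$". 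In the second case $\mathsf{currentcoalition} \cup \{b(\mathsf{nextshelter})\} = \mathsf{nextshelter}$ is winning by definition of shelter, and any other single addition produces a left-shift of $\mathsf{nextshelter}$ or a superset of a minimal winning coalition, hence winning by canonicity. In the first case I use that $\mathsf{currentcoalition}$ is a $\mathsf{fill}$-iterate (or $\mathsf{brs}\circ\mathsf{trunc}$-iterate) reached without ever hitting a shelter, so adding a small-index player left-shifts it past a shelter into winning territory.

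The main obstacle I anticipate is case (2) at the first output site — controlling precisely what $\mathsf{brs}(\mathsf{trunc}(\cdot))$ does and verifying that the coalitions skipped by the jump are all winning supersets, so that the loop is justified in passing over them without checking. This requires a careful lemma that, for a losing coalition $S$ with $n\in S$ and $S$ strictly below the next shelter's predecessor, every coalition $T$ with $\mathsf{trunc}(S)\subseteq T$ and $S\preceq_{\mathsf{pr}} T\preceq_{\mathsf{pr}}\mathsf{brs}(\mathsf{trunc}(S))$ other than $S$ itself is winning, together with the dual fact that adding any single player to $S$ yields such a $T$. Everything else is bookkeeping with the PR-lexi-order and monotonicity. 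I would present the argument as an induction on the number of outer-loop iterations, carrying the invariant described in the first paragraph, and I would isolate the two "skip is safe" facts as separate numbered claims before invoking them.
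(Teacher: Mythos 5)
Your overall decomposition is the same as the paper's: an invariant that $\mathsf{currentcoalition}$ is losing at every output point (proved by stepping through the operations $\mathsf{fill}$, $\mathsf{brs}\circ\mathsf{trunc}$, $\mathsf{brs}(\mathsf{nextshelter})$, $\mathsf{succ}(\mathsf{nextshelter})$, with the while-guard handling the fill-up case), followed by a site-by-site argument that every single-player addition to an output coalition is winning. Your treatment of the output after the while loop (where $\mathsf{currentcoalition} = \mathsf{nextshelter}\setminus\{b(\mathsf{nextshelter})\}$ and $n\in\mathsf{nextshelter}$) is essentially the paper's argument; you omit only the trivial line-1 output for an empty shelter list.

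The gap is in the key lemma you propose for the output inside the while loop, which is both false and aimed at the wrong target. First, it is not true that every $T\neq S$ with $\mathsf{trunc}(S)\subseteq T$ and $S\preceq_{\mathsf{pr}} T\preceq_{\mathsf{pr}}\mathsf{brs}(\mathsf{trunc}(S))$ is winning: proper subsets of $S$ lie in this interval. For example, with $n=5$ and $S=\{1,3,4,5\}$ one has $\mathsf{trunc}(S)=\{1\}$ and $\mathsf{brs}(\mathsf{trunc}(S))=\{2\}$, and $T=\{1,5\}\subset S$ satisfies $S\preceq_{\mathsf{pr}} T\preceq_{\mathsf{pr}}\{2\}$; if $S$ is losing then $T$ is losing by monotonicity. (Such interval coalitions matter for the companion theorem that \emph{all} maximal losing coalitions are output, and there the correct claim is that the skipped coalitions are not \emph{maximal} losing, not that they are winning.) Second, the ``dual fact'' fails: at this output site $S$ contains the entire suffix $\{a(S)+1,\ldots,n\}$, so every missing player $i$ has $i\leq a(S)$, and then $S\cup\{i\}$ \emph{precedes} $S$ in the PR-lexi order and does not lie in $[S,\mathsf{brs}(\mathsf{trunc}(S))]$ at all --- yet these are exactly the additions whose winningness you must establish. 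The ingredient your plan lacks is the one the paper's proof uses: keep track of how $\mathsf{currentcoalition}$ was generated from the most recently processed shelter $s$ --- it is $N$ if no shelter has yet been processed, and otherwise it is built from $\mathsf{brs}(s)$ or $\mathsf{succ}(s)$ by fill-ups (and the intermediate truncation/right-shift steps), so at output time it contains a full suffix of the players together with the low-index part of $s$; adding any missing player then produces a superset of a left-shift of $s$ (or of $s$ itself), hence a winning coalition by canonicity. Your sketch gestures at this (``adding a small-index player left-shifts it past a shelter''), but the lemma you actually propose to formalize it does not express it, so the step you yourself flag as the main obstacle would fail as planned.
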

\begin{proof}
There are three places at which Algorithm \ref{alg:hopskipjump} outputs coalitions: line 1, 8 and line 15.

At line 1, a coalition is only output when the list of shelters is empty. When this list is empty, it means there are no winning coalitions, so $N$ is the only maximal losing coalition.

At line 15 we see that $\mathsf{currentcoalition} \subset \mathsf{nextshelter}$, and $\mathsf{nextshelter}$ is a minimal winning coalition, so $\mathsf{currentcoalition}$ must be losing. Also, at line 15, $n \in \mathsf{nextshelter}$. This means that any superset of $\mathsf{currentcoalition}$ is a superset of a leftshift of $\mathsf{nextshelter}$, and therefore winning. So we conclude that $\mathsf{currentcoalition}$ is a maximal losing coalition. This establishes that at line 15, all coalitions output are maximal losing coalitions.

Now we need to show the same for line 8. For this, we first need to prove the following invariant.
\begin{lemma}\label{hopskipjumplemma}
When running Algorithm \ref{alg:hopskipjump}, directly after executing line 2, line 18, and each iteration of the while-loop of line 4, $\mathsf{currentcoalition}$ is a losing coalition.
\end{lemma}
\begin{proof}
We prove all three cases separately.
\begin{itemize}
\item \textit{Directly after executing line 2}, $\mathsf{currentcoalition}$ is the empty coalition and thus losing by assumption.
\item \textit{Directly after executing line 18}, we have two subcases:
\begin{description}
\item[Case 1:] \textit{After the last time the execution of the algorithm passed line 11, lines 12 and 13 were executed while lines 14--16 were skipped.} In this case, $\mathsf{currentcoalition}$ is a bottom right-shift of a shelter, so $\mathsf{currentcoalition}$ is losing by the definition of a shelter.
\item[Case 2:] \textit{After the last time the execution of the algorithm passed line 11, lines 14--16 were executed while lines 12 and 13 were skipped.} In this case, $\mathsf{currentcoalition}$ is a direct successor of a shelter $s$ containing player $n$, by definition of the direct successor function, $\mathsf{currentcoalition}$ is a subset of $s$ and hence losing.
\end{description}
\item \textit{Directly after each iteration of the while-loop of line 4.} We can use induction for this final case. By the preceding two cases in this list, that we proved, we can assume that $\mathsf{currentcoalition}$ is losing when the while-loop is entered. It suffices now to show that $\mathsf{currentcoalition}$ is losing after a single repetition of the while-loop. We divide the proof up again, in two cases:
\begin{description}
\item[Case 1:] \textit{During the execution of the while-loop, lines 5 and 6 were executed while lines 7--9 were skipped.} Then $\mathsf{currentcoalition}$ is a fill-up of a losing coalition, say $l$. Let $i$ be the agent that was added by the fill-up, i.e., $\mathsf{currentcoalition} = l \cup \{i\}$. Suppose for contradiction that $\mathsf{currentcoalition}$ is winning; then $i \in \mathsf{nextshelter}$ and $i-1 \in \mathsf{nextshelter}$. It must also be true that $l \subseteq \mathsf{nextshelter}$ because otherwise $l$ is a right-shift of $\mathsf{nextshelter}$ and therefore winning (the induction hypothesis states that $l$ is losing). Therefore $l = \mathsf{nextshelter} \setminus \{b(\mathsf{nextshelter})\}$. But then execution would have left the loop because of line 4. Contradiction.
\item[Case 2:] \textit{In the execution of the while-loop, lines 7--9 were executed while lines 5 and 6 were skipped.}
$\mathsf{currentcoalition}$ is a bottom right-shift of a truncation of a losing coalition. A truncation of a losing coalition is losing, and a bottom right-shift of a losing coalition is losing, so $\mathsf{currentcoalition}$ is losing.
\end{description}
\end{itemize}
\end{proof}
From the lemma above, it follows that at line 8, $\mathsf{currentcoalition}$ is losing. To show that it is also maximal, we divide the proof up in three cases:
\begin{description}
\item[Case 1:] \textit{The execution of the algorithm has never passed line 11.} In this case $\mathsf{currentcoalition}$ at line 8 is obtained by a series of successive fill-ups starting from the empty coalition, and $n \in \mathsf{currentcoalition}$. This means that $\mathsf{currentcoalition} = N$, so $\mathsf{currentcoalition}$ is maximal.
\item[Case 2:] \textit{The execution of the algorithm did pass line 11 at least once, and the last time that execution has done so lines 12 and 13 were executed while lines 14--16 were skipped.} In this case we have that at line 8, $\mathsf{currentcoalition}$ is obtained by a series of fill-ups of a bottom right-shift of a shelter-coalition $s$. It follows that adding any player to $\mathsf{currentcoalition}$ will turn $\mathsf{currentcoalition}$ into a winning coalition, because $\mathsf{currentcoalition}$ would then become a superset of a left-shift of $s$. So $\mathsf{currentcoalition}$ is maximal.
\item[Case 3:] \textit{The execution of the algorithm did pass line 11 at least once, and the last time that execution has done so, lines 14--16 were executed while lines 12 and 13 were skipped.} In this case we have at line 8 that $\mathsf{currentcoalition}$ is the successor of a shelter $s$ that has player $n$ in it. By the definition of the successor function we get that adding any player to $\mathsf{currentcoalition}$ would make it a superset of a left-shift of $s$, and thus winning. So $\mathsf{currentcoalition}$ is maximal.
\end{description}
\end{proof}

\begin{theorem}
Algorithm \ref{alg:hopskipjump} outputs all maximal losing coalitions.
\end{theorem}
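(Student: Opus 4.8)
The natural strategy is to show that the algorithm enumerates the coalitions in PR-lexi-order without ever skipping a maximal losing coalition. The key claim is an invariant that pins down what \textsf{currentcoalition} is at the top of the outer loop (line 3): I would argue that after the $k$-th execution of line 18, \textsf{currentcoalition} is the \emph{first} (in PR-lexi-order) losing coalition $S$ such that $S$ is not a subset of (a left-shift of) any of the first $k$ shelters on the list, equivalently, the smallest losing coalition not yet ``accounted for''. Together with the ordering of the shelter list, this makes the inner \texttt{while}-loop of line 4 sweep exactly through a contiguous block of coalitions that ends just before $\mathsf{nextshelter}\setminus\{b(\mathsf{nextshelter})\}$, so any maximal losing coalition in that block is output at line 8, and any maximal losing coalition that is a ``bottom right-shift / successor'' boundary case is output at line 15.

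First I would set up the combinatorial facts about the PR-lexi-order and the operations $\mathsf{fill}$, $\mathsf{brs}$, $\mathsf{trunc}$, $\mathsf{succ}$: namely that $\mathsf{succ}(S)$ is the immediate successor of $S$; that from a losing coalition $S$ with $n\notin S$ the unique smallest coalition larger than $S$ which could still be maximal losing is reached by repeated $\mathsf{fill}$ until $n$ enters; that once $n\in S$ and $S$ is a maximal losing coalition, $\mathsf{brs}(\mathsf{trunc}(S))$ is the smallest coalition in PR-lexi-order strictly greater than $S$ that is not forced to be winning by being a superset of a left-shift of $S$; and the dual statement that every coalition strictly between a maximal losing coalition $S$ (with $n\in S$) and $\mathsf{brs}(\mathsf{trunc}(S))$ is winning. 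These are the workhorses: they guarantee the algorithm ``jumps over'' only winning coalitions, and so cannot miss a maximal losing coalition.

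Next I would combine these with the companion result just proved (Theorem~\ref{hopskipjumponlymlcs}) that every output is a maximal losing coalition, and with Lemma~\ref{hopskipjumplemma} (that \textsf{currentcoalition} is losing at the relevant checkpoints). The induction is on the number of outer-loop iterations. In the inductive step I would consider a maximal losing coalition $S^\ast$ and locate the first shelter $s$ on the sorted list such that $S^\ast \prec_{\mathsf{pr}} s$ with $S^\ast \subseteq s$ impossible for earlier shelters (using that $W_{\min}$ is sorted and that a shelter's bottom-right-shift is losing); then I would show that during the loop iteration handling $\mathsf{nextshelter}=s$ the variable \textsf{currentcoalition} must pass through $S^\ast$, so $S^\ast$ is output either inside the \texttt{while}-loop at line 8 (when $n\in S^\ast$ and we are still below $s\setminus\{b(s)\}$) or at line 15 (the boundary case where $S^\ast = s\setminus\{b(s)\}$ with $n$ in it, or the analogous successor case). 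Handling the case $S^\ast = N$ separately is immediate from line 1 / line 8 with an empty shelter list.

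The main obstacle I anticipate is the bookkeeping in the inductive step: precisely characterizing, for a given maximal losing coalition $S^\ast$, \emph{which} shelter $s$ ``owns'' it and proving that \textsf{currentcoalition} indeed reaches $S^\ast$ rather than jumping past it. This requires simultaneously using monotonicity, the shift-structure of a canonical linear game (a left-shift of a winning coalition is winning, a right-shift of a losing coalition is losing), the definition of a shelter, and the exact semantics of $\mathsf{trunc}$ and $\mathsf{brs}$ to rule out that the jump from line 9 or from line 13/16 leapfrogs $S^\ast$; the delicate point is that $S^\ast$, being maximal losing, contains $n$ exactly when it is of the ``dangerous'' form, and one must check that the $\mathsf{brs}(\mathsf{trunc}(\cdot))$ jump lands on or before $S^\ast$ in that case. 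Once that contiguity/monotonicity claim is nailed down, completeness follows by contradiction: if some maximal losing $S^\ast$ were never output, it would have to lie strictly inside a jumped interval, contradicting that all jumped-over coalitions are winning.
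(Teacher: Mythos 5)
Your overall architecture (sweep the PR-lexi-order, argue no maximal losing coalition can lie in a skipped interval, and combine with Theorem~\ref{hopskipjumponlymlcs} and Lemma~\ref{hopskipjumplemma}) is the right one, but the workhorse fact you plan to hang it on is false. You claim that every coalition strictly between a maximal losing coalition $S$ with $n\in S$ and $\mathsf{brs}(\mathsf{trunc}(S))$ is winning (a superset of a left-shift of $S$), and consequently that ``all jumped-over coalitions are winning.'' They need not be. Take $n=5$ and the canonical weighted voting game $[7;4,3,3,1,1]$. Its shelters are $\{1,3\}$ and $\{2,3,5\}$, and a trace of Algorithm~\ref{alg:hopskipjump} outputs the maximal losing coalition $S=\{1,4,5\}$ at line~8 and then jumps at line~9 to $\mathsf{brs}(\mathsf{trunc}(S))=\mathsf{brs}(\{1\})=\{2\}$. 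The coalition $\{1,5\}$ satisfies $\{1,4,5\}\prec_{\mathsf{pr}}\{1,5\}\prec_{\mathsf{pr}}\{2\}$, so it is jumped over, yet it is \emph{losing} (weight $5<7$; it is a proper subset of $S$), and it is not a superset of any left-shift of $S$ since it has too few players. So the concluding contradiction in your plan (``a missed maximal losing coalition would lie in a jumped interval, but all jumped-over coalitions are winning'') does not go through as stated, and the intermediate claim about $\mathsf{brs}(\mathsf{trunc}(S))$ being the first coalition not forced winning by $S$ fails for the same reason.

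The statement you actually need, and the one the paper proves, is weaker: no jumped-over coalition is a \emph{maximal} losing coalition. For the line-9 jump this holds because every coalition strictly between the just-output coalition $S$ (with $n\in S$) and $\mathsf{brs}(\mathsf{trunc}(S))$ is a proper subset of $S$, hence losing but dominated by the losing coalition $S$ and therefore not maximal; for the line-13 jump, the paper's case analysis shows the skipped coalitions are either supersets of left-shifts of $\mathsf{nextshelter}$ (winning) or right-shifts of $\mathsf{nextshelter}\setminus\{b(\mathsf{nextshelter})\}$, which is itself non-maximal because it is a strict subset of $\mathsf{brs}(\mathsf{nextshelter})$, losing by the definition of a shelter. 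With this corrected dichotomy (winning, or losing-but-not-maximal) your induction closes, and at that point your positive invariant about $\mathsf{currentcoalition}$ at the top of the outer loop becomes unnecessary: as in the paper, it suffices to take any coalition not output, locate which interval it was skipped in, and show it is not a maximal losing coalition.
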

\begin{proof}
By Theorem \ref{hopskipjumponlymlcs} we have that Algorithm \ref{alg:hopskipjump} outputs only maximal losing coalitions, so what suffices is to show that the intervals of coalitions that Algorithm \ref{alg:hopskipjump} does not output, do not contain any losing coalitions.

Let $s$ be a coalition that is not output by Algorithm \ref{alg:hopskipjump}. There are several cases possible.
\begin{description}
\item[Case 1:] \textit{There is a point when the execution of the algorithm has just passed line 6, such that $\mathsf{currentcoalition} = s$.} In that case $s$ is losing, following from Lemma \ref{hopskipjumplemma}.
\item[Case 2:] \textit{There is a point when the execution of the algorithm has just passed line 8, such that $\mathsf{currentcoalition} \preceq_{\mathsf{pr}} s \preceq_{\mathsf{pr}} \mathsf{brs}(\mathsf{trunc}(\mathsf{currentcoalition}))$.} Now $s$ is a direct right-shift of a point $s'$ that the algorithm has output. $s'$ is maximal losing so $s$ is not maximal losing.
\item[Case 3:] \textit{There is a point when the execution of the algorithm has just passed line 12, such that $\mathsf{currentcoalition} \preceq_{\mathsf{pr}} s \preceq_{\mathsf{pr}} \mathsf{brs}(\mathsf{nextshelter})$.} Here we have that $s$ is either a right-shift of $\mathsf{currentcoalition}$ or a left-shift of a superset of $\mathsf{nextshelter}$. In the former case, $s$ is not a maximal losing coalition because it is a right-shift of $\mathsf{currentcoalition}$, and $\mathsf{currentcoalition}$ is not a maximal losing coalition because it is a strict subset of the bottom right-shift of $\mathsf{nextshelter}$, which is also losing. In the latter case, $s$ is winning, so $s$ can not be maximal losing.
\end{description}
\end{proof}

By the two theorems above, we have established that the Hop-Skip-and-Jump algorithm works correctly. Now we will also show that it runs in polynomial time.

\begin{theorem}
Algorithm \ref{alg:hopskipjump} runs in time $O(n^3t)$ (where $t$ is the number of shelter coalitions).
\end{theorem}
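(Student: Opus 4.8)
To bound the running time by $O(n^3 t)$, I would charge the work to the iterations of the outer loop, of which there are exactly $t$ (one per shelter), plus the single final termination step. Within one pass of the outer loop, the only unbounded-looking construct is the inner \textbf{while}-loop of line~4, so the heart of the argument is to show that this inner loop executes only $O(n)$ times per outer iteration. Granting that, each inner iteration performs a constant number of coalition operations ($\mathsf{fill}$, $\mathsf{brs}$, $\mathsf{trunc}$, $\mathsf{succ}$, plus the equality test against $\mathsf{nextshelter}\setminus\{b(\mathsf{nextshelter})\}$), and each of these operates on characteristic vectors of length $n$ and so costs $O(n)$; hence $O(n)$ inner iterations cost $O(n^2)$ per outer iteration, and summing over the $t$ outer iterations gives $O(n^2 t)$ --- I would then locate the extra factor of $n$ in the cost of scanning/processing shelter coalitions or in a slightly more generous accounting of the coalition operations, to land at the stated $O(n^3 t)$.

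\textbf{The key step: bounding the inner loop.} The main obstacle is proving that the \textbf{while}-loop of line~4 runs $O(n)$ times between two consecutive shelters. The idea is to track a monotone progress measure on $\mathsf{currentcoalition}$ as it advances through the PR-lexi-order. Observe that each iteration of the inner loop either (i) applies $\mathsf{fill}$, which strictly increases $|\mathsf{currentcoalition}|$ while leaving $b(\cdot)$ essentially controlled, or (ii) when $n \in \mathsf{currentcoalition}$, outputs the coalition and applies $\mathsf{brs}(\mathsf{trunc}(\cdot))$, which strictly increases the position $b(\mathsf{trunc}(\mathsf{currentcoalition}))$ of the right-most "movable" element. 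Both $|\mathsf{currentcoalition}|$ and that position are bounded by $n$, and I would argue that between two passes through line~18 (i.e., within one outer iteration) one of these quantities increases monotonically, capping the number of inner iterations at $O(n)$. One must be careful that a $\mathsf{fill}$ step followed by reaching $n$ and truncating does not reset the progress measure; the cleanest formulation is probably to consider the pair $(\text{position of the left-most $0$ followed by a $1$}, |\mathsf{currentcoalition}|)$ in lexicographic order, or equivalently to note that $\mathsf{currentcoalition}$ only ever moves forward in $(2^N,\preceq_{\mathsf{pr}})$ and that the "jumps" performed at lines~13 and~16 skip directly to a coalition determined by $\mathsf{nextshelter}$, so the fill-up/truncate cycle within one outer iteration can visit at most $O(n)$ distinct coalitions before the guard of line~4 fails.

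\textbf{Assembling the bound.} Once the inner loop is shown to run $O(n)$ times per outer iteration, I would: (1) note the outer \textbf{loop} body also does a constant number of coalition operations at lines~12--18, each $O(n)$, which is dominated; (2) note that retrieving "the next shelter on the list" is $O(1)$ amortized (or $O(n)$ to copy the characteristic vector), which over $t$ iterations contributes $O(nt)$; (3) multiply: $t$ outer iterations $\times\, O(n)$ inner iterations $\times\, O(n)$ per coalition operation gives $O(n^2 t)$, and absorbing the coalition-comparison cost $\mathsf{pr}(\cdot)$ (computing a positional representation is $O(n)$, and comparing two is $O(n)$) at a coarser granularity --- or simply bounding each basic step by $O(n)$ and each inner iteration's bookkeeping by $O(n^2)$ --- yields the claimed $O(n^3 t)$. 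I would present $O(n^3 t)$ as a safe upper bound rather than chase the tightest constant, since the paper only needs polynomiality (and, via the earlier remark, the corollary that $|L_{\max}|$ is polynomially bounded in $|W_{\min}|$). The one genuinely delicate point remains the $O(n)$ bound on inner iterations; everything else is routine accounting.
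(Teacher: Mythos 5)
There is a genuine gap at exactly the point you yourself flag as delicate: the claimed $O(n)$ bound on the number of iterations of the while-loop of line~4 per shelter. Neither of your proposed progress measures is monotone under the algorithm's operations. The cardinality $|\mathsf{currentcoalition}|$ is indeed increased by each $\mathsf{fill}$, but it is reset by the $\mathsf{trunc}$ inside line~9, so it cannot be charged across a whole outer iteration. Worse, the quantity you propose for case (ii) moves in the wrong direction: if at line~9 the coalition is $S$ with $T=\mathsf{trunc}(S)$, then after $\mathsf{brs}(T)$ and the subsequent run of fill-ups the next coalition reaching line~9 is $S'=(T\setminus\{b(T)\})\cup\{b(T)+1,\dots,n\}$, whose truncation is $T\setminus\{b(T)\}$; hence $b(\mathsf{trunc}(\mathsf{currentcoalition}))$ strictly \emph{decreases} from one line-9 execution to the next, and $|\mathsf{trunc}(\cdot)|$ drops by one each time. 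The appeal to ``$\mathsf{currentcoalition}$ only moves forward in $(2^N,\preceq_{\mathsf{pr}})$'' gives no polynomial bound at all, since that order has exponentially many elements. In fact the $O(n)$-per-shelter claim appears to be false: within a single outer iteration there can be up to $n$ executions of line~9 (one per element removed from the truncated prefix, until $\mathsf{brs}$ becomes undefined), and each can be preceded by up to $\Theta(n)$ fill-ups (roughly $n-b(T_k)$ of them), so the inner loop can run $\Theta(n^2)$ times for one shelter when the prefix consists of many low-numbered players.

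The paper's proof uses precisely this coarser but correct accounting: at most $n$ consecutive executions of lines 5--6 between executions of lines 7--9; at most $n$ executions of line~9 per while-run (after which the operation is undefined); hence at most $n^2$ inner iterations per outer iteration, $t$ outer iterations, and $O(n)$ per coalition operation, giving $O(n^3t)$. Your final step of ``locating the extra factor of $n$'' to land on the stated bound is not a proof move either --- if your $O(n)$ claim were true you would have shown the stronger bound $O(n^2t)$, but as it stands the key lemma of your argument is unproven and, as an absolute claim about the algorithm, incorrect, so the proposal does not establish the theorem without replacing that step by the $n^2$-per-shelter count.
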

\begin{proof}
When repeatedly executing the while-loop of line 4, lines 5 and 6 can be executed only $n$ consecutive times, before lines 7--9 are executed. Line 9 can be executed at most $n$ times in total, given that the execution does not leave the while-loop (after $n$ times, the operation done at line 9 is undefined, and execution stops). It follows that the while-loop is executed at most $n^2$ consecutive times before execution leaves the while-loop. Each time lines 12--18 are executed, one shelter is taken from the list, so lines 12--18 are executed only $t$ times. The fill-up operation, bottom right-shift operation, successor operation and truncation operation can all be implemented in $O(n)$ time. So, bringing everything together, we arrive at a total runtime of $O(n^3t)$.
\end{proof}

\subsubsection{Synthesizing roof- and ceiling-representations}\label{roofvgs}
Next, we consider the problem of synthesizing various representations of games into the roof- and ceiling-representation of a canonical linear game.

Let us start with the problem ($\mathcal{L}_{\mathsf{W}}, \mathcal{L}_{\mathsf{roof}}$)-VGS. This problem boils down to solving the ($\mathcal{L}_{\mathsf{W,\min}}, \mathcal{L}_{\mathsf{roof}}$)-VGS problem, since ($\mathcal{L}_{\mathsf{W}}, \mathcal{L}_{\mathsf{W,\min}}$)-VGS is easy (just check for each coalition in $W$ whether it is minimal, and if so, it is in $W_{\min}$). The same holds for the problems ($\mathcal{L}_{\mathsf{L}}, \mathcal{L}_{\mathsf{ceil}}$)-VGS and ($\mathcal{L}_{\mathsf{L,\max}}, \mathcal{L}_{\mathsf{ceil}}$)-VGS.

Solving ($\mathcal{L}_{\mathsf{W,\min}}, \mathcal{L}_{\mathsf{roof}}$)-VGS is also not very difficult. As pointed out before, there is a polynomial-time algorithm that checks whether a monotonic simple game given as a list of minimal winning coalitions is linear, and we can obtain the strict desirability order if this is the case. It could be that it turns out the game is linear, but not canonical. If we wish, we are then also able to permute the players so that we end up with a canonical linear game.
After that, all that we have to do is check for each minimal winning coalition $C$ whether each of its direct right-shifts (no more than $n$ direct right-shifts are possible) are losing coalitions. If that is the case, then $C$ must be a roof. For the problem ($\mathcal{L}_{\mathsf{L,\max}}, \mathcal{L}_{\mathsf{ceil}}$)-VGS, the situation is completely symmetric.

What also follows now, is that the problems ($\mathcal{L}_{\mathsf{W,\min}}, \mathcal{L}_{\mathsf{ceil}}$)-VGS can be solved in polynomial time: we first check if the input list of minimal winning coalitions describes a linear game. If so, then the Hop-Skip-and-Jump algorithm of Section \ref{actualhopskipjumpalgorithm} is able to generate in polynomial time a list of maximal losing coalitions from the list of minimal winning coalitions. After that, we filter from this output list the coalitions that are not ceiling coalitions. The problem ($\mathcal{L}_{\mathsf{L,\max}}, \mathcal{L}_{\mathsf{roof}}$)-VGS is also solvable in polynomial time by running a ``symmetric'' version of the Hop-Skip-and-Jump algorithm where we
\begin{itemize}
\item permute the players according to the permutation $\pi$ where the players are ordered in \emph{ascending} desirability, i.e., the least desirable player is now player 1, and the most desirable player is player $n$;
\item run a version of the Hop-Skip-and-Jump algorithm where losing coalitions are treated as winning coalitions and vice versa.
\end{itemize}
once the Hop-Skip-and-Jump algorithm is done, we have a list of coalitions. For each $C$ in this list, the coalition $\{\pi^{-1}(i) : i \in C\}$ is a minimal winning coalition.

As a consequence, by polynomial time solvability of ($\mathcal{L}_{\mathsf{W}}, \mathcal{L}_{\mathsf{W,\min}}$)-VGS and ($\mathcal{L}_{\mathsf{L}}, \mathcal{L}_{\mathsf{L,\max}}$)-VGS we also have that ($\mathcal{L}_{\mathsf{L}}, \mathcal{L}_{\mathsf{roof}}$)-VGS and ($\mathcal{L}_{\mathsf{W}}, \mathcal{L}_{\mathsf{roof}}$)-VGS admit a polynomial time algorithm.

Is the problem ($\mathcal{L}_{\mathsf{ceil}}, \mathcal{L}_{\mathsf{roof}}$)-VGS solvable in polynomial time? This turns out to not be the case. We will now give a family of examples of canonical linear games in which the number of roof coalitions is exponential in $n$, while the number of ceiling coalitions is only polynomial in $n$. As a consequence, any algorithm that generates the list of roofs from the list of ceilings will run in exponential time in the worst case. By symmetry it also follows that ($\mathcal{L}_{\mathsf{roof}}, \mathcal{L}_{\mathsf{ceil}}$)-VGS is not solvable in polynomial time.

Let us first define the following specific type of coalition.

\begin{definition}[$(k,i)$-encoding coalition]
Let $N = \{1,\ldots, n\}$ be a set of players such that $n = 4i$ for some $i \in \mathbb{N}$.
For any $k$ satisfying $0 \leq k < 2^i-1$, the \textit{$(k,i)$-encoding coalition} $S_{k,i} \subseteq N$ is then defined as
\begin{eqnarray*}
\{4(j-1)+2, 4(j-1)+3 : \text{The } j\text{th bit in the binary representation of } k \text{ equals } 0 \text{.}\} & \cup & \\
\{4(j-1)+1, 4(j-1)+4 : \text{The } j\text{th bit in the binary representation of } k \text{ equals } 1 \text{.}\}
\end{eqnarray*}
\end{definition}
For example, $S_{2,2} = \{1,4,6,7\}$, and $S_{5,3} = \{1,4,6,7,9,12\}$.
We can then define canonical linear games in which the roof coalitions are $(k,i)$-encoding coalitions.
\begin{definition}[$i$-bit roof game]
Let $N = \{1,\ldots, n\}$ be a set of players such that $n = 4i$ for some $i \in \mathbb{N}$.
The \textit{$i$-bit roof game} on $N$, denoted $G_{i\mathsf{-bit}}$, is the canonical linear game such that the set of roof coalitions of $G$ is
$\{S_{0,i}, \ldots, S_{2^i-1,i}\}$.
\end{definition}
For example, the $2$-bit roof game, $G_{2\mathsf{-bit}}$, consists of the roofs $\{\{2,3,6,7\}$, $\{2,3,5,8\}$, $\{1,4,6,7\}$, $\{1,4,5,8\}\}$.
$G_{i\mathsf{-bit}}$ is well-defined for all $i$ because the binary representations of two arbitrary $i$-bit numbers $k$ and $k'$ differ in at least one bit. Therefore, $S_{i,k}$ is not a superset of a left-shift of $S_{i,k'}$ and hence the set of roofs that we have defined for $G_{i\mathsf{-bit}}$ is indeed a valid set of roofs (i.e., there are no two roofs such that one is a left-shift of another).

$G_{i\mathsf{-bit}}$ has $2^i = 2^{\frac{n}{4}}$ roofs, i.e., an exponential number in $n$. We will show that the number of ceilings in $G_{i\mathsf{-bit}}$ is only polynomially bounded. First let us use the following definitions for convenience.
\begin{definition}[Accepting roof set]
Let $G \in \mathcal{G}_{\mathsf{clin}}(n)$ be a canonical linear game on players $N = \{1,\ldots, n\}$. Let $C \subseteq N$ be a coalition, let $x$ be a natural number such that $1 \leq x \leq |C|$, and let $D(C,x)$ be the $x$-th most desirable player in $C$.
The \textit{accepting set of roofs of the $x$-th most desirable player in $C$}, denoted $A(C, x)$, is the set consisting of those roof coalitions $R$ for which either the $x$th most desirable player in $R$ is greater than or equal to $D(c,x)$, or $|R| < x$.
\end{definition}

It is important to now observe that the following fact holds.
\begin{proposition}
In a canonical linear game, a coalition $C$ is winning if and only if $\bigcap_{a=1}^{|C|} A(C,a) \not= \varnothing$.
\end{proposition}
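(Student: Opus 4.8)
The plan is to reduce the statement to the standard description of the winning coalitions of a canonical linear game in terms of its roofs, and then to unwind the definition of $A(C,a)$ into a purely combinatorial comparison of the sorted lists of players of $C$ and of a candidate roof. Throughout, for a coalition $X$ write $D(X,\cdot)$ for its increasing enumeration, so $D(X,a)$ is its $a$-th most desirable (equivalently, $a$-th smallest) player.

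First I would record the structural fact that $C$ is winning iff $C$ is a superset of some left-shift of a roof of the game. The ``if'' direction is immediate from the observations preceding the roof/ceiling definition: a left-shift of a winning coalition is winning, a superset of a winning coalition is winning, and a roof is winning. For ``only if'' one uses that the roofs are exactly the shift-minimal winning coalitions and that, since $\preceq$ is a total preorder, every minimal winning coalition is a left-shift of some roof (the standard shift-minimal theory of linear games, cf.\ \cite{simplegames}); a winning $C$ contains a minimal winning coalition, hence a left-shift of a roof. Next I would prove the elementary lemma that, for coalitions $A,B$ of equal size, $A$ is a left-shift of $B$ iff $D(A,a)\le D(B,a)$ for all $a$: a direct left-shift only decreases coordinates of the sorted vector, and conversely a coordinatewise-smaller sorted vector can be reached from $B$ by a chain of direct left-shifts. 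Combining this with the obvious fact that $A\subseteq C$ forces $D(C,a)\le D(A,a)$ for $a\le|A|$, and observing that the most desirable size-$|R|$ subset of $C$, namely $\{D(C,1),\dots,D(C,|R|)\}$, is the only candidate worth testing (if it is not a left-shift of $R$, then no subset of $C$ is), one obtains: $C$ is winning iff there is a roof $R$ with $|R|\le|C|$ and $D(C,a)\le D(R,a)$ for every $a\in\{1,\dots,|R|\}$. Call this condition $(\star)$.

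It then remains to match $(\star)$ with nonemptiness of $\bigcap_{a=1}^{|C|}A(C,a)$. Unwinding the definition, a roof $R$ lies in $\bigcap_{a=1}^{|C|}A(C,a)$ iff, for each $a\le|C|$, either $|R|<a$ or $D(R,a)\ge D(C,a)$ --- i.e.\ iff $D(R,a)\ge D(C,a)$ holds for all $a\le\min(|R|,|C|)$. For a roof with $|R|\le|C|$ this is precisely the inner condition of $(\star)$, so $(\star)$ for such an $R$ gives a nonempty intersection, and conversely any roof of size at most $|C|$ in the intersection witnesses $(\star)$, hence $C$ is winning. The one point that needs care --- and which I expect to be the main obstacle --- is the size bookkeeping in this last step: one must rule out that a roof $R$ with $|R|>|C|$ sits in all of $A(C,1),\dots,A(C,|C|)$ while $C$ is losing (note that such a roof is never ``rejected'' on its own, since its $x$-th player is only ever compared with $D(C,x)$ for $x\le|C|$). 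I would handle this by reading $A(C,x)$ as implicitly ranging over roofs of size at most $|C|$ --- which is natural, since $A(C,x)$ is only defined for $x\le|C|$ and concerns the $x$-th player of $C$ --- or by adding this test explicitly; with that in place, the equivalence follows by combining the three steps above.
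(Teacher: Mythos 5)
Your proposal is correct and follows essentially the same route as the paper: the paper, too, proves the proposition by observing that it restates the fact that $C$ is winning iff it is a superset of a left-shift of a roof, and argues both directions via the same comparison of the sorted player lists (replacing the $a$-th most desirable player of one coalition by that of the other). The one substantive point is the size caveat you raise at the end, and you are right that it is the real obstacle rather than a formality: with the definition of $A(C,x)$ read literally, a roof $R$ with $|R|>|C|$ can lie in every $A(C,a)$, $1\le a\le|C|$, while $C$ is losing. For instance, take $n=2$ with the grand coalition as the only winning coalition, so the unique roof is $R=\{1,2\}$, and take $C=\{1\}$: then $D(R,1)=1\ge D(C,1)$, so $\bigcap_{a=1}^{|C|}A(C,a)=\{R\}\neq\varnothing$ although $C$ is losing. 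The paper's own proof of the ``intersection nonempty implies winning'' direction silently assumes $|R|\le|C|$ --- it replaces the $a$-th most desirable player of $R$ by that of $C$ for all $a\le|R|$, which only makes sense when $|R|\le|C|$ --- so your repair (restricting the roofs counted in $A(C,x)$ to those of size at most $|C|$, or adding that test explicitly) is exactly what is needed; with it in place both your argument and the paper's go through, and the converse direction is unaffected, since the roof produced there is a right-shift of a subset of $C$ and hence automatically has size at most $|C|$.
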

\begin{proof}
This lemma is in fact an equivalent statement of the fact that $C$ is winning in a canonical linear game if and only if it is a superset of a left-shift of a roof: if $R \in \bigcap_{a=1}^{|C|} A(C,a)$ then it means that replacing the $a$-th most desirable player in $R$ by the $a$-th most desirable player in $C$ for all $a$,$1 \leq a \leq R$ would result in a left-shift of $R$ that is a subset of $C$, so $C$ must be winning.

Conversely, suppose $C$ is winning. Then there must be a roof $R$ that is a right-shift of a subset of $C$. By removing from $C$ the players with a higher number than $D(C,|R|)$, we obtain a subset $C'$ of $C$ with $|R|$ players. By replacing the $a$-th most desirable player of $C$ by the $a$-th most desirable player of $R$ for $1 \leq a \leq R$, we obtain a right-shift of $C$ that is $R$. Because in this last step we replaced each player in $C'$ by a higher-numbered player, we get that $R \in \bigcap_{a=1}^{|R|} A(C,a)$. $R$ is also in $\bigcap_{a = |R|+1}^{|C|} A(C,a)$ by definition.
\end{proof}

Using the notion of an accepting roof set, we can prove the following technical lemma. The reader should recall the definition of a \emph{direct} left-shift (Definition \ref{leftshiftrightshift}).
\begin{lemma}\label{technicallemma}
Let $C$ be a ceiling of $G_{i\mathsf{-bit}}$ with two or more distinct coalitions that are direct left-shifts of $C$, and let $p$ be an arbitrary player that we can apply the direct left-shift operation on, i.e., let $p$ be a player such that $C_1 = C \cup \{p-1\} \setminus \{p\}$ is a direct left-shift of $C$. Also, let $a$ be the number such that $p = D(C,a)$. Then $p = 2a$.
\end{lemma}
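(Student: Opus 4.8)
The plan is to reduce everything to a single coordinatewise-domination condition. Writing the elements of a coalition $C$ in increasing order as $c_1 < c_2 < \cdots$, I would first record the following restatement of the preceding proposition: $C$ is winning in $G_{i\mathsf{-bit}}$ if and only if $|C| \geq 2i$ and there is a roof $R = S_{k,i}$ whose increasingly sorted elements $r_1 < \cdots < r_{2i}$ satisfy $c_j \leq r_j$ for all $j \in \{1, \ldots, 2i\}$. (The coordinatewise-smallest size-$2i$ subset of $C$ is $\{c_1, \ldots, c_{2i}\}$, and a size-$2i$ set $D$ is a left-shift of $R$ exactly when its sorted vector is coordinatewise at most that of $R$; so $C$ being a superset of a left-shift of a roof is the stated condition.) The second and crucial ingredient is a structural fact about roofs: $S_{k,i}$ has exactly two players in each block $\{4(m-1)+1, \ldots, 4m\}$, which are therefore its $(2m-1)$-st and $2m$-th smallest elements, and a quick check of the two bit-cases gives $r_{2m-1} \in \{4m-3, 4m-2\}$ and $r_{2m} \in \{4m-1, 4m\}$. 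Equivalently: for every roof and every $\ell \in \{1, \ldots, 2i\}$, the $\ell$-th smallest element of the roof lies in $\{2\ell-1, 2\ell\}$.

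Next I would set up the two shiftable players. Since $C$ is a ceiling, it is losing and every direct left-shift of it is winning; and since $C$ has at least two direct left-shifts, there are at least two players $q \in C$ with $q-1 \notin C$. The given $p = c_a$ is one of them; pick another, $q = c_b$ with $b \neq a$. A preliminary observation is that any such $q$ has rank $b \leq 2i$: if $b > 2i$ then $c_b - 1$ still exceeds $c_1, \ldots, c_{2i}$, so the direct left-shift at $q$ has the same $2i$ smallest elements as $C$ and is therefore losing, contradicting that $C$ is a ceiling. So $a, b \leq 2i$.

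Now I would pin down a witnessing roof for the shift at $c_a$. The coalition $C^{(a)} := (C \setminus \{c_a\}) \cup \{c_a-1\}$ is winning, and since $c_a - 1 \notin C$ and $a \leq 2i$, its $2i$ smallest elements are $c_1, \ldots, c_{a-1}, c_a-1, c_{a+1}, \ldots, c_{2i}$; hence there is a roof $R^{(a)}$ with $c_j \leq r^{(a)}_j$ for all $j \in \{1, \ldots, 2i\} \setminus \{a\}$ and $c_a - 1 \leq r^{(a)}_a$. But $C$ is losing, so $C$ is not coordinatewise dominated by $R^{(a)}$; the only coordinate where domination can fail is $j = a$, so $c_a > r^{(a)}_a$, forcing $r^{(a)}_a = c_a - 1$. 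By the roof structure fact $r^{(a)}_a \geq 2a-1$, hence $c_a \geq 2a$. Running the identical argument for $q = c_b$ yields a roof $R^{(b)}$ with $c_j \leq r^{(b)}_j$ for all $j \neq b$; evaluating at $j = a$ (valid since $a \neq b$ and $a \leq 2i$) gives $c_a \leq r^{(b)}_a \leq 2a$, again by the roof structure fact. Therefore $c_a = 2a$, i.e.\ $p = 2a$.

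I expect the only real obstacle to be the bookkeeping in the regime $|C| > 2i$: one must check that a left-shift at a rank-$>2i$ player leaves the $2i$ smallest elements untouched, and that the left-shift at $c_a$ places $c_a - 1$ exactly in position $a$. Establishing the two reduction facts (the domination characterization of winning, and the ``$\ell$-th smallest $\in \{2\ell-1,2\ell\}$'' property of roofs) carefully is the other thing to be careful about; after that the ``sandwich'' using two distinct shiftable players is immediate, and in particular no case analysis on the parity of $a$ nor any use of the accepting-roof-set machinery is needed.
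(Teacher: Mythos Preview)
Your proof is correct and is at heart the same argument as the paper's: both hinge on the roof structure fact that the $\ell$-th smallest element of any roof lies in $\{2\ell-1,2\ell\}$, and both sandwich $p=c_a$ between $2a$ and $2a$ by using the winningness of the left-shift at $p$ for one inequality and the winningness of a second left-shift at some $q\neq p$ for the other. Your coordinatewise-domination reformulation is exactly the paper's accepting-roof-set criterion unpacked (a roof lies in $\bigcap_j A(C,j)$ iff its sorted vector dominates the sorted vector of $C$ coordinatewise), so the two presentations differ only in language; yours is a bit more direct and avoids the contradiction framing, while the paper's version stays within the $A(C,\cdot)$ notation already introduced.
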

\begin{proof}
Observe that for all $b$ it holds that every roof $R$ of $G_{i\mathsf{-bit}}$ has either $D(R,b) = 2b-1$ or $D(R,b) = 2b$.
By construction of $G_{i\mathsf{-bit}}$, the number of roofs of $G_{i\mathsf{-bit}}$ that contain player $2b-1$ is $\frac{2^i}{2}$, and the number of roofs that contain player $2b$ is also $\frac{2^i}{2}$.

$C$ has at least two distinct direct left-shifts, so there must be another player $p'$, $p' \not= p$, such that $C_2 = C \cup \{p'-1\} \setminus \{p'\}$ is a direct left-shift of $C$.

First we will show that $p \leq 2a$. Assume therefore that $p > 2a$. Now we have that $|A(C,a)| = 0$, so then $|A(C_2,a)| = 0$ and hence $\bigcap_a A(C_2,a) = \varnothing$. We see that $C_2$ is losing, but $C_2$ is a direct left-shift of $C$, which is a ceiling, so $C_2$ is winning. This is a contradiction, so $p \leq 2a$.

Now we will show that $p \geq 2a$. Assume therefore that $p < 2a$. Now we have that $|A(C,a)| = 2^i$, so then $A(C_1,a) = 2^i$. Now it must be that $\bigcap_a A(C_1,a) = \bigcap_a A(C,a)$. But $\bigcap_a A(C,a) = \varnothing$ because $C$ is losing, and therefore $\bigcap_a A(C_1,a) = \varnothing$ so $C_1$ is losing. $C_1$ is also winning, because it is a left-shift of ceiling $C$. This is a contradiction, so $p \geq 2a$.

$p \geq 2a$ and $p \leq 2a$, so $p = 2a$.
\end{proof}

\begin{lemma}
In $G_{i\mathsf{-bit}}$, a ceiling does not have more than two direct left-shifts.
\end{lemma}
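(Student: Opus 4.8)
The plan is to work entirely with the positional values $D(C,\cdot)$ of coalitions, exploiting the rigid block structure of the roofs of $G_{i\mathsf{-bit}}$. First I would record the elementary fact that every roof $R$ has exactly two players in each block $\{4j-3,\dots,4j\}$, so $D(R,2j-1)\in\{4j-3,4j-2\}$ and $D(R,2j)\in\{4j-1,4j\}$, the values in block $j$ being controlled by the $j$-th bit: $D(R,2j-1)=4j-2,\ D(R,2j)=4j-1$ if the bit is $0$, and $D(R,2j-1)=4j-3,\ D(R,2j)=4j$ if the bit is $1$; in particular $2b-1\le D(R,b)\le 2b$ for all $b\le 2i$. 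Using the proposition characterizing winning coalitions via accepting roof sets, together with the fact that every roof has size exactly $2i$, a coalition $C$ is winning in $G_{i\mathsf{-bit}}$ iff some roof $R$ satisfies $D(R,a)\ge D(C,a)$ for all $a\le 2i$ (padding $D(C,a)=0$ for $a>|C|$ via the positional representation). Combining this with the independence of the bits across blocks, I would establish two facts about $B:=\{b\le 2i: D(C,b)=2b\}$: (Claim A) if $C$ is losing and $D(C,b)\le 2b$ for all $b\le 2i$, then some block $j$ has $\{2j-1,2j\}\subseteq B$ — proved by explicitly building a dominating roof, choosing each bit to clear whichever rank of its block (if any) lies in $B$; and (Claim B) if some block $j$ has $\{2j-1,2j\}\subseteq B$, then $C$ is losing — since a dominating roof would need bit $j=1$ to reach $D(R,2j)=4j$, but then $D(R,2j-1)=4j-3<4j-2$.

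Next I would bring in the ceiling structure. Suppose, for contradiction, $C$ is a ceiling with at least $3$ (hence at least $2$) direct left-shifts; then Lemma \ref{technicallemma} applies, so every shiftable player of $C$ sits at a rank $a$ with $D(C,a)=2a$. I would first show $D(C,b)\le 2b$ for all $b\le 2i$: otherwise fix $b_0\le 2i$ with $D(C,b_0)>2b_0$; since $C$ has two direct left-shifts at distinct ranks, at least one is at a rank $a\ne b_0$, and that left-shift leaves $D(\cdot,b_0)$ unchanged, so it remains losing (no roof dominates at rank $b_0$), contradicting the ceiling property. With $D(C,b)\le 2b$ established, $B$ is exactly the set of shiftable ranks of $C$: every shiftable rank lies in $B$ by Lemma \ref{technicallemma}, and conversely if $D(C,b)=2b$ then $D(C,b-1)\le 2(b-1)$ forces $2b-1\notin C$ (the case $b=1$ being immediate), so the rank-$b$ player is shiftable.

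For the endgame: since $C$ is a ceiling it is losing, so Claim A yields a block $j^*$ with $\{2j^*-1,2j^*\}\subseteq B$. For each shiftable rank $a$, the direct left-shift $C_a$ has the same positional values as $C$ except $D(C_a,a)=2a-1$, so $B(C_a)=B\setminus\{a\}$; as $C_a$ is winning, the contrapositive of Claim B forbids any block from containing both its ranks in $B\setminus\{a\}$, forcing $a\in\{2j^*-1,2j^*\}$. Hence every shiftable rank of $C$ lies in the two-element set $\{2j^*-1,2j^*\}$, so $C$ has at most two shiftable players, i.e., at most two direct left-shifts — contradicting the assumption of three. I expect the main obstacle to be pinning down the winning/losing characterization and Claims A--B precisely, in particular the bookkeeping for coalitions with $|C|<2i$ (where the positional representation pads with zeros and the incomplete last block imposes no real constraint); once those are in hand, the shift argument is short.
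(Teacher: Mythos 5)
Your proof is correct, and at bottom it runs on the same machinery as the paper's own argument: the two-players-per-block structure of the roofs, the accepting-roof (dominance) characterization of winning coalitions, and Lemma \ref{technicallemma} to place every shiftable player at position $2a$. The packaging, however, is different and in fact tighter. The paper splits into two cases according to whether the (at least three) shiftable players occupy pairwise distinct blocks or two of them share a block; in the first case it concludes that $C$ is winning from the nonemptiness of $\bigcap_{a \in \mathcal{A}} A(C,a)$, an intersection taken only over the \emph{shiftable} ranks, which silently assumes the remaining ranks impose no obstruction. Your preliminary step showing $D(C,b) \le 2b$ for all $b \le 2i$ (by applying a second left-shift at a rank $a \ne b_0$ and noting it stays losing) makes exactly this implicit point explicit, and your Claims A and B turn the blockwise bit-choice into a clean criterion: under $D(C,b) \le 2b$, the coalition is losing precisely when some block has both of its ranks tight. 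The endgame --- Claim A produces a tight block $j^*$ because the ceiling is losing, and Claim B applied to each winning direct left-shift $C_a$ forces $a \in \{2j^*-1, 2j^*\}$ --- replaces the paper's case distinction and gives the slightly stronger structural conclusion that all shiftable players of a ceiling lie in a single block. So: same route and same key lemmas, but your bookkeeping (including the padding conventions for $|C| \ne 2i$) closes the small gaps that the paper's Case 1 leaves to the reader.
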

\begin{proof}
For contradiction, let $C$ be a ceiling with more than two direct left-shifts. Let $k$ be the number of direct left-shifts of $C$, and let
$P = \{p_1, \ldots, p_k\}$ be the set containing the players of $C$ that we can apply the direct left-shift operation on (we say that we can apply the direct left-shift operation on a player $q$ when $C \cup \{q-1\} \setminus \{q\}$ is a left-shift of $C$).
Let $\mathcal{A} = \{a_1, \ldots, a_k\}$ then be the numbers such that $p_j$ is the $a_j$-th most desirable player in $C$, for all $i$ with $1 \leq j \leq k$.
For any $j \in \{1,\ldots,i\}$ and any $b \in \{0,1\}$, let $R(j,b)$ denote the following set of roofs of $G_{i\mathsf{-bit}}$:
\begin{equation*}
R(j,b) = \{S_{k,i} : \text{The } j\text{-th bit of the binary representation of } k \text{ is } b\text{. }\}
\end{equation*}
Observe that by the previous lemma, there is a $k$-tuple of bits $(b_1,\ldots,b_k) \in \{0,1\}^k$ such that for all $j$ with $1 \leq j \leq k$:
\begin{equation*}
A(C,a_j) = R(\lceil p_j / 4 \rceil, k_j) .
\end{equation*}
There are now two cases:
\begin{description}
\item[Case 1:] \textit{All of the players $\{p_1, \ldots, p_k\}$ are in different multiples of 4, i.e., $\lceil p_1 / 4 \rceil \not= \lceil p_2 / 4 \rceil \not= \cdots \not= \lceil p_k / 4 \rceil$.} Then by the properties of the binary numbers, the intersection $\bigcap_{a \in \mathcal{A}} A(C,a) = \bigcap_{p \in P} R(\lceil p / 4 \rceil, b)$ is not empty, therefore $C$ must be winning, which is in contradiction with $C$ being a ceiling. So this case is impossible.
\item[Case 2:] \textit{There are two distinct players $p$ and $p'$, both in $P$, that are in the same multiple of 4, i.e., $\lceil p / 4 \rceil = \lceil p' / 4 \rceil$.}
Assume without loss of generality that $p < p'$. Then $A(C,a) \cap A(C,a') = \varnothing$. But then we would be able to apply a direct left-shift on player $p''$ without turning $C$ into a winning coalition, i.e., $C \cup \{p''-1\} \setminus \{p''\}$ is winning. But $C$ is a ceiling, so that is a contradiction.
\end{description}
From the previous lemma it follows that there can not be more than two players that are the same multiple of 4, so the above two cases are indeed exhaustive.
Both cases are impossible, so we must reject the assumption that there exists a ceiling $C$ with more than two left-shifts.
\end{proof}

It is easy to see that there exist no more than $O(n^5)$ coalitions with exactly two left-shifts, there are no more than $O(n^3)$ coalitions with one left-shift, and there are no more than $O(n)$ coalitions with no left-shifts. so we get the following corollary.
\begin{corollary}
The game $G_{i\mathsf{-bit}}$ (on $n=4i$ players) has $O(n^5)$ ceilings.
\end{corollary}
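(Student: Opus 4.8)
The plan is to combine the preceding lemma — that a ceiling of $G_{i\mathsf{-bit}}$ has at most two direct left-shifts — with a purely combinatorial count of the coalitions on $n$ players that admit at most two direct left-shifts. Every ceiling of $G_{i\mathsf{-bit}}$ is such a coalition, so the corollary follows once I show that there are only $O(n^5)$ coalitions on $n$ players with at most two direct left-shifts.

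The key observation I would use is that a player $p \in C$ admits a direct left-shift of $C$ precisely when $p \geq 2$ and $p-1 \notin C$; these are exactly the smallest elements of the maximal blocks (maximal runs of consecutive integers) of $C$, excluding a block that happens to begin at player $1$. Hence the number of players on which a direct left-shift can be performed equals the number of maximal blocks of $C$ whose minimum is at least $2$. Consequently, a coalition with at most two such players decomposes into at most three maximal blocks, at most one of which begins at player $1$, and any such coalition is completely determined by the (at most five) integer block-endpoints, each lying in $\{0,1,\ldots,n\}$.

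From here the count is immediate, and I would organize it by the number of direct left-shifts. Coalitions with no direct left-shift are exactly $\varnothing$ together with the prefixes $\{1,\ldots,k\}$, of which there are $n+1 = O(n)$. Coalitions with exactly one direct left-shift are determined by at most three endpoints, giving $O(n^3)$ of them. Coalitions with exactly two are determined by at most five endpoints, giving $O(n^5)$ of them. Summing and invoking the previous lemma, the number of ceilings of $G_{i\mathsf{-bit}}$ is at most $O(n) + O(n^3) + O(n^5) = O(n^5)$.

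I do not expect a genuine obstacle here: the argument is short once the block decomposition is in place. The only thing requiring care is the book-keeping of degenerate cases in that decomposition — whether the leading block is empty, whether a block contains player $1$, and avoiding double counting of the same coalition under different parametrizations — but this is entirely routine and does not affect the asymptotics.
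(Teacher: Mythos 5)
Your argument is correct and follows essentially the same route as the paper: the paper likewise combines the lemma that a ceiling of $G_{i\mathsf{-bit}}$ has at most two direct left-shifts with the counts of $O(n)$, $O(n^3)$, and $O(n^5)$ coalitions having zero, one, and two direct left-shifts (which it asserts as easy to see). Your block-decomposition count simply makes that assertion explicit, and the bookkeeping checks out.
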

We can now conclude that $\{G_{i\mathsf{-bit}} : i \in \mathbb{N} \}$ is an infinite family of examples in which there are exponentially many more roofs than ceilings. Hence, finally we obtain:
\begin{corollary}
there is no polynomial time algorithm for ($\mathcal{L}_{\mathsf{ceil}}, \mathcal{L}_{\mathsf{roof}}$)-VGS \\
and ($\mathcal{L}_{\mathsf{ceil}}, \mathcal{L}_{\mathsf{roof}}$)-VGS.
\end{corollary}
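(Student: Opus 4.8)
The plan is to exploit the family $\{G_{i\mathsf{-bit}}\}_{i \in \mathbb{N}}$ that we have just analysed. As established above, the canonical linear game $G_{i\mathsf{-bit}}$ on $n = 4i$ players has $2^i = 2^{n/4}$ roof coalitions but, by the preceding corollary, only $O(n^5)$ ceiling coalitions. Consequently the ceiling form $(N, W_{\mathsf{ceil}})$ of $G_{i\mathsf{-bit}}$ is a string in $\mathcal{L}_{\mathsf{ceil}}$ of length polynomial in $n$ (a list of $O(n^5)$ characteristic vectors, each of $n$ bits), whereas \emph{any} string in $\mathcal{L}_{\mathsf{roof}}$ that represents $G_{i\mathsf{-bit}}$ must list all of its roof coalitions and hence has length $\Theta(n \cdot 2^{n/4})$, exponential in $n$. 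Since $G_{i\mathsf{-bit}}$ is linear, $f_{\mathcal{L}_{\mathsf{ceil}} \rightarrow \mathcal{L}_{\mathsf{roof}}}$ never returns $\mathsf{no}$ on this input, so any algorithm solving $(\mathcal{L}_{\mathsf{ceil}}, \mathcal{L}_{\mathsf{roof}})$-VGS must write out this exponentially long roof form, which already takes time exponential in the length of its input. This rules out a polynomial-time algorithm for $(\mathcal{L}_{\mathsf{ceil}}, \mathcal{L}_{\mathsf{roof}})$-VGS.

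For the other direction, $(\mathcal{L}_{\mathsf{roof}}, \mathcal{L}_{\mathsf{ceil}})$-VGS, rather than re-running the construction I would invoke a duality symmetry. For a simple game $(N,v)$ let $v^*$ be its dual, defined by $v^*(S) = 1 - v(N \setminus S)$. A short computation (substitute $T = N \setminus (S \cup \{i,j\})$ into the definition of $\succeq$) shows $v^*$ has exactly the same desirability order as $v$, so $v^*$ is again a canonical linear game; moreover complementation of coalitions carries the minimal winning coalitions of $v^*$ to the maximal losing coalitions of $v$ and interchanges left-shifts with right-shifts, whence the roofs of $v^*$ are precisely the complements of the ceilings of $v$ and the ceilings of $v^*$ are precisely the complements of the roofs of $v$. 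Applying this to $v = G_{i\mathsf{-bit}}$, the dual $(G_{i\mathsf{-bit}})^*$ is a canonical linear game with only $O(n^5)$ roofs but $2^{n/4}$ ceilings; its roof form is therefore polynomially short while its ceiling form is exponentially long, and the same output-size argument shows that $(\mathcal{L}_{\mathsf{roof}}, \mathcal{L}_{\mathsf{ceil}})$-VGS has no polynomial-time algorithm.

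Essentially all the conceptual content has already been spent on the $G_{i\mathsf{-bit}}$ analysis, so the only thing that needs care here is the complement/dual bookkeeping: one must verify that taking complements reverses the shift operations --- if $S'$ is a direct left-shift of $S$, i.e.\ $S' = (S \setminus \{i\}) \cup \{i-1\}$, then $N \setminus S' = ((N \setminus S) \setminus \{i-1\}) \cup \{i\}$ is a direct right-shift of $N \setminus S$ --- and that duality preserves (rather than reverses) the canonical ordering $1 \succeq \cdots \succeq n$. I would also state explicitly that ``polynomial time'' here means polynomial in the input length, so that the bare fact that the required output is exponentially longer than the input yields the lower bound unconditionally, with no appeal to any complexity-theoretic hypothesis.
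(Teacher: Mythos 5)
Your proposal is correct and follows essentially the same route as the paper: the lower bound for $(\mathcal{L}_{\mathsf{ceil}}, \mathcal{L}_{\mathsf{roof}})$-VGS comes from the family $G_{i\mathsf{-bit}}$ having $2^{n/4}$ roofs but only $O(n^5)$ ceilings, so the required output is exponentially longer than the input. For the reverse direction the paper merely says ``by symmetry''; your explicit dual-game argument (with $v^*(S) = 1 - v(N \setminus S)$, preservation of the desirability order, and the complement correspondence between roofs of $v^*$ and ceilings of $v$) is a correct and welcome spelling-out of exactly that symmetry, not a different method.
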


\subsubsection{Other voting game synthesis problems \& summary of complexity results for voting game synthesis}\label{othervgs}
In this section we will discuss some of the remaining variants of the voting game synthesis problem that we did not discuss in the other sections.
At the end of this section, Table \ref{ch4summary} summarizes all of the results that we have discussed up to now.

First of all, Freixas et al.\ investigate in \cite{freixasproperties} the $(\mathcal{L}_1, \mathcal{L}_2)$-VGS problem for $\mathcal{L}_1$ and $\mathcal{L}_2 \in \{\mathcal{L}_W, \mathcal{L}_{W,\min}, \mathcal{L}_L, \mathcal{L}_{L,\max}\}$.
Most of their results follow from the discussion above. One of their results that does not, is that $(\mathcal{L}_W, \mathcal{L}_L)$-VGS and $(\mathcal{L}_L, \mathcal{L}_W)$-VGS do not have a polynomial time algorithm. This holds because there are instances where there are exponentially many more losing coalitions than that there are winning coalitions. Consider for instance the game in which only the grand coalition is winning. In this game there are $2^n-1$ losing coalitions, so it takes exponential time to list them all. This game is also a canonical linear game and a weighted voting game, so even if we restrict games to be weighted, or canonical linear, it still holds that $(\mathcal{L}_W, \mathcal{L}_L)$-VGS and $(\mathcal{L}_L, \mathcal{L}_W)$-VGS do not have polynomial time algorithms.

In \cite{freixasproperties}, it is also shown that $(\mathcal{L}_{W, \max}, \mathcal{L}_{L,\min})$-VGS is in general not polynomial time solvable. The authors show this by giving a family of examples of monotonic simple games that have exponentially many more maximal losing coalitions than minimal winning coalitions. The Hop-Skip-and-Jump algorithm that we described above does actually solve $(\mathcal{L}_{W, \max}, \mathcal{L}_{L,\min})$-VGS in polynomial time, but only for the restriction to linear games.

Another set of voting game synthesis problems that we have not yet discussed is the $(\mathcal{L}_{\mathsf{weights}}, \mathcal{L}_2)$-VGS case, for any choice of $\mathcal{L}_2$. In this case it always holds that there is no polynomial time algorithm for the problem:
\begin{itemize}
\item When $\mathcal{L}_2 = \mathcal{L}_W$, consider the weighted voting game in which the quota is $0$. Now there are $2^n$ minimal winning coalitions, so the output is exponentially larger than the input. The case $\mathcal{L}_2 = \mathcal{L}_L$ is analogous, but now we take a weighted voting game in which the quota is larger than the sum of all weights, so that there are no winning coalitions.
\item When $\mathcal{L}_2 = \mathcal{L}_{W,\min}$ or $\mathcal{L}_2 = \mathcal{L}_{L,\max}$, we see that the weighted voting game in which every player's weight is 1 and the quota is $\lfloor \frac{n}{2} \rfloor$ has an exponential number of minimal winning coalitions and maximal losing coalitions: any coalition of size $\lfloor \frac{n}{2} \rfloor$ is minimal winning, and any coalition of size $\lfloor \frac{n}{2} \rfloor - 1$ is maximal losing. There are respectively $\binom{n}{\lfloor n/2 \rfloor}$ and $\binom{n}{\lfloor n/2 \rfloor - 1}$ such coalitions (by Sperner's theorem, see Theorem \ref{spernerstheorem}). By using Stirling's approximation, we can see that both these expressions are exponential in $n$.
\item When $\mathcal{L}_2 = \mathcal{L}_{\mathsf{roof}}$ it follows directly from the proof of Theorem \ref{lineargameslowerbound} that the weighted voting game in which player $i$ gets weight $n-i+1$ and the quota is equal to $n(n+1)/4$, has an exponential number of roofs. We do not know whether there is also a weighted voting game with an exponential number of ceilings.
\end{itemize}

That completes our study of the voting game synthesis problem. As said before, table \ref{ch4summary} summarizes all of the results that we have discussed and obtained. It indicates for each variant of the voting game synthesis problem whether it is solvable in polynomial time ($\mathsf{P}$), or does not have a polynomial time algorithm ($\mathsf{EXP}$). We see that the complexities of three problems remain open:
\begin{itemize}
\item transforming roof-representations of canonical linear games into weighted representations,
\item transforming ceiling-representations of canonical linear games into weighted representations,
\item transforming weighted-representations of weighted voting games into ceiling representations.
\end{itemize}

\begin{table}[htbp]
\caption{Time complexities of the various $(\mathcal{L}_1, \mathcal{L}_2)$-VGS problems that we have discussed in this section.}
\label{ch4summary}
\begin{center}
\begin{tabular}{r||c|c|c|c|c|c|c|}
$\mathcal{L}_2 \rightarrow$ & $\mathcal{L}_W$ & $\mathcal{L}_{W,\min}$ & $\mathcal{L}_L$ & $\mathcal{L}_{L,\max}$ & $\mathcal{L}_{\mathsf{roof}}$ & $\mathcal{L}_{\mathsf{ceil}}$ & $\mathcal{L}_{\mathsf{weights}}$ \\
$\mathcal{L}_1 \downarrow$ & & & & & & & \\
\hline\hline
$\mathcal{L}_W$ & - & $\mathsf{P}$ & $\mathsf{EXP}$ & $\mathsf{P}$ & $\mathsf{P}$ & $\mathsf{P}$ & $\mathsf{P}$ \\
\hline
$\mathcal{L}_{W,\min}$ & $\mathsf{EXP}$ & - & $\mathsf{EXP}$ & $\mathsf{EXP}$ ($\mathsf{P}$ if linear) & $\mathsf{P}$ & $\mathsf{P}$ & $\mathsf{P}$ \\
\hline
$\mathcal{L}_L$ & $\mathsf{EXP}$ & $\mathsf{P}$ & - & $\mathsf{P}$ & $\mathsf{P}$ & $\mathsf{P}$ & $\mathsf{P}$ \\
\hline
$\mathcal{L}_{L,\max}$ & $\mathsf{EXP}$ & $\mathsf{EXP}$ ($\mathsf{P}$ if linear) & $\mathsf{EXP}$ & - & $\mathsf{P}$ & $\mathsf{P}$ & $\mathsf{P}$ \\
\hline
$\mathcal{L}_{\mathsf{roof}}$ & $\mathsf{EXP}$ & $\mathsf{EXP}$ & $\mathsf{EXP}$ & $\mathsf{EXP}$ & - & $\mathsf{EXP}$ & $\mathsf{?}$ \\
\hline
$\mathcal{L}_{\mathsf{ceil}}$ & $\mathsf{EXP}$ & $\mathsf{EXP}$ & $\mathsf{EXP}$ & $\mathsf{EXP}$ & $\mathsf{EXP}$ & - & $\mathsf{?}$ \\
\hline
$\mathcal{L}_{\mathsf{weights}}$ & $\mathsf{EXP}$ & $\mathsf{EXP}$ & $\mathsf{EXP}$ & $\mathsf{EXP}$ & $\mathsf{EXP} $ & $\mathsf{?}$ & - \\
\hline
\end{tabular}
\end{center}
\end{table}

\section{Solving the power index voting game design problem}\label{sec:PVGDsolving}

From the existing literature on the power index voting game design problem, we see that researchers have only considered heuristic methods for the case where a weighted representation must be output. Even stronger: the weighted representation is the only representation that current voting game design algorithms \textit{internally} work with. No other methods of representing a game have even been considered.

There exists an \emph{infinite} number of weighted representations for each weighted voting game (this follows from Proposition~\ref{wvginvmult}). This makes it hard to derive an exact algorithm that is based on working with weighted representations alone, since there is no clear finite set of weight vectors that an algorithm can search through.\footnote{However, the literature does provide us with bounds on the maximum weight necessary in an integer representation of a weighted voting game, and we could utilize this in order to come up with an enumeration algorithm based on generating a finite set of integer weighted representations. We elaborate on this in \ref{wvgdesign}.}

Nevertheless, it turns out that we can fortunately answer this question positively: there do exist exact algorithms for voting game design problems.
What follows in this section, is a study of exact algorithms for some power index voting game design problems. Of course, the most important among these problems is the variant in which we must find a weighted voting game, and output it in a weighted representation.

We approach the voting game design problem by devising an enumeration method that generates every voting game relatively efficiently.
First, we devise a ``naive'' method that enumerates all monotonic simple games in doubly exponential time (Section~\ref{mongamedesign}).
Subsequently, in Section~\ref{wvgdesign}, for the case of weighted voting games, we improve on this runtime exponentially by showing how to enumerate all weighted voting games within exponential time.
Although the runtime of this enumeration method is still exponential, we will see that it is efficient in the sense that it is polynomial in the number of games output. Moreover, the algorithm for the power index weighted voting game problem that results from this has the \textit{anytime} property: the longer we run it, the better the result becomes, and eventually the optimum solution is output.
The enumeration method is based on exploiting a new specific partial order on the class of weighted voting games.

Because we will be dealing with exponential algorithms, we make use of the $O^*$-notation: A function $f : \mathbb{R} \rightarrow \mathbb{R}$ is in $O^*(g)$ for some $g : \mathbb{R} \rightarrow \mathbb{R}$ if and only if there is a polynomial $p :\mathbb{R} \rightarrow \mathbb{R}$ such that $f \in O(g \cdot p)$. This essentially means that we make light of polynomial factors.

\subsection{Monotonic simple game design}\label{mongamedesign}
In this section we will consider the power index voting game design problem for the class of monotonic simple games $\mathcal{G}_{\mathsf{mon}}$.
There are four representation languages that can be used for monotonic simple games:
\begin{itemize}
\item $\mathcal{L}_{W}$, the winning coalition listing;
\item $\mathcal{L}_{L}$, the losing coalition listing;
\item $\mathcal{L}_{W,\min}$, the minimal winning coalition listing;
\item $\mathcal{L}_{L,\max}$, the maximal losing coalition listing.
\end{itemize}
From these languages, we obtain the following four different power index voting game design problems: $(g, \mathcal{G}_{\mathsf{mon}}, \mathcal{L}_{W})$-PVGD, $(g, \mathcal{G}_{\mathsf{mon}}, \mathcal{L}_{L})$-PVGD, $(g, \mathcal{G}_{\mathsf{mon}}, \mathcal{L}_{W, \min})$-PVGD and $(g, \mathcal{G}_{\mathsf{mon}}, \mathcal{L}_{L, \max})$-PVGD. For $g$ we can then choose any power index. Of these problems, the cases of $\mathcal{L}_{W, \min}$ and $\mathcal{L}_{L, \max}$ are the most interesting, because these languages both define the class of monotonic simple games.

We do not know of any practical situations in which this problem occurs. Therefore, we will only address this problem briefly and show for theoretical purposes that the optimal answer is computable. We do this by providing an exact algorithm.

An exact algorithm that solves $(g, \mathcal{G}_{\mathsf{mon}}, \mathcal{L}_{W, \min})$-PVGD or $(g, \mathcal{G}_{\mathsf{mon}}, \mathcal{L}_{L, \max})$-PVGD must search for the antichain of coalitions that represents the game that has a power index closest to the target power index.
This antichain of coalitions could either be a set of minimal winning coalitions, or a set of maximal losing coalitions. In either way, a simple exact algorithm for this problem would be one that considers every possible antichain, and computes for each antichain the power index for the game that the antichain represents.

Algorithm \ref{alg:monotonic} describes the process more precisely for the case that the representation language is $\mathcal{L}_{W, \min}$. We will focus on $\mathcal{L}_{W, \min}$ from now on, because the case for $\mathcal{L}_{L, \max}$ is symmetric. An algorithm for the languages $\mathcal{L}_{W}$ and $\mathcal{L}_{L}$ can be obtained by applying the transformation algorithm discussed in the last section.

\begin{algorithm}
\caption[A straightforward algorithm for solving $(g, \mathcal{G}_{\mathsf{mon}}, \mathcal{L}_{W, \min})$-PVGD.]{A straightforward algorithm for solving $(g, \mathcal{G}_{\mathsf{mon}}, \mathcal{L}_{W, \min})$-PVGD. The input is a target power index $\vec{p} = (p_1, \ldots, p_n)$. The output is an $\ell \in \mathcal{L}_{W, \min}$ such that $g(G_\ell)$ is as close as possible to $\vec{p}$.}
\label{alg:monotonic}
\begin{algorithmic}[1]
\STATE $\mathsf{bestgame} := 0$ \COMMENT{$\mathsf{bestgame}$ keeps track of
the best game that we have found, represented as a string in $\mathcal{L}_{W,
\min}$.}
\STATE $\mathsf{besterror} := \infty$ \COMMENT{$\mathsf{besterror}$ is the
error of $g(G_{\mathsf{bestgame}})$ from $\vec{p}$, according to the
sum-of-squared-errors measure.}
\FORALL{$\ell \in \mathcal{L}_{W, \min}$}
\STATE Compute $g(G_\ell) = (g(G_\ell, 1), \ldots, g(G_\ell, n))$.
\STATE $\mathsf{error} := \sum_{i=1}^n (g(G_\ell, i) - p_i)^2$.
\IF{$\mathsf{error} < \mathsf{besterror}$}
\STATE $\mathsf{bestgame} := \ell$
\STATE $\mathsf{besterror} := \mathsf{error}$
\ENDIF
\ENDFOR
\RETURN $\mathsf{bestgame}$
\end{algorithmic}
\end{algorithm}

From line 3, we see that we need to enumerate all antichains on the grand coalition. As we already said in Section \ref{cardinalities}, the number of antichains we need to enumerate is $D_n$, the $n$th Dedekind number.
Because the sequence of Dedekind numbers $(D_n)$ quickly grows very large, line 3 is what gives the algorithm a very high time complexity. The following bounds are known \cite{dedekindbounds}:
\begin{equation}\label{dedekindbounds}
2^{(1 + c'\frac{\log n}{n})E_n} \geq D_n \geq 2^{(1 + c 2^{-n/2})E_n},
\end{equation}
where $c'$ and $c$ are constants and $E_n$ is the size of the largest antichain on an $n$-set.\footnote{Korshunov devised an asymptotically equal expression \cite{korshunov}:
\[D_n \sim  2^{C(n)} e^{c(n) 2^{-n/2} + n^2 2^{-n-5} - n 2^{-n-4}}
\]
with $C(n) = \binom{n}{\lfloor n/2 \rfloor}$ and $c(n) = \binom{n}{\lfloor n/2 \rfloor - 1}$. In \cite{korshunov}, this expression is described as the number of monotonic boolean functions, which is equal to the $n$th Dedekind number.} Sperner's theorem \cite{spernerstheorem} tells us the following about $E_n$:
\begin{theorem}[Sperner's theorem]\label{spernerstheorem}
\begin{equation*}
E_n = \binom{n}{\lfloor n/2 \rfloor}.
\end{equation*}
\end{theorem}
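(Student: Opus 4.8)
The plan is to prove the two inequalities $E_n \ge \binom{n}{\lfloor n/2\rfloor}$ and $E_n \le \binom{n}{\lfloor n/2\rfloor}$ separately, recalling that $E_n$ is by definition the maximum size of an antichain in the poset $(2^{[n]}, \subseteq)$ (this is the poset relevant to Dedekind's problem discussed just above).

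For the lower bound, I would simply exhibit an antichain of the required size: take $\mathcal{A}^\star = \{ S \subseteq [n] : |S| = \lfloor n/2 \rfloor \}$. Any two distinct members of $\mathcal{A}^\star$ have the same cardinality, so neither can be a subset of the other; hence $\mathcal{A}^\star$ is an antichain, and $|\mathcal{A}^\star| = \binom{n}{\lfloor n/2\rfloor}$. This already gives $E_n \ge \binom{n}{\lfloor n/2\rfloor}$.

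For the upper bound I would use the LYM (Lubell) counting argument. Call a chain $\varnothing = C_0 \subset C_1 \subset \dots \subset C_n = [n]$ with $|C_i| = i$ a \emph{maximal chain}; there are exactly $n!$ of these, one for each linear order on the ground set $[n]$. For a fixed $S \subseteq [n]$ with $|S| = k$, the number of maximal chains passing through $S$ is $k!\,(n-k)!$: choose an order in which to insert the elements of $S$ (building $C_0 \subset \dots \subset C_k = S$) and independently an order in which to insert the elements of $[n]\setminus S$. Now let $\mathcal{A}$ be any antichain. Since a maximal chain is totally ordered by $\subseteq$, it contains at most one member of $\mathcal{A}$; summing over maximal chains therefore gives
\begin{equation*}
\sum_{S \in \mathcal{A}} |S|!\,(n-|S|)! \;\le\; n!.
\end{equation*}
Dividing by $n!$ yields the LYM inequality $\sum_{S \in \mathcal{A}} \binom{n}{|S|}^{-1} \le 1$. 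Finally, the binomial coefficients are unimodal and maximized at the middle — this follows from $\binom{n}{k+1}/\binom{n}{k} = (n-k)/(k+1)$, which is $\ge 1$ exactly when $k \le (n-1)/2$ — so $\binom{n}{|S|} \le \binom{n}{\lfloor n/2\rfloor}$ for every $S$, hence each term satisfies $\binom{n}{|S|}^{-1} \ge \binom{n}{\lfloor n/2\rfloor}^{-1}$. Combining, $|\mathcal{A}| \cdot \binom{n}{\lfloor n/2\rfloor}^{-1} \le 1$, i.e. $|\mathcal{A}| \le \binom{n}{\lfloor n/2\rfloor}$. Taking the maximum over antichains $\mathcal{A}$ gives $E_n \le \binom{n}{\lfloor n/2\rfloor}$, and together with the lower bound this proves $E_n = \binom{n}{\lfloor n/2\rfloor}$.

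There is no real obstacle here; the only points requiring a line of justification are the count of maximal chains through a fixed set and the unimodality of $\binom{n}{k}$, both of which are elementary. (An alternative, slightly more structural route would be to invoke a symmetric chain decomposition of the Boolean lattice into $\binom{n}{\lfloor n/2\rfloor}$ chains together with Dilworth's/Mirsky's theorem, but the Lubell argument above is shorter and entirely self-contained, so that is the approach I would present.)
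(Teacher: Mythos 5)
Your proof is correct. Note, however, that the paper does not prove this statement at all: it invokes Sperner's theorem as a classical result (with a citation) and only uses the conclusion, via Stirling's approximation, to bound the size of the largest antichain $E_n$ and hence the Dedekind numbers. So there is no in-paper argument to compare against; what you have supplied is a self-contained proof where the paper relies on a reference. Your argument is the standard Lubell (LYM) proof and all the steps check out: the middle layer $\{S \subseteq \{1,\dots,n\} : |S| = \lfloor n/2 \rfloor\}$ gives the lower bound, the count of $k!\,(n-k)!$ maximal chains through a fixed $k$-set and the fact that a maximal chain meets an antichain at most once yield $\sum_{S \in \mathcal{A}} \binom{n}{|S|}^{-1} \le 1$, and unimodality of the binomial coefficients (justified by the ratio $\binom{n}{k+1}/\binom{n}{k} = (n-k)/(k+1)$) converts this into $|\mathcal{A}| \le \binom{n}{\lfloor n/2 \rfloor}$. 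This is a perfectly appropriate way to make the statement self-contained, and it is shorter than the alternative symmetric-chain-decomposition route you mention.
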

From Sperner's theorem and Stirling's approximation, we get
\begin{equation}\label{tightboundantichain}
E_n \in \Theta \left( \frac{2^n}{\sqrt{n}} \right).
\end{equation}
We conclude that $D_n$ is doubly exponential in $n$. Algorithm \ref{alg:monotonic} therefore achieves a running time in $\Theta^*(2^{2^n} \cdot h(n))$, where $h(n)$ is the time it takes to execute one iteration of the for-loop in Algorithm \ref{alg:monotonic}. The function $h$ is an exponential function for all popular power indices (e.g., the Shapley-Shubik index and the Banzhaf index) (\cite{harisazizinfluence,dengpapadimitriou,prasadkelly1990}).

\begin{remark}
The following is an interesting related problem: Apart from the fact that $D_n$ is very large, we do not even know of an output-polynomial time procedure to enumerate all antichains on a set of $n$ elements: The simplest way to enumerate antichains would be to enumerate each possible family of coalitions, and check if that family is an antichain. Unfortunately, this method does not run in output-polynomial time.  In total, there are $2^{2^n}$ families of coalitions. Substituting the tight bound of (\ref{tightboundantichain}) into the upper bound of (\ref{dedekindbounds}), we get
\begin{equation}\label{lowerdedekindbound}
D_n \leq 2^{(1 + c' \frac{\log n}{n})k\frac{2^n}{\sqrt{n}}}
\end{equation}
for some constants $k$ and $c'$.

Exponentiating both sides of the above inequality by $\frac{\sqrt{n}}{(1 + c' \frac{\log n}{n})k}$, we see that
\begin{equation*}
D_n^{\frac{\sqrt{n}}{(1 + c' \frac{\log n}{n})k}} \leq 2^{2^n}.
\end{equation*}
This means that the number of families of subsets on an $n$-set (i.e., the right hand side of the above inequality) is super-polynomial in $n$ relative to the Dedekind number. This enumeration algorithm does thus not run in output-polynomial time, even though the $n$th Dedekind number and the number of families of coalitions on a set of $n$ elements, are both doubly-exponential. We leave this as an open problem.
\end{remark}

\subsection{Weighted voting game design}\label{wvgdesign}
Having given a simple but very slow algorithm for the PVGD-problem for the very general class of monotonic simple games, we will now see that we can do much better if we restrict the problem to smaller classes of simple games. More precisely, we will restrict ourselves to the class of weighted voting games: $\mathcal{G}_{\mathsf{wvg}}$. This class is contained in the class of linear games $\mathcal{G}_{\mathsf{lin}}$, and therefore also contained in the class of monotonic simple games $\mathcal{G}_{\mathsf{mon}}$. For this reason, we can represent a game in $\mathcal{G}_{\mathsf{wvg}}$ using any representation language that we have introduced.

As has been said in Section \ref{knownmethods}, the known literature on voting game design problems has focused on this specific variant, $(g, \mathcal{G}_{\mathsf{vwg}}, \mathcal{L}_{\mathsf{weights}})$-PVGD, with $f$ being either the Banzhaf index or the Shapley-Shubik index. Here, we will give an exact algorithm for this problem that runs in exponential time. What will turn out to make this algorithm interesting for practical purposes, is that it can be used as an anytime algorithm: we can stop execution of this algorithm at any time, but the longer we run it, the closer the answer will be to the optimum.
The advantage of this algorithm over the current local search methods is obviously that we will not get stuck in local optima, and it is guaranteed that we eventually find the optimal answer.

\subsubsection{Preliminary considerations}
Before proceeding with formally stating the algorithm, let us first address the question of which approach to take in order to find an exact algorithm for designing weighted voting games.

A possible approach to solve the $(g, \mathcal{G}_{\mathsf{vwg}}, \mathcal{L}_{\mathsf{weights}})$-PVGD problem is to use Algorithm \ref{alg:monotonic} as our basis, and check for each monotonic simple game that we find whether it is a weighted voting game. We do the latter by making use of the Hop-Skip-and-Jump algorithm that we described in Section \ref{weightedvgs}. This indeed results in an algorithm that solves the problem, but this algorithm would be highly unsatisfactory: firstly, we noted in the previous section that it is not known how to enumerate antichains efficiently. Secondly, the class of weighted voting games is a subclass of the class of monotonic simple games: in fact, we will see that there are far less weighted voting games than monotonic simple games.

The main problem we face for the $(g, \mathcal{G}_{\mathsf{vwg}}, \mathcal{L}_{\mathsf{weights}})$-PVGD problem is the fact that every weighted voting game $G \in \mathcal{G}_{\mathsf{wvg}}$ has an infinite number of weighted representations, i.e., strings in $\mathcal{L}_{\mathsf{weights}}$ that represent $G$. This is easily seen from Proposition~\ref{wvginvmult}: we can multiply the weight vector and the quota with any constant in order to obtain a new weight vector that represents the same game. On top of that, it is also possible to increase or decrease a player's weight by some amount without ``changing the game.''

Theorem 9.3.2.1 of \cite{murogathreshold} provides us with a solution to this problem, as it tells us that for every weighted voting game there exists an integer weighted representation where none of the weights nor the quota exceeds $2^{n \log n}$. This means that a possible enumeration algorithm could work by iterating over all of the $2^{(n+1)n \log n}$ integer weight vectors that have weights that fall within these bounds, and output a weight vector in case it corresponds to a weighted voting game that has not been output before. This yields an improvement over the enumeration algorithm outlined in Section \ref{mongamedesign} for the special case of weighted voting games. But we still do not consider this a satisfactory enumeration procedure because the runtime of this algorithm is still significantly larger than the known upper bounds on the number of weighted voting games. The enumeration algorithm that we propose below has a better runtime, and indeed has the property that it is also efficient in the 
sense runs in time polynomial in the number of weighted voting games it outputs. Our algorithm does not rely on weighted representations of weighted voting games; instead it works by representing weighted voting games by their sets of minimal winning coalitions.

\subsubsection{A new structural property for the class of weighted voting games}

Let us now develop the necessary theory behind the algorithm that we will propose. We will focus only on the class of \emph{canonical} weighted voting games, since for each non-canonical weighted voting game there is a canonical one that can be obtained by merely permuting the players.

The algorithm we will propose is based on a new structural property that allows us to enumerate the class of canonical weighted voting games efficiently: We will define a new relation $\preceq_{\mathsf{MWC}}$ and we will prove that for any number of players $n$ the class $\mathcal{G}_{\mathsf{cwvg}}(n)$ forms a graded poset with a least element under this relation.

\begin{definition}[$\preceq_{\mathsf{MWC}}$]
Let $G, G' \in \mathcal{G}_{\mathsf{cwvg}}(n)$ be any two canonical weighted voting games. We define $G \preceq_{\mathsf{MWC}} G'$ to hold if and only if there exists for some $k \in \mathbb{N}_{\geq 1}$ a sequence $G_1, \ldots, G_k$ of canonical weighted voting games of $n$ players, such that $G = G_1$, $G' = G_k$, and for $1 \leq i < k$ it holds that $W_{\min,i} \subset W_{\min,i+1}$ and $|W_{\min,i}| = |W_{\min,i+1}| - 1$, where $W_{\min,i}$ denotes the set of minimal winning coalitions of $G_i$.
\end{definition}

The following theorem provides the foundation for our enumeration algorithm.
\begin{theorem}\label{the:wvgposet}
For each $n$, $(\mathcal{G}_{\mathsf{cwvg}}(n), \preceq_{\mathsf{MWC}})$ is a graded poset with rank function
\begin{equation*}
\begin{array}{rclrcl}
\rho & : & \mathcal{G}_{\mathsf{cwvg}}(n) & \rightarrow & \mathbb{N} \\
 & & G & \mapsto & |W_{\min,G}|,
\end{array}
\end{equation*}
where $W_{\min,G}$ is the set of minimal winning coalitions of $G$. Moreover, the poset $(\mathcal{G}_{\mathsf{cwvg}}(n), \preceq_{\mathsf{MWC}})$ has a least element of rank $0$.
\end{theorem}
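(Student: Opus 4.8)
The plan is to verify the three defining conditions of a graded poset (Definition \ref{def:posets}) for $(\mathcal{G}_{\mathsf{cwvg}}(n), \preceq_{\mathsf{MWC}})$ with the candidate rank function $\rho(G) = |W_{\min,G}|$, and then exhibit the least element. The first task is to check that $\preceq_{\mathsf{MWC}}$ is actually a partial order. Reflexivity is immediate (take $k=1$). Transitivity is immediate too, since we can concatenate the two defining sequences. The real content is \emph{antisymmetry}: if $G \preceq_{\mathsf{MWC}} G'$ and $G' \preceq_{\mathsf{MWC}} G$, then along the first chain $|W_{\min}|$ is non-decreasing (it increases by exactly one at each step) and along the second chain it is non-decreasing again, so $|W_{\min,G}| = |W_{\min,G'}|$, forcing $k=1$ in both chains and hence $G = G'$. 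Note this already shows $\rho$ satisfies condition (ii): $G \preceq_{\mathsf{MWC}} G'$ implies $\rho(G) \le \rho(G')$ by telescoping the $+1$ increments.

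Next I would pin down what it means for $G'$ to \emph{cover} $G$. If $G'$ covers $G$, then there is a length-$k$ chain from $G$ to $G'$; if $k \ge 3$ the intermediate game $G_2$ witnesses $G \preceq_{\mathsf{MWC}} G_2 \preceq_{\mathsf{MWC}} G'$ with $G \ne G_2 \ne G'$ (strict because the $W_{\min}$-sizes differ), contradicting the covering definition, so $k = 2$: covering means precisely that $W_{\min,G} \subset W_{\min,G'}$ with $|W_{\min,G'}| = |W_{\min,G}| + 1$. Then condition (iii) follows instantly: $\rho(G') = \rho(G) + 1$. For condition (i), I need $\rho$ to be constant on minimal elements. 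The natural claim is that there is a \emph{unique} minimal element, namely the game with $W_{\min} = \varnothing$ (the game in which every coalition is losing — recall the paper deliberately includes this game), which has rank $0$; a unique minimal element is trivially a \emph{least} element, and constancy of $\rho$ on the singleton set of minimal elements is vacuous. So the crux reduces to: \textbf{every canonical weighted voting game $G$ with $W_{\min,G} \ne \varnothing$ has a canonical weighted voting game $G''$ with $W_{\min,G''} \subset W_{\min,G}$ and $|W_{\min,G''}| = |W_{\min,G}| - 1$} — equivalently, we can always delete one minimal winning coalition from a canonical WVG and still have a canonical WVG.

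I expect this last step to be the main obstacle, and it is exactly where weightedness must be used crucially (as the authors hint in the related-work discussion of \cite{krohnsudholter}). The idea would be: pick a weighted representation $[q; w_1, \ldots, w_n]$ of $G$ with a non-increasing weight vector, look at the minimal winning coalitions $S$ whose weight $w(S)$ is \emph{maximal} (closest to "most winning"), or rather minimal among winning coalitions — one wants to remove a coalition that is, in a suitable sense, "easiest to flip to losing." The candidate to delete is a roof coalition, or more precisely a minimal winning coalition $S$ achieving the minimum of $w(S)$ over $W_{\min,G}$; one then slightly raises the quota to $q' = w(S) + \varepsilon$ for small $\varepsilon > 0$ (smaller than the gap to the next weight value achieved by any coalition), argue that this turns $S$ and only $S$ from winning to losing, that the resulting game is still canonical (the desirability order is unchanged since the weights are unchanged), and that its minimal winning coalitions are exactly $W_{\min,G} \setminus \{S\}$. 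The delicate points are (a) showing exactly one coalition flips, which needs a careful choice of which minimal winning coalition to target and of $\varepsilon$ using the finitely many distinct coalition-weights, and (b) confirming that no \emph{new} minimal winning coalitions appear — they cannot, since raising the quota only shrinks the set of winning coalitions, so $W_{\min,G''} \subseteq W_{\min,G} \setminus \{S\}$, and one needs the reverse inclusion, i.e. that every other member of $W_{\min,G}$ remains winning \emph{and} minimal under $q'$, which follows because their weights strictly exceed $w(S)$ by the minimality choice and hence are $\ge q'$ for $\varepsilon$ small. Once this removal lemma is in hand, iterating it down to $W_{\min} = \varnothing$ shows both that the only minimal element is the empty-$W_{\min}$ game (establishing the least element of rank $0$) and, combined with the covering characterization above, completes the verification that $\rho$ is a valid rank function.
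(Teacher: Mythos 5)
Your overall scaffolding (verifying antisymmetry, characterizing covers, reducing everything to a ``removal lemma'' --- every canonical WVG with $W_{\min}\neq\emptyset$ has a canonical WVG whose MWC set is $W_{\min}$ minus one coalition --- and using a perturbation to make all coalition weights distinct) matches the paper's proof, which establishes exactly this removal lemma (Lemma~\ref{lem:coalitionremoval}) via Lemmas~\ref{lem:weightchange} and~\ref{lem:weightsalldiffer}. The gap is in your construction for the removal lemma itself, specifically point (b). You claim that raising the quota to $q'=w(S)+\varepsilon$ cannot create new minimal winning coalitions ``since raising the quota only shrinks the set of winning coalitions, so $W_{\min,G''}\subseteq W_{\min,G}\setminus\{S\}$.'' This inference is false: shrinking the \emph{winning} family can promote a formerly non-minimal winning coalition to minimal status. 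Concretely, take $G=[5;3,2,1]$, a canonical WVG with $W_{\min,G}=\{\{1,2\}\}$ and winning coalitions $\{1,2\}$ and $\{1,2,3\}$. In any weighted representation of $G$ with distinct coalition weights all weights are strictly positive, so $\{1,2\}$ is the unique lightest winning coalition; raising the quota just above its weight removes only $\{1,2\}$ from the winning family, and the resulting game has $W_{\min}=\{\{1,2,3\}\}$, not $W_{\min,G}\setminus\{S\}=\emptyset$. So for this $G$ your construction can never produce the required covered game (which must be the all-losing game), and the removal lemma --- the crux you correctly identified --- is not established. In general, whenever some superset $S\cup\{j\}$ of the deleted coalition has $S$ as its only winning proper subset, quota-raising manufactures a new MWC.

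The paper's construction avoids exactly this: instead of moving the quota, it fixes a representation with distinct coalition weights, takes $i$ to be the \emph{least desirable nondummy} player, assumes WLOG that all players after $i$ have weight $0$, and deletes the lightest MWC containing $i$ by decreasing $w_i$ slightly below the point where that coalition meets the quota. Because the players outside the support have weight zero, any superset of the deleted coalition by such players loses weight together with it, and the case analysis in the proof of Lemma~\ref{lem:coalitionremoval} (splitting on whether a putative new MWC contains $i$) shows no new minimal winning coalition can arise. If you want to rescue your plan, you essentially have to switch to a weight-decreasing construction of this kind; the quota-raising idea cannot be repaired by a cleverer choice of $S$ or $\varepsilon$, as the example above shows.
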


\begin{proof}[Proof of Theorem \ref{the:wvgposet}.]
By the properties of the $\preceq_{\mathsf{MWC}}$-relation, $(\mathcal{G}_{\mathsf{cwvg}}(n), \preceq_{\mathsf{MWC}})$ is a valid poset. In order to show that the poset is also \emph{graded} under the rank function $\rho$ specified in the theorem, we will prove the following lemma constructively.
\begin{lemma}\label{lem:coalitionremoval}
For every game $G \in \mathcal{G}_{\mathsf{cwvg}}(n)$ with a nonempty set $W_{\min,G}$ as its set of minimal winning coalitions, there is a coalition $C \in W_{\min,G}$ and a game $G' \in \mathcal{G}_{\mathsf{cwvg}}(n)$ so that $W_{\min,G} \setminus \{C\}$ is the set of minimal winning coalitions of $G'$.
\end{lemma}
From Lemma \ref{lem:coalitionremoval}, the claim follows: It implies that there is a unique minimal element, which is the game with no minimal winning coalitions. This implies that the poset has a least element, and it implies that condition \emph{(i.)} in Definition \ref{def:posets} holds. Condition \emph{(ii.)} and \emph{(iii.)} follow immediately from the definitions of $\preceq_{\mathsf{MWC}}$ and $\rho$.

To prove Lemma \ref{lem:coalitionremoval}, we first prove the following two preliminary lemmas (\ref{lem:weightchange} and \ref{lem:weightsalldiffer}).

\begin{lemma}\label{lem:weightchange}
Let $G = (N = \{1, \ldots, n\}, v)$ be a weighted voting game, and let $\ell = [q; w_1, \ldots, w_n]$ be a weighted representation for $G$. For each player $i$ there exists an $\epsilon > 0$ such that for all $\epsilon' < \epsilon$, the vector $\ell' = [q; w_1, \ldots, w_i + \epsilon', w_{i+1}, \ldots, w_n]$ is also a weighted representation for $G$.
\end{lemma}
Informally, this lemma states that it is always possible to increase the weight of a player by some amount without changing the game.
\begin{proof}
Let $w_{\max} = \max\{C \subseteq N : v(C) = 0\}$ and $w_{\min} = \min\{C \subseteq N : v(C) = 1\}$.
Take $\epsilon = w_{\min} - w_{\max}$ and note that $\epsilon > 0$. Increasing any player's weight by any positive amount $\epsilon'$ that is less than $\epsilon$ does not turn any losing coalition in a winning coalition. Obviously, as $\epsilon' > 0$, this change of weight also does not turn any winning coalition into a losing coalition.
\end{proof}

\begin{lemma}\label{lem:weightsalldiffer}
Let $G = (N = \{1, \ldots, n\}, v)$ be a weighted voting game. There exists a weighted representation $\ell$ for $G$ such that for all $C, C' \in 2^N$, $C \not= C'$, for which $v(C) = v(C') = 1$, it holds that $w_\ell(C) \not= w_\ell(C')$.
\end{lemma}
Or, informally stated: for every weighted voting game, there exists a weighted representation such that all winning coalitions have a different weight.
\begin{proof}
Let $\ell' = [q; w_1, \ldots, w_n]$ be a weighted representation for $G$. Fix an arbitrary player $i$. By Lemma $\ref{lem:weightchange}$, there is an $\epsilon > 0$ such that increasing $w_i$ by any value $\epsilon' \in (0, \epsilon)$ will result in another weighted representation for $G$.
Let $\mathcal{E}$ be the set of choices for $\epsilon'$ such that increasing $w_i$ by $\epsilon'$ yields a weighted representation $\ell$ where there are two coalitions $C, C' \in 2^{N}$, $i \in C$, $i \not\in C'$, that have the same weight under $\ell$. There are finitely many such pairs $(C,C')$ so $\mathcal{E}$ is finite and therefore $(0, \epsilon) \backslash \mathcal{E}$ is non-empty. By picking for $\epsilon'$ any value in $(0, \epsilon) \backslash \mathcal{E}$, and increasing the weight of player $i$ by $\epsilon'$, we thus end up with a weighting $\ell$ in which there is no coalition $C$ containing $i$ such that $w_\ell(C)$ is equal to any coalition not containing $i$. Furthermore, if $C, C'$ are two arbitrary coalitions that have distinct weight under $\ell'$, then certainly they will have distinct weight under $\ell$.

Therefore, by sequentially applying the above operation for all $i \in N$, we end up with a weighting $\ell$ for which it holds for every player $i$ that there is no coalition $C$ containing $i$ such that $w_{\ell}(C)$ is equal to the weight of any coalition not containing $i$. This implies that all coalitions have a different weight under $\ell$, and completes the proof.
\end{proof}

Using Lemma \ref{lem:weightsalldiffer}, we can prove Lemma \ref{lem:coalitionremoval}, which establishes Theorem \ref{the:wvgposet}.
\begin{proof}[Proof of Lemma~\ref{lem:coalitionremoval}]
Let $G = (\{1, \ldots, n\}, v)$ be a canonical weighted voting game. Let $W_{\min,G}$ be its set of minimal winning coalitions and let $\ell = [q; w_1, \ldots, w_n]$ be a weighted representation for which it holds that all winning coalitions have a different weight. By Lemma \ref{lem:weightsalldiffer}, such a representation exists. We will construct an $\ell''$ from $\ell$ for which it holds that it is a weighted representation of a canonical weighted voting game with $W_{\min,G} \setminus\{C\}$ as its list of minimal winning coalitions, for some $C \in W_{\min,G}$.

Let $i$ be the highest-numbered player that is in a coalition in $W_{\min,G}$, i.e., $i$ is a least desirable nondummy player.
We may assume without loss of generality that $w_j = 0$ for all $j > i$.
Let $C \in W_{\min,G}$ be the minimal winning coalition containing $i$ 
with the lowest weight among all MWCs in $W_{\min,G}$ that contain $i$.

Next, define $\ell'$ as $[q; w_1, \ldots, w_i - (w_\ell(C) - q), \ldots, w_n]$. Note that under $\ell'$ the weights of the players are still decreasing and nonnegative: we have that $w_{\ell}(C \backslash\{i\}) < q$ (due to $C \backslash\{i\}$ being a losing coalition, because $C$ is a MWC). This implies  $w_{\ell}(C \backslash\{i\}) + w_i = w_{\ell}(C) < q + w_i$, and is equivalent to $w_i > w_{\ell}(C) - q$ (i.e., the new weight of player $i$ is indeed nonnegative under $\ell'$. Player $i$'s weight is decreased in $\ell'$, but not by enough to make losing even the lightest of all MWCs that contain $i$, so now $G_{\ell'} = G_\ell = G$ and $w_{\ell'}(C) = q$. Moreover, the weights of the coalitions in $W_{\min,G}$ that contain player $i$ are still mutually distinct under $\ell'$.

We now decrease player $i$'s weight by an amount that is so small that the only minimal winning coalition that turns into a losing coalition is $C$. Note that under $\ell'$, minimal winning coalition $C$ is still the lightest minimal winning coalition containing $i$. Let $C' \in W_{\min}$ be the second-lightest minimal winning coalition containing $i$. Obtain $\ell''$ by decreasing $i$'s weight (according to $\ell'$) by a positive amount smaller than $w_{\ell'}(C') - w_{\ell'}(C)$. Coalition $C$ will become a losing coalition and all other minimal winning coalitions will stay winning. No new minimal winning coalition is introduced in this process: Suppose there would be such a new minimal winning coalition $S$, then $S$ contains only players that are at least as desirable as $i$ (the other players have weight $0$). In case $i \not\in S$, we have that $S$ would also be a minimal winning coalition in the original game $G$ because $w_{\ell''}(S) = w_{\ell}(S) \geq q$, which is a contradiction. In case $i \in S$,
 it must be that $S \backslash \{i\}$ is winning in the original game $G$ (as $i$ was picked to be a least desirable nondummy player). Thus, $w_{\ell''}(S \backslash \{i\} = w_{\ell} (S \backslash \{i\}) \geq q$, and this is in contradiction with $S$ being an MWC in $G_{\ell''}$.

$G_{\ell''}$ is therefore an $n$-player canonical weighted voting game whose set of MWCs form a subset of the MWCs of $G$, and the cardinalities of these two sets  differ by $1$. This proves the claim.
\end{proof}
\end{proof}

In Figure \ref{fourplayerposet}, $(\mathcal{G}_{\mathsf{cwvg}}(4), \preceq_{\mathsf{MWC}})$ is depicted graphically. Note that this is \textit{not} precisely the Hasse diagram of the poset $(\mathcal{G}_{\mathsf{cwvg}}(4), \preceq_{\mathsf{MWC}})$ (see the explanation in the caption of this figure; the reason that we do not give the Hasse diagram is because the Hasse diagram is not a very convenient way of representing $(\mathcal{G}_{\mathsf{cwvg}}(4), \preceq_{\mathsf{MWC}})$).

\begin{figure}
\centering
\includegraphics[scale=0.7,angle=0]{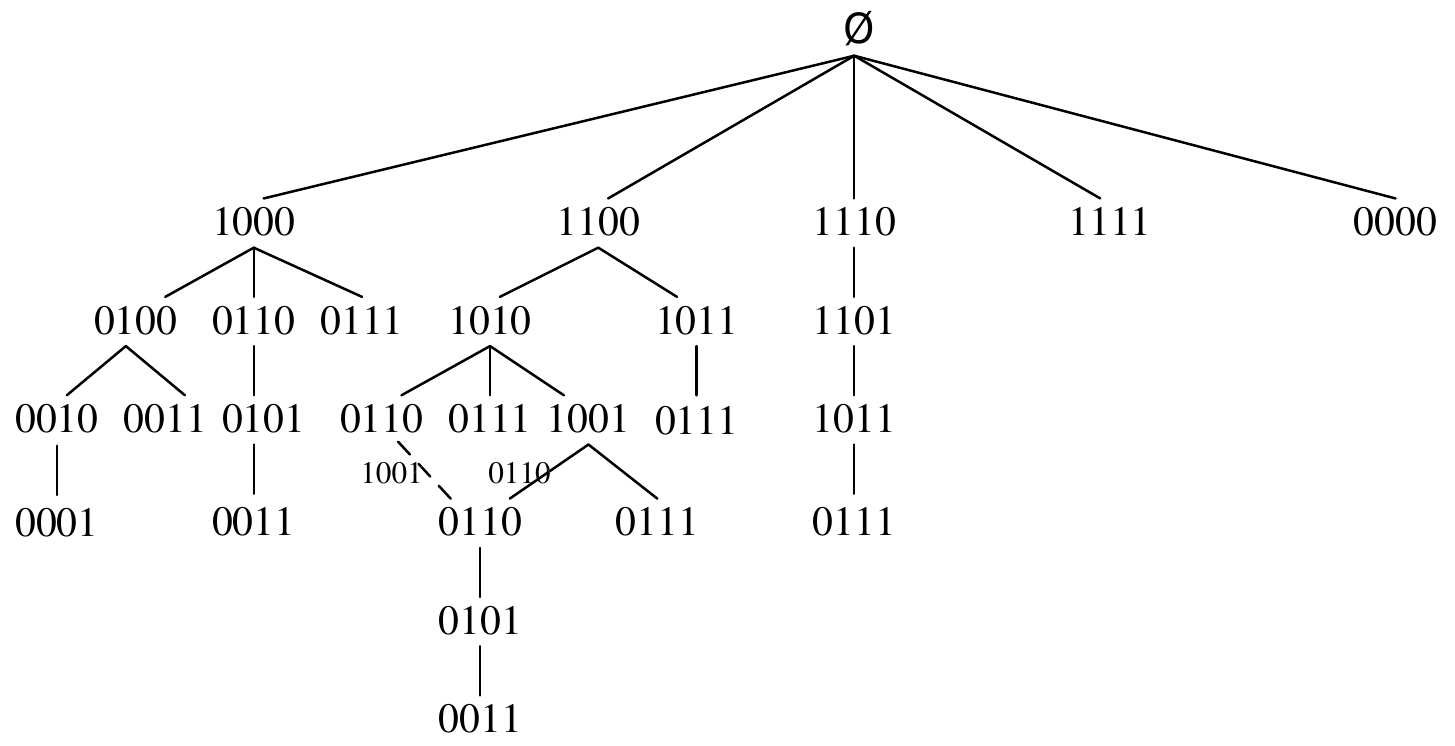}
\caption[Graphical depiction of $(\mathcal{G}_{\mathsf{cwvg}}(4), \preceq_{\mathsf{MWC}})$.]{
Graphical depiction of $(\mathcal{G}_{\mathsf{cwvg}}(4), \preceq_{\mathsf{MWC}})$. Each node in this graph represents a canonical weighted voting game of four players. It should be read as follows: each node has the characteristic vector of a minimal winning coalition as a label. The set of minimal winning coalitions of a game that corresponds to a certain node $n$ in the graph, are those coalitions that are described by the set $V_n$ of vectors  that is obtained by traversing the path from the top node to $n$ along the solid edges. The top node corresponds to the canonical weighted voting game with zero minimal winning coalitions (i.e., every coalition loses). The actual Hasse diagram of this poset can be obtained by changing the label of each node $n$ to $V_n$ and including the solid edges as well as the dashed edge in the diagram.}
\label{fourplayerposet}
\end{figure}
\normalsize

Next, we show that $(\mathcal{G}_{\mathsf{cwvg}}(n), \preceq_{\mathsf{MWC}})$ is not a tree for $n \geq 4$. When we will state our algorithm in the next section, it will turn out that this fact makes things significantly more complicated.
\begin{proposition}\label{the:notatree}
For $n \geq 4$, $(\mathcal{G}_{\mathsf{cwvg}}(n), \preceq_{\mathsf{MWC}})$ is not a tree.
\end{proposition}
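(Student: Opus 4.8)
The plan is to exhibit, for every $n \geq 4$, a single canonical weighted voting game $G \in \mathcal{G}_{\mathsf{cwvg}}(n)$ having two distinct lower covers, which immediately rules out the poset being a tree. Since $(\mathcal{G}_{\mathsf{cwvg}}(n), \preceq_{\mathsf{MWC}})$ has a least element and is graded (Theorem \ref{the:wvgposet}), its Hasse diagram is connected, so producing a single cycle in that diagram suffices; the cleanest way to do this is to display a ``diamond'', i.e.\ four distinct games $H_0, H_1, H_2, G$ such that $G$ covers both $H_1$ and $H_2$ (with $H_1 \neq H_2$) and both $H_1$ and $H_2$ cover $H_0$. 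The key reduction is that, by the definition of $\preceq_{\mathsf{MWC}}$ together with the gradedness of Theorem \ref{the:wvgposet}, a canonical weighted voting game $G$ covers a canonical weighted voting game $H$ exactly when $W_{\min,H} \subset W_{\min,G}$ and $|W_{\min,H}| = |W_{\min,G}| - 1$. So the whole task comes down to choosing an MWC set and two single-coalition deletions from it that all remain realizable by \emph{canonical} weighted voting games.

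For $n = 4$ I would take $G = [4; 3, 2, 2, 1]$, and check (a routine enumeration of the $2^4$ coalitions) that its set of minimal winning coalitions is $W_{\min,G} = \{\{1,2\},\{1,3\},\{1,4\},\{2,3\}\}$; since the weight vector is non-increasing, $G$ is indeed canonical. Next I would verify that the two deletions $W_{\min,G} \setminus \{\{2,3\}\} = \{\{1,2\},\{1,3\},\{1,4\}\}$ and $W_{\min,G} \setminus \{\{1,4\}\} = \{\{1,2\},\{1,3\},\{2,3\}\}$ are again MWC sets of canonical weighted voting games: the former is realized by $[4;3,1,1,1]$ (``player $1$ together with at least one more player''), the latter by $[2;1,1,1,0]$ (``simple majority among $\{1,2,3\}$'', player $4$ a dummy). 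These two games are distinct (their MWC sets differ) and each has exactly one fewer minimal winning coalition than $G$, so $G$ covers both of them. Finally $[3;2,1,1,0]$ has MWC set $\{\{1,2\},\{1,3\}\}$ and is covered by each of $[4;3,1,1,1]$ and $[2;1,1,1,0]$; thus the four distinct games $[3;2,1,1,0]$, $[4;3,1,1,1]$, $[2;1,1,1,0]$, $[4;3,2,2,1]$ form a $4$-cycle in the Hasse diagram, so $(\mathcal{G}_{\mathsf{cwvg}}(4), \preceq_{\mathsf{MWC}})$ is not a tree.

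To pass to arbitrary $n \geq 4$ I would simply pad each of these four weight vectors with $n-4$ trailing zeros, making players $5,\ldots,n$ dummies. The padded vectors are still non-increasing, so all four games stay in $\mathcal{G}_{\mathsf{cwvg}}(n)$; and because a dummy can never belong to a minimal winning coalition, the four MWC sets — hence all four covering relations among these games — are unchanged. So the same diamond sits inside $(\mathcal{G}_{\mathsf{cwvg}}(n), \preceq_{\mathsf{MWC}})$ for every $n \geq 4$.

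The main obstacle is not computation but the choice of witness: the subtlety is the \emph{canonicality} constraint. Many single-coalition deletions from a canonical weighted voting game yield an MWC set that is still weighted but only after permuting the players, so the resulting game is not in $\mathcal{G}_{\mathsf{cwvg}}(n)$; this is precisely why Lemma \ref{lem:coalitionremoval} only guarantees one removable coalition in general. One therefore has to hand-pick $G$ (here $[4;3,2,2,1]$) so that at least two different deletions preserve the desirability order $1 \succeq 2 \succeq \cdots \succeq n$. Once the witness is fixed, the remaining work — confirming that the four displayed weight vectors realize exactly the claimed MWC sets, and that they are non-increasing — is a purely mechanical check.
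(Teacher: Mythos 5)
Your proof is correct and is essentially the paper's own argument: the paper uses exactly the same witness $[4;3,2,2,1]$ with the same two covered games $[4;3,1,1,1]$ and $[2;1,1,1,0]$, and handles $n>4$ by the same dummy-player padding. Your only addition is the common lower game $[3;2,1,1,0]$ completing an explicit diamond, which is a harmless (and slightly more explicit) embellishment of the same idea.
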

\begin{proof}
We will give an example of a game in $(\mathcal{G}_{\mathsf{cwvg}}(4), \preceq_{\mathsf{MWC}})$ that covers multiple games.\footnote{In fact, from inspecting Figure \ref{fourplayerposet} and the explanation given in its caption, it may already be rather obvious to the reader which example game we intend.} A similar example for $n > 4$ is obtained by adding dummy players to the example that we give here.

Consider the following weighted representation of a canonical weighted voting game over players $\{1,2,3,4\}$:
\begin{equation*}
\ell = [4; 3, 2, 2, 1].
\end{equation*}
The set of characteristic vectors $C_{\min, \ell}$ of minimal winning coalitions of $G_{\ell}$ is as follows:
\begin{eqnarray*}
C_{\min, \ell} & = & \{1100, 1010, 0110, 1001\}.
\end{eqnarray*}
Next, consider the weighted voting games $\ell'$ and $\ell''$:
\begin{eqnarray*}
\ell' & = & [4; 3, 1, 1, 1] \\
\ell'' & = & [2; 1, 1, 1, 0],
\end{eqnarray*}
with respectively the following sets of characteristic vectors of minimal winning coalitions:
\begin{eqnarray*}
C_{\min, \ell'} & = & \{1100, 1010, 1001\}, \\
C_{\min, \ell''} & = & \{1100, 1010, 0110\}.
\end{eqnarray*}
It can be seen that $C_{\min, \ell'} = C_{\min, \ell} \setminus \{0110\}$ and $C_{\min, \ell''} = C_{\min, \ell} \setminus \{1001\}$.
\end{proof}

\subsubsection{The algorithm}\label{sec:thealgorithm}
We will use the results from the previous section to develop an exponential-time exact algorithm for $(f, \mathcal{G}_{\mathsf{cwvg}}, \mathcal{L}_{W,\min})$-PVGD, and also for $(f, \mathcal{G}_{\mathsf{cwvg}}, \mathcal{L}_{\mathsf{weights}})$-PVGD. The way this algorithm works is very straightforward: Just as in algorithm \ref{alg:monotonic}, we enumerate the complete class of games (weighted voting games in this case), and we compute for each game (that is output by the enumeration algorithm) the distance from the target power index.

Recall that the problem with Algorithm \ref{alg:monotonic} was that the enumeration procedure is not efficient. For the restriction to weighted voting games, we are able to make the enumeration procedure more efficient. We will use Theorem \ref{the:wvgposet} for this: The key is that it is possible to generate the minimal winning coalition listing of canonical weighted games of rank $i$ fairly efficiently from the minimal winning coalition listing of canonical weighted voting games of rank $i-1$.

The following theorem shows us how to do this. To state this theorem, we will first generalize the truncation-operation from Definition \ref{coalitionoperations}.

\begin{definition}[Right-truncation]
Let $S \subseteq N$ be a coalition on players $N = \{1,\ldots, n\}$.
The \textit{$i$th right-truncation} of $S$, denoted $\mathsf{rtrunc}(S,i)$, is defined as
\begin{equation*}
\mathsf{rtrunc}(S,i) = \begin{cases} S \setminus \{P(S,i), \ldots, n\} \text{ if } 0 < i \leq |S| , \\ S \text{ if } i = 0, \\ \text{undefined otherwise,} \end{cases}
\end{equation*}
where $P(S,i)$ is the $i$th highest-numbered player among the players in $S$.
\end{definition}
In effect, the $i$th right-truncation of a coalition $S$ (for $i\leq|S|$) is the coalition that remains when the $i$ highest-numbered players are removed from $S$.

\begin{theorem}\label{the:findingnewcoalitions}
For any $n$, let $G, G' \in \mathcal{G}_{\mathsf{clin}}(n)$ be a pair of canonical linear games with respective sets of minimal winning coalitions $W_{\min}$ and $W_{\min}'$, such that $W_{\min} \subseteq W_{\min}'$. Let $L_{\max}$ and $L_{\max}'$ be the sets of maximal losing coalitions of $G$ and $G'$ respectively. There is a $C \in L_{\max}$ and an $i \in \mathbb{N}$ with $0 \leq i \leq n$ such that $W_{\min}' = W_{\min} \cup \{\mathsf{rtrunc}(C,i)\}$.
\end{theorem}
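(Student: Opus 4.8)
The plan is to locate the ``new'' minimal winning coalition as a right-truncation of a carefully chosen maximal losing coalition of $G$. Since the desired conclusion already forces $W_{\min}'$ to contain at most one more coalition than $W_{\min}$, I would work in the case $W_{\min}' = W_{\min} \cup \{D\}$ with $D \notin W_{\min}$ (this is the covering situation that the enumeration algorithm actually feeds in). Two preliminary facts are worth recording. First, every coalition winning in $G$ is winning in $G'$: a coalition is winning in a monotonic game iff it contains a minimal winning coalition, and $W_{\min} \subseteq W_{\min}'$; equivalently, a coalition losing in $G'$ is losing in $G$. Second, $D$ is losing in $G$: otherwise $D$ would contain some $E \in W_{\min} \subseteq W_{\min}'$, and minimality of $D$ in $G'$ would force $E = D$, contradicting $D \notin W_{\min}$.

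Write $m = \max D$ (and $m = 0$ if $D = \varnothing$). The heart of the proof is the observation that $D$ has a special ``rigidity'' to the left: \emph{if $j < m$ and $j \notin D$, then $(D \setminus \{m\}) \cup \{j\}$ is winning in $G$.} To see this, note that in the canonical linear game $G'$ player $j$ is weakly more desirable than player $m$, so applying the desirability relation to the set $D \setminus \{m\}$ gives $v'((D \setminus \{m\}) \cup \{j\}) \ge v'(D) = 1$, i.e.\ $(D \setminus \{m\}) \cup \{j\}$ is winning in $G'$. But this coalition has the same cardinality as $D$ and is different from $D$, hence does not contain $D$; being winning in $G'$ it must then contain a member of $W_{\min}' \setminus \{D\} = W_{\min}$, so it is already winning in $G$.

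Next I would pick $C$ to be any inclusion-maximal element of the (nonempty, finite) family of losing coalitions $L$ of $G$ with $D \subseteq L$ and $L \cap \{1, \ldots, m\} = D$; the family is nonempty because $D$ itself lies in it, by the second preliminary fact. I claim $C$ is genuinely a maximal losing coalition of $G$. Let $j \notin C$. If $j > m$, then $C \cup \{j\}$ still satisfies both side conditions, so maximality of $C$ inside the family forces $C \cup \{j\}$ to be winning. If $j \le m$, then $j \notin D$ (since $D \subseteq C$) and moreover $j < m$ (since $m \in D \subseteq C$), so by the rigidity observation $(D \setminus \{m\}) \cup \{j\}$ is winning in $G$, and since $C \cup \{j\} \supseteq D \cup \{j\} \supseteq (D \setminus \{m\}) \cup \{j\}$, monotonicity makes $C \cup \{j\}$ winning in $G$. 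Thus every one-element extension of $C$ is winning, so $C \in L_{\max}$.

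Finally, set $i = |C| - |D| = |C \setminus D|$, which satisfies $0 \le i \le n$. By the side condition, $C \setminus D = C \cap \{m+1, \ldots, n\}$ consists precisely of the $i$ highest-numbered players of $C$ (they all exceed $\max D = m$), so $\mathsf{rtrunc}(C, i) = D$, giving $W_{\min}' = W_{\min} \cup \{D\} = W_{\min} \cup \{\mathsf{rtrunc}(C,i)\}$ with $C \in L_{\max}$; the degenerate case $D = \varnothing$ is handled the same way with $m = 0$, $C = N$, $i = n$. I expect the only real obstacle to be the case $j \le m$ in the maximality check — pinning down that adding a more desirable player to a piece of $D$ produces a coalition that, purely for cardinality reasons, cannot owe its winning status in $G'$ to $D$, and hence was already winning in $G$. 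The rest is routine bookkeeping about $\mathsf{rtrunc}$ and the two side conditions imposed on $C$.
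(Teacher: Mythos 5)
Your proof is correct, but it reaches the conclusion by a genuinely different route than the paper. The paper works by contradiction and in fact establishes something slightly stronger: writing $W_{\min}' = W_{\min} \cup \{C'\}$ (the same covering case you isolate explicitly, and which the paper's proof assumes silently with ``Because $G$ is covered by $G'$\,''), it takes an \emph{arbitrary} maximal losing coalition $C \in L_{\max}$ containing the (losing in $G$) coalition $C'$ and shows $C'$ must already be a right-truncation of it: if some $j \in C \setminus C'$ were smaller than some $k \in C'$, then $(C' \setminus \{k\}) \cup \{j\}$ would be a left-shift of $C'$ contained in $C$, hence losing in $G$ and, containing neither $C'$ nor any member of $W_{\min}$, losing in $G'$ as well --- contradicting canonicity of $G'$. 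Your rigidity lemma (swapping $\max D$ for a smaller absent player gives a coalition winning in $G'$ by desirability which, by the equal-cardinality argument, cannot owe its winning status to $D$ and so is winning already in $G$) is essentially the same canonicity mechanism viewed from the other side; but instead of showing that any containing maximal losing coalition works, you construct a specific witness $C$ as a maximal element of the family of losing supersets of $D$ that agree with $D$ on $\{1,\ldots,\max D\}$, verify by hand (via rigidity plus maximality in the family) that it is a genuine maximal losing coalition of $G$, and then read off $D = \mathsf{rtrunc}(C,|C\setminus D|)$. The paper's argument is shorter and buys the stronger fact that \emph{every} $C \in L_{\max}$ containing the new coalition has it as a right-truncation, which is convenient for the enumeration step; your argument is constructive, makes the shape of a suitable $C$ explicit (agree with $D$ below $\max D$, fill up freely above), and handles the degenerate $D = \varnothing$ case cleanly, at the cost of the extra family-maximality bookkeeping.
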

\begin{proof}
Because $G$ is covered by $G'$, by definition there is a coalition $C' \not\in W_{\min}$ such that $W_{\min}' = W_{\min} \cup \{C'\}$. Coalition $C'$ cannot be a superset of a coalition in $W_{\min}$ because then it would not be a \emph{minimal} winning coalition in $G'$. Therefore, $C'$ is a losing coalition in $G$, so it must be a subset of a coalition in $L_{\max}$. Suppose for contradiction that $C'$ is not a right-truncation of a maximal losing coalition $C \in L_{\max}$. So there is a $C \in L_{\max}$ such that $C'$ is a subset of $C$, but not a right-truncation of $C$. This means that in $C'$, some player $j$ from $C$ is not present, while at least one less desirable player $k>j$ from $C$ is in $C'$. This implies that there is a left-shift $C''$ of $C'$ such that $C''$ is a subset of a coalition in $L_{\max}$: $C''$ is obtained from $C'$ by replacing $j$ by $k$. $C''$ is a subset of $C$, so $C''$ is still a losing coalition in $G$. $C''$ is thus not a superset of any coalition in $W_{\min}$, and hence
$C''$ is also not a superset of any coalition in $W_{\min}'$. So \emph{$C''$ is a losing coalition in $G'$}. But $G'$ is a canonical linear game, so by the desirability relation $1 \succeq_D \cdots \succeq_D n$, \emph{$C''$ is a winning coalition in $G'$} because it is a left-shift of the winning coalition $C'$. This is a contradiction.
\end{proof}

From Theorem \ref{the:findingnewcoalitions}, it becomes apparent how to use $(\mathcal{G}_{\mathsf{cwvg}}(n), \preceq_{\mathsf{MWC}})$ for enumerating the class of $n$-player canonical weighted voting games. We start by outputting the $n$-player weighted voting game with zero minimal winning coalitions. After that, we repeat the following process: generate the $\mathcal{L}_{W,\min}$-representation of all canonical weighted voting games with $i$ minimal winning coalitions, using the set of canonical weighted voting games games with $i-1$ minimal winning coalitions (also represented in $\mathcal{L}_{W,\min}$). Once generated, we have the choice to output the games in their $\mathcal{L}_{W,\min}$-representation or in their $\mathcal{L}_{\mathsf{weights}}$-representation, by using the Hop-Skip-and-Jump algorithm presented in Section \ref{actualhopskipjumpalgorithm}.

Generating the set of games of $i$ minimal winning coalitions works as follows: For each game of $i-1$ minimal winning coalitions, we obtain the set of maximal losing coalitions by using the Hop-Skip-and-Jump algorithm. Next, we check for each maximal losing coalition $C$ whether there is a right-truncation of $C$ that we can add to the set of minimal winning coalitions, such that the resulting set represents a weighted voting game. Again, testing whether a game is a weighted voting game is done by using the Hop-Skip-and-Jump algorithm. If a game turns out to be weighted, we can store it and output it.

There is one remaining problem with this approach: It outputs duplicate games. If $(\mathcal{G}_{\mathsf{cwvg}}(n), \preceq_{\mathsf{MWC}})$ were a tree, then this would not be the case, but by Proposition \ref{the:notatree} it is not a tree for any $n \geq 4$. Therefore, we have to do a \textit{duplicates-check} for each weighted voting game that we find. In principle, this seems not to be so difficult: For each game that we find, sort its list of minimal winning coalitions, and check if this list of coalitions already occurs in the array of listings of minimal winning coalitions that correspond to games that we already found. The problem with this is that the list can grow very large, so these checks are then very time- and space-consuming operations.

We will therefore use a different method for doing this ``duplicates-check''. Suppose that we have found an $n$-player canonical weighted voting game $G$ of $i$ minimal winning coalitions by adding a coalition $C$ to a minimal winning coalition listing of a canonical weighted voting game that we have already found. We first sort $G$'s list of minimal winning coalitions. After that, we check for each coalition $C'$ that occurs before $C$ in this sorted list, whether $C'$'s removal from the list results in a list of minimal winning coalitions of a canonical weighted voting game. If there is such a $C'$, then we discard $G$, and otherwise we keep it. This way, it is certain that each canonical weighted voting game will be generated only once.

Algorithm \ref{alg:wvgoutput} gives the pseudocode for this enumeration method. The array element $\mathsf{games}[i]$ will be the list of canonical weighted voting games that have $i$ minimal winning coalitions. The value of $i$ can not exceed $\binom{n}{\lfloor n/2 \rfloor}$ by Theorem $\ref{spernerstheorem}$. The games are represented in language $\mathcal{L}_{W,\min}$. The algorithm iterates from every new game found, starting from the game in $\mathsf{games}[0]$, which is the $n$-player canonical weighted voting game with zero minimal winning coalitions.
\begin{algorithm}
\caption{An enumeration algorithm for the class of $n$ player canonical weighted voting games. $\mathsf{hopskipjump}$ refers to the Hop-Skip-and-Jump algorithm.}
\label{alg:wvgoutput}
\begin{algorithmic}[1]
\STATE \textbf{Output} $[1; 0,\ldots, 0]$
\STATE $\mathsf{games}[0] := \{ \varnothing \}$
\FOR{$i := 1$ to $\binom{n}{\lfloor n/2 \rfloor}$}
\FORALL{$W_{\min} \in \mathsf{games}[i-1]$} \label{outerloop1}
\STATE \COMMENT{Obtain the maximal losing coalitions:} \label{outerloop2}
\STATE $L_{\max} := \mathsf{hopskipjump}(W_{\min})$\label{linehopskipjump}
\FORALL{$C \in L_{\max}$}
\FOR{$j := 0$ to $n$}
\IF{$\mathsf{isweighted}(W_{\min} \cup \mathsf{rtrunc}(C,j))$}\label{checkweighted}
\IF{$W_{\min} \cup \mathsf{rtrunc}(C,j)$ passes the duplicates check (see discussion above)}\label{dupcheck}
\STATE \textbf{Output} a weighted representation of the voting game with minimal winning coalitions $W_{\min} \cup \mathsf{rtrunc}(C,j))$.\label{remainder1}
\STATE \textbf{Append} $W_{\min} \cup \mathsf{rtrunc}(C,j))$ to $\mathsf{games}[i]$.\label{remainder2}
\ENDIF
\ENDIF \label{checkweightedend}
\ENDFOR
\ENDFOR \label{outerloopend}
\ENDFOR
\ENDFOR
\end{algorithmic}
\end{algorithm}
Correctness of the algorithm follows from our discussion above. We will now analyze the time-complexity of the algorithm.
\begin{theorem}
Algorithm \ref{alg:wvgoutput} runs in $O^*(2^{n^2+2n})$ time.
\end{theorem}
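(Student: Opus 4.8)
The plan is to bound the running time of Algorithm~\ref{alg:wvgoutput} by the product of two quantities: the number of canonical weighted voting games that the enumeration ever produces (equivalently, the number of times the body of the middle \textbf{forall} loop runs), and the worst-case cost of one pass through the innermost block. For the first quantity I would use $\mathcal{G}_{\mathsf{cwvg}}(n)\subseteq\mathcal{G}_{\mathsf{wvg}}(n)$ together with the upper bound proved in Section~\ref{cardinalities2}, so that $|\mathcal{G}_{\mathsf{cwvg}}(n)|\le 2^{n^2-n+1}$; since the duplicates test (line~\ref{dupcheck}) guarantees that each canonical weighted voting game is produced exactly once, the \textbf{forall} loop over $\mathsf{games}[i-1]$, summed over all $i$, runs its body at most $2^{n^2-n+1}$ times. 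The outer \textbf{for} loop itself runs $\binom{n}{\lfloor n/2\rfloor}=O^*(2^n)$ times, and it disposes of an empty level $\mathsf{games}[i-1]$ in $O^*(2^n)$ total time, which is negligible.

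Next I would pin down the sizes of the objects involved. For any game in $\mathcal{G}_{\mathsf{cwvg}}(n)$, both $W_{\min}$ and $L_{\max}$ are antichains in $(2^N,\subseteq)$, so by Sperner's theorem (Theorem~\ref{spernerstheorem}) and~(\ref{tightboundantichain}) each has at most $\binom{n}{\lfloor n/2\rfloor}=O^*(2^n)$ members, and every $\mathcal{L}_{W,\min}$-string for such a game has bit-length $O^*(2^n)$. I would then bound the three primitives invoked inside the loops. A call to $\mathsf{hopskipjump}$ (line~\ref{linehopskipjump}) runs in $O(n^3 t)$ time, with the number of shelters $t$ at most $|W_{\min}|=O^*(2^n)$, hence $O^*(2^n)$, and it emits at most $O(n^3|W_{\min}|)=O^*(2^n)$ maximal losing coalitions. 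A call to $\mathsf{isweighted}$ (line~\ref{checkweighted}) on a candidate family amounts to: Aziz's linearity test and computation of the canonical permutation~\cite{harisazizinfluence}, which is polynomial in $n$ and the number of coalitions and hence $O^*(2^n)$; one further $\mathsf{hopskipjump}$ call, $O^*(2^n)$; and solving the linear system~(\ref{systemofinequalities}), which by the above has $O^*(2^n)$ inequalities in $n+1$ unknowns, so it has size $O^*(2^n)$ and is solvable in $O^*(2^n)$ time by Karmarkar's algorithm~\cite{karmarkar}; a trivial check that the canonical permutation is the identity completes it, so $\mathsf{isweighted}$ costs $O^*(2^n)$. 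Finally, the duplicates check (line~\ref{dupcheck}) on a found game with minimal-winning set $W'$ sorts $W'$ and then, for each of the at most $|W'|=O^*(2^n)$ coalitions preceding the freshly added one, runs a canonical-weighted-voting-game test of exactly the kind just described; hence it costs $O^*(2^n)\cdot O^*(2^n)=O^*(2^{2n})$.

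Putting the pieces together, one pass over a single source game costs $O^*(2^n)$ for line~\ref{linehopskipjump} plus the cost of the inner double loop over $C\in L_{\max}$ and $j\in\{0,\dots,n\}$, which performs $|L_{\max}|\cdot(n+1)=O^*(2^n)$ iterations, each consisting of one $\mathsf{isweighted}$ test ($O^*(2^n)$) followed in the worst case by one duplicates check ($O^*(2^{2n})$), i.e.\ $O^*(2^{2n})$ per iteration, hence $O^*(2^{3n})$ per source game. Multiplying by the at most $2^{n^2-n+1}$ source games gives a total running time of $O^*(2^{n^2-n+1}\cdot 2^{3n})=O^*(2^{n^2+2n})$, as claimed. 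Equivalently: the innermost block (lines~\ref{checkweighted}--\ref{checkweightedend}) is entered $O^*(2^{n^2-n+1}\cdot 2^n)=O^*(2^{n^2})$ times and costs $O^*(2^{2n})$ each, while the line-\ref{linehopskipjump} calls add only $O^*(2^{n^2})$ in total.

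The part requiring the most care is the bookkeeping of the nested loops and auxiliary routines: one must verify that every primitive touched inside the loops (Aziz's linearity test, $\mathsf{hopskipjump}$, and the linear-program solve) really is polynomial in its $O^*(2^n)$-sized input and hence costs $O^*(2^n)$, using the established runtime analysis of $\mathsf{hopskipjump}$ together with Sperner's bound on antichains for the number of constraints of~(\ref{systemofinequalities}); and one must notice that it is precisely the nested stack of up to $O^*(2^n)$ weightedness tests inside the duplicates check that contributes the extra factor $2^{2n}$ (rather than $2^n$), which is exactly what makes the exponent $n^2+2n$ instead of $n^2+n$. I would also briefly confirm that maintaining and appending to the arrays $\mathsf{games}[\cdot]$ contributes only $O^*$ factors, and that the bound $|\mathcal{G}_{\mathsf{cwvg}}(n)|\le 2^{n^2-n+1}$ is legitimate because canonical weighted voting games form a subclass of all weighted voting games.
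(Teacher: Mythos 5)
Your proposal is correct and follows essentially the same route as the paper's own proof: bound the number of times the main body runs by the upper bound $O^*(2^{n^2-n})$ on $|\mathcal{G}_{\mathsf{cwvg}}(n)|$, use Sperner's theorem to bound $|W_{\min}|$ and $|L_{\max}|$ by $O^*(2^n)$, and identify the duplicates check (a stack of up to $O^*(2^n)$ weightedness tests, each an $O^*(2^n)$ Hop-Skip-and-Jump plus Karmarkar LP) as the dominant $O^*(2^{2n})$ cost per inner iteration, giving $O^*(2^{3n})$ per enumerated game and $O^*(2^{n^2+2n})$ overall. The only differences are cosmetic bookkeeping (you track the empty-level overhead of the outer loop and the linearity test inside $\mathsf{isweighted}$ explicitly), which does not change the bound.
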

\begin{proof}
Lines 5 to 16 are executed at most once for every canonical weighted voting game. From Sperner's theorem, we know that any list of minimal winning coalitions has fewer than $\binom{n}{\lfloor n/2 \rfloor}$ elements. So by the runtime of the Hop-Skip-and-Jump algorithm, line~\ref{linehopskipjump} runs in time $O\left(n {\binom{n}{\lfloor n/2 \rfloor}}^2 + n^3 \binom{n}{\lfloor n/2 \rfloor}\right) = O(n^2 \sqrt{n} 2^{\bk{2}n})$. Within an iteration of the outer loop (line \ref{outerloop1}), lines \ref{checkweighted} to \ref{checkweightedend} are executed at most $n \binom{n}{\lfloor n/2 \rfloor} = O(\sqrt{n} 2^n)$ times (because $L_{\max}$ is also an antichain, Sperner's theorem also applies for maximal losing coalitions). The time-complexity of one execution of lines \ref{checkweighted} to \ref{checkweightedend} is as follows.
\begin{itemize}
\item At line \ref{checkweighted}, we must solve a linear program, taking time $O\left(n^{4.5} \binom{n}{\lfloor n/2 \rfloor}\right) = O(n^4 2^n)$ using Karmarkar's interior point algorithm \cite{karmarkar}.
\item At line \ref{dupcheck}, we must execute the duplicates check. This consists of checking for at most $\binom{n}{\lfloor n/2 \rfloor}$ sets of minimal winning coalitions whether they are weighted. This involves running the Hop-Skip-and-Jump algorithm, followed by solving a linear program. In total, this takes $O(n^{3} \sqrt{n} 2^{2n})$ \tk{time}.
\item Lines \ref{remainder1} and \ref{remainder2} take linear time.
\end{itemize}
Bringing everything together, we see that a single pass from lines \ref{outerloop2} to \ref{outerloopend} costs us $O(n^{4} 2^{3n})$ time.
As mentioned earlier, these lines are executed at most $|\mathcal{G}_{\mathsf{cwvg}}(n)|$ times.
We know that $|\mathcal{G}_{\mathsf{wvg}}(n)| \in O(2^{n^2-n})$ (from Corollary~\ref{no-wvgs} in the previous section), and of course $|\mathcal{G}_{\mathsf{cwvg}}(n)| < |\mathcal{G}_{\mathsf{wvg}}(n)|$, hence lines \ref{outerloop2} to \ref{outerloopend} are executed at most $O(2^{n^2-n})$ times, and therefore the runtime of the algorithm is $O(2^{n^2+2n} n^{4}) = O^*(2^{n^2+2n})$.
\end{proof}
Although the runtime analysis of this algorithm that we gave is not very precise, the main point of interest that we want to emphasize is that this method runs in exponential time, instead of doubly exponential time. We can also show that this algorithm runs in an amount of time that is only polynomially greater than the amount of data output. This implies that Algorithm \ref{alg:wvgoutput} is essentially the fastest possible enumeration algorithm for canonical weighted voting games, up to a polynomial factor.

\begin{theorem}
Algorithm \ref{alg:wvgoutput} runs in output-polynomial time, i.e., a polynomial in the number of bits that Algorithm \ref{alg:wvgoutput} outputs.
\end{theorem}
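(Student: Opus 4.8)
The plan is to compare the running time of Algorithm~\ref{alg:wvgoutput} against a lower bound on the number of bits it outputs, and observe that the two are both essentially $2^{n^2}$, so the running time is bounded by a fixed power of the output size. Concretely, the running time was shown above to be $O^*(2^{n^2+2n})$ (and the argument there in fact yields the finer bound $O(n^4 2^{3n}\cdot|\mathcal{G}_{\mathsf{cwvg}}(n)|)$, since lines \ref{outerloop2}--\ref{outerloopend} cost $O(n^4 2^{3n})$ each and are run at most $|\mathcal{G}_{\mathsf{cwvg}}(n)|$ times), while by~(\ref{no-cwvgs}) the number of games output is almost $2^{n^2}$. The ``overhead'' factor $n^4 2^{3n}$ is negligible against $|\mathcal{G}_{\mathsf{cwvg}}(n)|$, which gives the claim.

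The steps I would carry out are the following. First, record an output-size lower bound: by correctness (Theorem~\ref{the:wvgposet}, together with the duplicates check) Algorithm~\ref{alg:wvgoutput} outputs a weighted representation of every canonical weighted voting game on $n$ players, each exactly once, so the output is a concatenation of at least $|\mathcal{G}_{\mathsf{cwvg}}(n)|$ distinct nonempty strings; hence the number $B$ of output bits satisfies $B \geq |\mathcal{G}_{\mathsf{cwvg}}(n)|$. Second, combine this with the runtime bound: the running time $R$ of Algorithm~\ref{alg:wvgoutput} is $O(n^4 2^{3n}\cdot|\mathcal{G}_{\mathsf{cwvg}}(n)|) = O(n^4 2^{3n}\cdot B)$, so it suffices to show $n^4 2^{3n} = O(\mathrm{poly}(B))$, and in fact I would show $n^4 2^{3n} \leq B$ for all sufficiently large $n$. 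Third, verify this by taking base-$2$ logarithms: $\log_2(n^4 2^{3n}) = 3n + 4\log_2 n$, while by~(\ref{no-cwvgs})
\[
\log_2 B \;\geq\; n^2\!\left(1-\tfrac{10}{\log n}\right) - n - 1 - \log_2(n!),
\]
and the right-hand side equals $n^2 - O\!\left(\tfrac{n^2}{\log n}\right) - O(n\log n)$, which grows like $n^2$ and in particular exceeds $3n + 4\log_2 n$ once $n$ is large enough. Thus $B \geq n^4 2^{3n}$ for all large $n$, so $R \leq c\,B^2$ for a constant $c$ and all large $n$. Finally, for the finitely many remaining (small) values of $n$ the running time is bounded by an absolute constant, so enlarging the polynomial to absorb this constant yields a single polynomial $p$ with $R \leq p(B)$ for every $n$, which is exactly output-polynomiality.

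The only delicate point is checking that the gap between the runtime bound and the output lower bound is genuinely sub-polynomial. If one uses the coarse bound $R = O^*(2^{n^2+2n})$ and $B \geq 2^{n^2(1-10/\log n)-n-1}/n!$ from~(\ref{no-cwvgs}), the ratio $R/B$ is of order $2^{10 n^2/\log n + O(n\log n)}$, and one must note that $2^{10n^2/\log n} = \bigl(2^{n^2}\bigr)^{10/\log n}$ with exponent $10/\log n \to 0$, so this factor is eventually even smaller than $2^{n^2}$ and hence than $B$. Using the sharper runtime expression $O(n^4 2^{3n}\cdot|\mathcal{G}_{\mathsf{cwvg}}(n)|)$ avoids this subtlety altogether, since there the overhead is merely $2^{O(n\log n)}$, which is trivially dominated by $B \geq |\mathcal{G}_{\mathsf{cwvg}}(n)| = 2^{\Theta(n^2)}$; this is the route I would take in the write-up.
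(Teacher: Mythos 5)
Your proposal is correct and follows essentially the same route as the paper's own proof: bound the number of iterations of the main loop by $|\mathcal{G}_{\mathsf{cwvg}}(n)|$, bound the per-iteration cost by $O(n^4 2^{3n})$, and use the lower bound~(\ref{no-cwvgs}) to show this per-iteration cost is itself $O(|\mathcal{G}_{\mathsf{cwvg}}(n)|)$, giving a total runtime of $O(|\mathcal{G}_{\mathsf{cwvg}}(n)|^2)$ and hence output-polynomiality. Your write-up merely makes explicit two points the paper leaves implicit (that the output size is at least $|\mathcal{G}_{\mathsf{cwvg}}(n)|$ bits, and the handling of small $n$), which is fine.
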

\begin{proof}
Lines \ref{outerloop2} to \ref{outerloopend} are executed less than $|\mathcal{G}_{\mathsf{cwvg}}(n)|$ times.
From (\ref{no-cwvgs}), we have as a lower bound that $|\mathcal{G}_{\mathsf{cwvg}}(n)| \in \Omega(2^{n^2(1 - \frac{10}{\log n})} / n!2^n)$.
One execution of lines \ref{outerloop2} to \ref{outerloopend} costs $O(n^{4} 2^{3n})$ time, and thus one iteration \bk{runs in
\begin{equation*}
O(n^{4} 2^{3n}) \subseteq O\left(2^{n^2(1 - \frac{10}{\log n})} / n!2^n\right) \subseteq O(|\mathcal{G}_{\mathsf{cwvg}}(n)|)
\end{equation*}
time.} We conclude that the algorithm runs in $O(|\mathcal{G}_{\mathsf{cwvg}}(n)|^2)$ time.
\end{proof}

\begin{remark}\label{depthfirst}
We can not give a very sharp bound on the space complexity of Algorithm \ref{alg:wvgoutput}, because we do not know anything about the maximum cardinality of an antichain in $(\mathcal{G}_{\mathsf{cwvg}}(n), \preceq_{\mathsf{MWC}})$. However, it can be seen that it is also possible to generate the games in this poset in a depth-first manner, instead of a breadth-first manner like we do now. In that case, the amount of space that needs to be used is bounded by the maximum length of a chain in $(\mathcal{G}_{\mathsf{cwvg}}(n), \preceq_{\mathsf{MWC}})$. This is a total amount of $O(\frac{2^n}{\sqrt{n}})$ space.
\end{remark}

Now that we have this enumeration algorithm for weighted voting games, we can use the same approach as in algorithm $\ref{alg:monotonic}$ in order to solve the $(f,\mathcal{G}_{\mathsf{cwvg}},\mathcal{L}_{\mathsf{weights}})$-PVGD problem: for each game that is output, we simply compute the power index of that game and check if it is closer to the optimum than the best game we have found up until that point.

\subsection{Improvements and optimizations}\label{improvingthealgorithm}
Algorithm \ref{alg:wvgoutput} is in its current state not that suitable for solving the $(f,\mathcal{G}_{\mathsf{cwvg}},\mathcal{L}_{\mathsf{weights}})$-PVGD problem in practice. In this section we will make several improvements to the algorithm. This results in a version of the enumeration algorithm of which we expect that it outputs canonical weighted voting games at a steady rate. We will see that this gives us a practically applicable anytime-algorithm for the $(\beta,\mathcal{G}_{\mathsf{cwvg}},\mathcal{L}_{\mathsf{weights}})$-PVGD problem for small numbers of players.

Section \ref{sec:improvedlp} shows how we can make the system of linear inequalities (\ref{systemofinequalities}) smaller.
In Section \ref{sec:findingcoalitions}, we will improve Theorem \ref{the:findingnewcoalitions} in order to more quickly find new potential minimal winning coalitions to extend our weighted voting games with. Lastly, in Section \ref{sec:outputpolynomial} we give an output-polynomial time algorithm for enumerating all ceiling coalitions, given a set of roof coalitions.

It is important to note that these three improvements combined eliminate the need to generate the complete list of maximal losing coalitions of the weighted voting games that we enumerate. Instead, it suffices to only keep track of the sets of minimal winning coalitions and ceiling coalitions.

\subsubsection{An improved linear program for finding the weight vector of a weighted voting game}\label{sec:improvedlp}
When finding a weight vector for a weighted voting game of which we obtained the minimal winning coalitions and maximal losing coalitions, we proposed in the previous section to do this by solving the system of inequalities (\ref{systemofinequalities}). In \cite{hopskipjump} it is noted that we can make this system much more compact, as follows.

First of all we can reduce the number of inequalities in our system by observing that a minimal winning coalition $C$ which is not a roof, always has a higher total weight than at least one roof, in a canonical weighted voting game. This is because $C$ is a superset of a left-shift of some roof. In the same way, a maximal losing coalition which is not a ceiling, always has a lower total weight than at least one ceiling.
Therefore, adding the inequalities $w_1 \geq \cdots \geq w_n$ to our system of inequalities (\ref{systemofinequalities}) allows us to remove a lot of other inequalities from (\ref{systemofinequalities}), because it now suffices to only make sure that out of all minimal winning coalitions, only the roofs have a higher weight than $q$; and out of all maximal losing coalitions, only the ceilings have a lower total weight than $q$.

Secondly, we can reduce the number of variables (weights) in (\ref{systemofinequalities}) by noting that if two players $i$ and $i+1$ are equally desirable, then $w_i = w_{i+1}$. Therefore, we need only one representative variable from each set $D$ of players for which it holds that that
\begin{enumerate}
\item the players in $D$ are pairwise equally desirable, and
\item any player in $N \setminus D$ is strictly less or strictly more desirable than a player in $D$.
\end{enumerate}
By reducing the number of inequalities and variables in this way, we can in most cases drastically decrease the time it takes to find a solution to (\ref{systemofinequalities}).

\subsubsection{A better way of finding new minimal winning coalitions}\label{sec:findingcoalitions}
Theorem \ref{the:findingnewcoalitions} allows us to find potential minimal winning coalitions that we can extend our weighted voting games with. We will now see that we do not really need to consider every right-truncation of every maximal losing coalition: In fact, we only need to look at ceiling coalitions.

\begin{theorem}\label{the:findingnewcoalitions2}
For any $n$, let $G, G' \in \mathcal{G}_{\mathsf{wvg}}(n)$ be a pair of weighted voting games such that $G$ is covered by $G'$ in $(\mathcal{G}_{\mathsf{cwvg}}(n), \preceq_{\mathsf{MWC}})$. Let $W_{\min, G}$ and $W_{\min, G'}$ be the sets of minimal winning coalitions of $G$ and $G'$ respectively, and let $L_{\mathsf{ceil},G}$ and $L_{\mathsf{ceil},G'}$ be the sets of ceiling coalitions of $G$ and $G'$ respectively. There is a $C \in L_{\mathsf{ceil},G}$ and an $i \in \mathbb{N}$ with $0 \leq i \leq n$ such that $W_{\min, G'} = W_{\min, G} \cup \mathsf{rtrunc}(C,i)$.
\end{theorem}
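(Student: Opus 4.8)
I would prove Theorem \ref{the:findingnewcoalitions2} by refining the argument in the proof of Theorem \ref{the:findingnewcoalitions}. Recall that there we showed: since $G$ is covered by $G'$, there is a single coalition $C' \notin W_{\min,G}$ with $W_{\min,G'} = W_{\min,G} \cup \{C'\}$; that $C'$ is losing in $G$, hence a subset of some maximal losing coalition; and moreover $C'$ must actually be a \emph{right-truncation} $\mathsf{rtrunc}(C,i)$ of some $C \in L_{\max,G}$ (otherwise a left-shift of $C'$ staying inside $C$ would be simultaneously winning and losing in $G'$). So the only thing left to strengthen is: we may take the maximal losing coalition $C$ to be a \emph{ceiling} coalition, i.e.\ an element of $L_{\mathsf{ceil},G}$.

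\textbf{Key steps.} First I would fix the coalition $C'$ and the maximal losing coalition $C$ with $C' = \mathsf{rtrunc}(C,i)$, as produced by the earlier proof. If $C$ is already a ceiling, we are done, so assume $C$ is a maximal losing but non-ceiling coalition of $G$. By definition of ceiling (Definition of roof/ceiling form), $C$ has a left-shift that is still losing in $G$; consequently there is a ceiling coalition $\hat C \in L_{\mathsf{ceil},G}$ such that $C$ is a right-shift of $\hat C$. The heart of the argument is then to show that $C'$ is \emph{also} a right-truncation of $\hat C$, i.e.\ $C' = \mathsf{rtrunc}(\hat C, j)$ for some $j$. Intuitively this should hold because $C' = \mathsf{rtrunc}(C,i)$ consists of the $|C|-i$ most desirable players of $C$, and since $\hat C$ is obtained from $C$ by left-shifting (making players uniformly more desirable position-by-position), the top $|C|-i$ players of $\hat C$ are position-wise at least as desirable as those of $C$; the right-truncation $\mathsf{rtrunc}(\hat C, j)$ with $j$ chosen so $|\mathsf{rtrunc}(\hat C,j)| = |C'|$ is then a left-shift of $C'$. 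Now I invoke the canonical-linearity of $G'$ exactly as in Theorem \ref{the:findingnewcoalitions}: $\mathsf{rtrunc}(\hat C,j)$ is a subset of $\hat C$, hence losing in $G$, hence not a superset of any coalition in $W_{\min,G}$, hence not a superset of any coalition in $W_{\min,G'}$ either \emph{unless} it equals $C'$; but being a left-shift of the winning coalition $C'$ in the canonical linear game $G'$, it must be winning, forcing $\mathsf{rtrunc}(\hat C,j) = C'$. This gives $W_{\min,G'} = W_{\min,G} \cup \{\mathsf{rtrunc}(\hat C, j)\}$ with $\hat C \in L_{\mathsf{ceil},G}$, which is the claim.

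\textbf{Main obstacle.} The delicate point is the position-wise comparison asserting that the appropriate right-truncation of the ceiling $\hat C$ is indeed a left-shift of (or equal to) $C'$. One has to be careful that right-shifting $\hat C$ to obtain $C$ does not change the cardinality, and that truncating the \emph{bottom} $i$ players from $C$ versus truncating from $\hat C$ interacts correctly with the shift: a right-shift can move a player past the truncation boundary, so the player sets being truncated need not correspond exactly. I would handle this by working with the desirability order directly — comparing, for each rank $a$, the $a$-th most desirable player of $\hat C$ against that of $C$ — and arguing that $\mathsf{rtrunc}(\hat C, j)$ for the cardinality-matching $j$ is a left-shift of $C'$ on a position-by-position basis, possibly after replacing $\hat C$ by a suitable ceiling among several candidates. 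An alternative, cleaner route that sidesteps explicit index-chasing is to note that $C'$ is a losing coalition of $G$ all of whose \emph{proper right-truncations are ``as winning as possible''}, reformulate ``$C'$ is a right-truncation of a maximal losing coalition'' as ``$C' \cup \{j, j+1, \dots, n\}$ is losing in $G$ for the appropriate tail,'' and then directly extend $C'$ downward to a ceiling; I would present whichever of these is shorter once the details are filled in. Everything else is a verbatim re-run of the Theorem \ref{the:findingnewcoalitions} argument, relying only on canonical linearity of $G'$ and the shift-closure properties of winning/losing coalitions in canonical linear games.
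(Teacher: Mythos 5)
Your overall strategy is viable and its final step coincides with the paper's: you derive a contradiction inside $G'$ from a coalition that would have to be both a proper left-shift of the newly added minimal winning coalition (hence winning, by canonicity of $G'$) and a non-superset of every coalition of $W_{\min,G'}$ (hence losing). The genuine gap is the sentence ``$C$ has a left-shift that is still losing in $G$; consequently there is a ceiling coalition $\hat C \in L_{\mathsf{ceil},G}$ such that $C$ is a right-shift of $\hat C$.'' That ``consequently'' hides a nontrivial lemma. A ceiling is by definition a \emph{maximal losing} coalition all of whose left-shifts are winning; if you left-shift the maximal losing coalition $C$ as far as possible while staying losing, you obtain a losing coalition $\hat C$ of the same cardinality all of whose left-shifts are winning, but you still owe an argument that $\hat C$ is itself maximal losing, i.e., that every proper superset $\hat C \cup \{x\}$ is winning; this does not follow directly from the definitions. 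The claim is in fact true: one can show that for every $x \notin \hat C$ there is a $y \notin C$ such that $C \cup \{y\}$ is a (weak) right-shift of $\hat C \cup \{x\}$ (for instance via the characterization that $S$ is a weak left-shift of $T$ iff $|S \cap \{t,\ldots,n\}| \leq |T \cap \{t,\ldots,n\}|$ for every threshold $t$), so if $\hat C \cup \{x\}$ were losing, the proper superset $C \cup \{y\}$ of $C$ would be losing as well, contradicting $C \in L_{\max,G}$. Once that lemma is supplied, the rest of your argument is fine: since $|\hat C|=|C|$, the truncation comparison is just a position-wise comparison of the $|C|-i$ smallest-indexed players (so the ``truncation boundary'' obstacle you worry about does not actually arise), and the concluding contradiction forces $\mathsf{rtrunc}(\hat C,i)$ to equal the added coalition.

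For comparison, the paper's proof avoids this lemma entirely: it never claims that the maximal losing coalition is a right-shift of a ceiling. It works directly with the added coalition, using only the weaker standard fact that a losing coalition of a canonical linear game is a \emph{subset of a right-shift} of some ceiling; assuming the added coalition is not a right-truncation of any ceiling, it produces a proper left-shift of it that is still a subset of a right-shift of that ceiling, hence losing in $G$ and losing in $G'$, while canonicity of $G'$ makes it winning --- the same contradiction you invoke. So your route can be completed, but as written it has a real gap at exactly the point where it departs from the paper, and closing it costs an extra structural lemma about maximal losing coalitions and ceilings.
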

\begin{proof}
Let $W_{\min,G}$ and $W_{\min,G'}$ be the sets of minimal winning coalitions of games $G$ and $G'$ respectively.
Because $G$ is covered by $G'$, by definition there is a coalition $C \not\in W_{\min, G}$ such that $W_{\min, G'} = W_{\min, G} \cup C$.
By Theorem \ref{the:findingnewcoalitions}, $C$ is a right-truncation of a coalition in $L_{\max,G}$.
Suppose for contradiction that $C$ is not a right-truncation of a ceiling in $L_{\mathsf{ceil},G}$.
Then there is a ceiling $C' \in L_{\mathsf{ceil},G}$ such that $C$ is a subset of a right-shift of $C'$, and there is a left-shift $C''$ of $C$, $C'' \not=C$, such that $C''$ is also a subset of a right-shift of $C'$. Coalition $C''$ is not a superset of $W_{\min, G}$ because $C'$ is losing in $G$, and $C''$ is not a superset of $C$ either, because $C''$ is a left-shift of $C$ and is unequal to $C$. So it follows that \emph{$C''$ is a losing coalition in $G'$}.

But $G'$ is a canonical weighted voting game, so the desirability relation $1 \succeq_D \cdots \succeq_D n$ is satisfied. Because $C''$ is a left-shift of $C$, and $C$ is winning in $G'$, it follows that \emph{$C''$ is a winning coalition in $G'$}. This is a contradiction.
\end{proof}

\subsubsection{An algorithm for obtaining the ceiling-list of a canonical weighted voting game}\label{sec:outputpolynomial}
In Section \ref{roofvgs}, we derived that the $(\mathcal{L}_{\mathsf{roof}}, \mathcal{L}_{\mathsf{ceil}})$-VGS problem does not have a polynomial time algorithm because the output may be exponentially sized in the input. Moreover, as proved by Polym\'{e}ris and Riquelme \cite{polymeris13}, an output-polynomial time algorithm for this problem would have sensational consequences, as it would imply a polynomial time algorithm for the \emph{monotone boolean duality} problem \cite{fredman96}: a well-known problem that can be solved in sub-exponential time, of which it is not known whether it admits a polynomial time algorithm.

Finding an output-polynomial algorithm for $(\mathcal{L}_{\mathsf{roof}}, \mathcal{L}_{\mathsf{ceil}})$-VGS is thus a very interesting open problem, but due to its alleged difficulty\footnote{We thank Andreas Polym\'{e}ris and Fabi\'{a}n Riquelme for pointing out to us the connection to the monotone boolean duality problem, as well as for pointing out an error in a preliminary version of this papers.} we instead resort to studying the $(\mathcal{L}_{W, \min}, \mathcal{L}_{\mathsf{ceil}})$-VGS problem.

Of course, one could simply solve the latter problem by using the Hop-Skip-and-Jump algorithm, described in Section \ref{weightedvgs}. This would provide us with a list of MLCs of the input game, after which we could filter out the ceilings. This algorithm (i.e., filtering the shelters from the MWCs and running the Hop-Skip-and-Jump algorithm on those shelters) would run in $O(nm^2 + n^3 m)$ time, where $m$ is the number of MWCs. Below, we will provide an alternative algorithm for the special case where we only need to output the ceilings of a given game. We will use in the remainder of this section some notions from Definition \ref{coalitionoperations}.


\begin{theorem}
Let $G \in \mathcal{G}_{\mathsf{clin}}(n)$ be a canonical linear game on players $N = \{1, \ldots, n\}$, let $W_{\min}$ be the set of MWCs of $G$, let $\mathcal{C}$ be the set of ceilings of $G$, and let $C \in \mathcal{C}$ such that $a(C) > 0$. Then there exists $i \in \mathbb{N}_{\geq 0}, 1 < i \leq |C| - |\mathsf{trunc}(C)|$, such that $\mathsf{trunc}(C) \cup \{a(C)\} \cup \{a(C) + j : 2 \leq j \leq i\}$ is a minimal winning coalition.
\end{theorem}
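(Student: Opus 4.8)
The plan is to lean on the feature of a ceiling that maximality among losing coalitions alone does not provide: every left-shift of $C$ is winning. Throughout write $a = a(C)$ and $t = \mathsf{trunc}(C)$. Since $a > 0$ we have $a \notin C$ and $a+1 \in C$, and because $a$ is the \emph{largest} index with $a \notin C$ and $a+1 \in C$, the part of $C$ lying above $a$ forms a contiguous block; hence $C = t \cup \{a+1, a+2, \ldots, b(C)\}$ with $t \subseteq \{1, \ldots, a-1\}$, and $|C| - |\mathsf{trunc}(C)| = b(C) - a$. The first step is the observation that the direct left-shift $C_1 := (C \setminus \{a+1\}) \cup \{a\} = t \cup \{a\} \cup \{a+2, \ldots, b(C)\}$ is winning, because $C$ is a ceiling.

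Next I would introduce the chain of candidate coalitions named in the statement: for $2 \le i \le b(C) - a$ set $D_i := \mathsf{trunc}(C) \cup \{a\} \cup \{a + j : 2 \le j \le i\} = t \cup \{a\} \cup \{a+2, \ldots, a+i\}$. These form a chain $D_2 \subseteq D_3 \subseteq \cdots \subseteq D_{b(C)-a} = C_1$ whose top element is winning, so I may let $i^*$ be the least $i \in \{2, \ldots, b(C)-a\}$ with $D_i$ winning. The claim to establish is then that $D_{i^*}$ is a minimal winning coalition, which gives the theorem with $i = i^*$.

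The real work is the minimality check: for each $p \in D_{i^*}$ one must show $D_{i^*} \setminus \{p\}$ is losing, and in every case the strategy is to realize $D_{i^*} \setminus \{p\}$ as a right-shift of a coalition already known to be losing (recalling that right-shifts of losing coalitions are losing in a canonical linear game). If $p = a$, then $D_{i^*} \setminus \{a\} = t \cup \{a+2, \ldots, a+i^*\} \subseteq C$ (using $i^* \le b(C)-a$), hence losing. If $p \in t$, I would take the witness $S := t \cup \{a+1, \ldots, a+i^*-1\} \subseteq C$, which is losing and has the same cardinality as $D_{i^*} \setminus \{p\}$, and check that the sorted element-vector of $S$ is coordinatewise $\le$ that of $D_{i^*}\setminus\{p\}$ — which is exactly the condition that $D_{i^*}\setminus\{p\}$ be reachable from $S$ by a succession of direct right-shifts. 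If $p = a+k$ with $2 \le k \le i^*$, then $D_{i^*}\setminus\{a+k\}$ is a right-shift of $D_{i^*-1}$ (it equals $D_{i^*-1}$ when $k = i^*$, and otherwise is obtained from it by sliding the single hole from position $a+i^*$ down to $a+k$ by direct right-shifts), and $D_{i^*-1}$ is losing by minimality of $i^*$. Chaining these cases yields that $D_{i^*}$ is a minimal winning coalition of the required shape with $1 < i^* \le |C| - |\mathsf{trunc}(C)|$.

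The step I expect to be the main obstacle is the bottom of the chain, $i^* = 2$: there the last case above reduces to $p = a+2$ and offers no losing predecessor $D_{i^*-1}$ to appeal to, so one must separately argue that $D_2 \setminus \{a+2\} = \mathsf{trunc}(C) \cup \{a(C)\}$ is losing — and this is precisely where the ceiling structure of $C$ (and not just its being a maximal losing coalition) needs to be used, via the decomposition $C = \mathsf{trunc}(C) \cup \{a+1,\ldots,b(C)\}$ together with the coordinatewise-domination description of shifts. (Handling this case also tacitly requires $b(C) \ge a+2$, i.e.\ $|C| > |\mathsf{trunc}(C)| + 1$, for the index range $1 < i \le |C|-|\mathsf{trunc}(C)|$ to be nonempty in the first place.) A routine preliminary I would isolate as a lemma and use repeatedly is the standard fact that, for two coalitions of the same size, being coordinatewise $\preceq$ in the sorted-element order is equivalent to one being reachable from the other by direct right-shifts.
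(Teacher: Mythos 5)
Your chain construction and the three minimality checks (removing $a(C)$, removing a member of $\mathsf{trunc}(C)$ via the coordinatewise witness $S$, removing $a(C)+k$ via a right-shift of $D_{i^*-1}$) are sound, and they supply exactly the detail that the paper's own proof compresses into the words ``by canonicity''. The genuine gap is the base case you yourself flag and defer: when $i^*=2$ you need $D_1=\mathsf{trunc}(C)\cup\{a(C)\}$ to be losing, and no argument is given. This gap cannot be closed, because the needed claim is false: take the canonical weighted voting game on four players with weighted form $[1;1,1,0,0]$ (a coalition wins iff it contains player $1$ or player $2$) and the ceiling $C=\{3,4\}$. Here $a(C)=2>0$, $\mathsf{trunc}(C)=\emptyset$, and $D_1=\{2\}$ is winning; consequently $D_2=\{2,4\}$ is winning but not minimal, and since $|C|-|\mathsf{trunc}(C)|=2$ there is no admissible $i>1$ at all. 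So the statement as printed (with the strict bound $1<i$) is itself false, and your hope that ``the ceiling structure of $C$'' rescues the $i^*=2$ case cannot be realized: $C$ is a ceiling in this example. Your parenthetical worry about the empty index range when $|C|-|\mathsf{trunc}(C)|=1$ (e.g.\ the ceiling $\{2\}$ in the two-player game $[1;1,0]$) is a second symptom of the same defect.

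The repair is to admit $i=1$: let the chain start at $D_1$ and take $i^*$ to be the least $i\in\{1,\dots,b(C)-a(C)\}$ with $D_i$ winning (the top element is winning because $C$ is a ceiling). Then your removal cases cover everything: for $p=a(C)+k$ with $k\ge 2$ one automatically has $i^*\ge 2$, so $D_{i^*-1}$ with $i^*-1\ge 1$ is losing by minimality, and the cases $p=a(C)$ and $p\in\mathsf{trunc}(C)$ go through verbatim (for $i^*=1$ the witness is just $S=\mathsf{trunc}(C)\subseteq C$). This yields a minimal winning coalition of the stated shape with $1\le i^*\le|C|-|\mathsf{trunc}(C)|$, which is what the result must say for the ceiling-generation procedure that follows it to be correct --- indeed that procedure lists a second candidate form which is precisely the $i=1$ case. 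With this correction your argument essentially coincides with the paper's: the paper anchors the same chain at the losing coalition $\mathsf{trunc}(C)$ (thereby implicitly allowing $i=1$, despite the bound in its own statement, and modulo typos such as writing $b(C)$ for $a(C)$), and, unlike you, it leaves the minimality verification to the reader.
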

\begin{proof}
$C$ is losing, so $\mathsf{trunc}(C)$ is losing, and $\mathsf{trunc}(C) \cup \{b(C)\} \cup \{b(C) + j : 2 \leq j \leq |C| - |\mathsf{trunc}(C)|\}$ is a left-shift of $C$, and hence winning. Therefore there exists $i \in \mathbb{N}_{\geq 0}, 2 \leq i \leq |C| - |\mathsf{trunc}(C)|$ such that
\begin{itemize}
 \item $C' := \mathsf{trunc}(C) \cup \{a(C)\} \cup \{a(C) + j : 0 \leq j \leq i\}$ is winning, and
 \item ($C'' := \mathsf{trunc}(C) \cup \{a(C)\} \cup \{a(C) + j : 2 \leq j \leq i-1\}$ is losing or $C'' = C'$).
\end{itemize}
By canonicity, this means that $C'$ is a minimal winning coalition.
\end{proof}

From the above theorem it becomes clear how to generate efficiently the set of ceilings from the set of minimal winning coalitions: For each MWC $S$, it suffices to check for all $k \in \mathbb{N}_{\geq 0}, k \leq n - b(S)$ whether:
\begin{itemize}
 \item $(S \backslash \{a(S)-1\}) \cup \{a(S)\} \cup \{b(S) + j : 1 \leq j \leq k\}$ is a ceiling (in case $a(S) - 1 \in S$),
 \item $(S \backslash \{b(S)\}) \cup \{b(S) + j : 1 \leq j \leq k\}$ is a ceiling.
\end{itemize}
This would generate all ceilings $C$ with the property that $b(C) > 0$. There are furthermore at most $n$ ceilings $C$ for which it holds that $b(C) = 0$, and it is clear that such coalitions can be generated and checked straightforwardly.

The runtime of this implied algorithm is theoretically no better than the Hop-Skip-and-Jump algorithm. However, due to the simplicity of this algorithm, and due to the fact that this algorithm only finds the ceilings (of which there are in general much less than that there are MWCs), we expect this algorithm to run much faster in practice, in most cases.

\section{Experiments}\label{sec:experiments}

Before we turn to the results of some relatively large scale experiments, let us visualize the results for just $n=3$ players, because we can easily depict these in two dimensions. Figure~\ref{fig:simplex} shows the 3-player simplex, with the vertices labeled by the player numbers.
\begin{figure}
\center
\includegraphics[width=.8\textwidth]{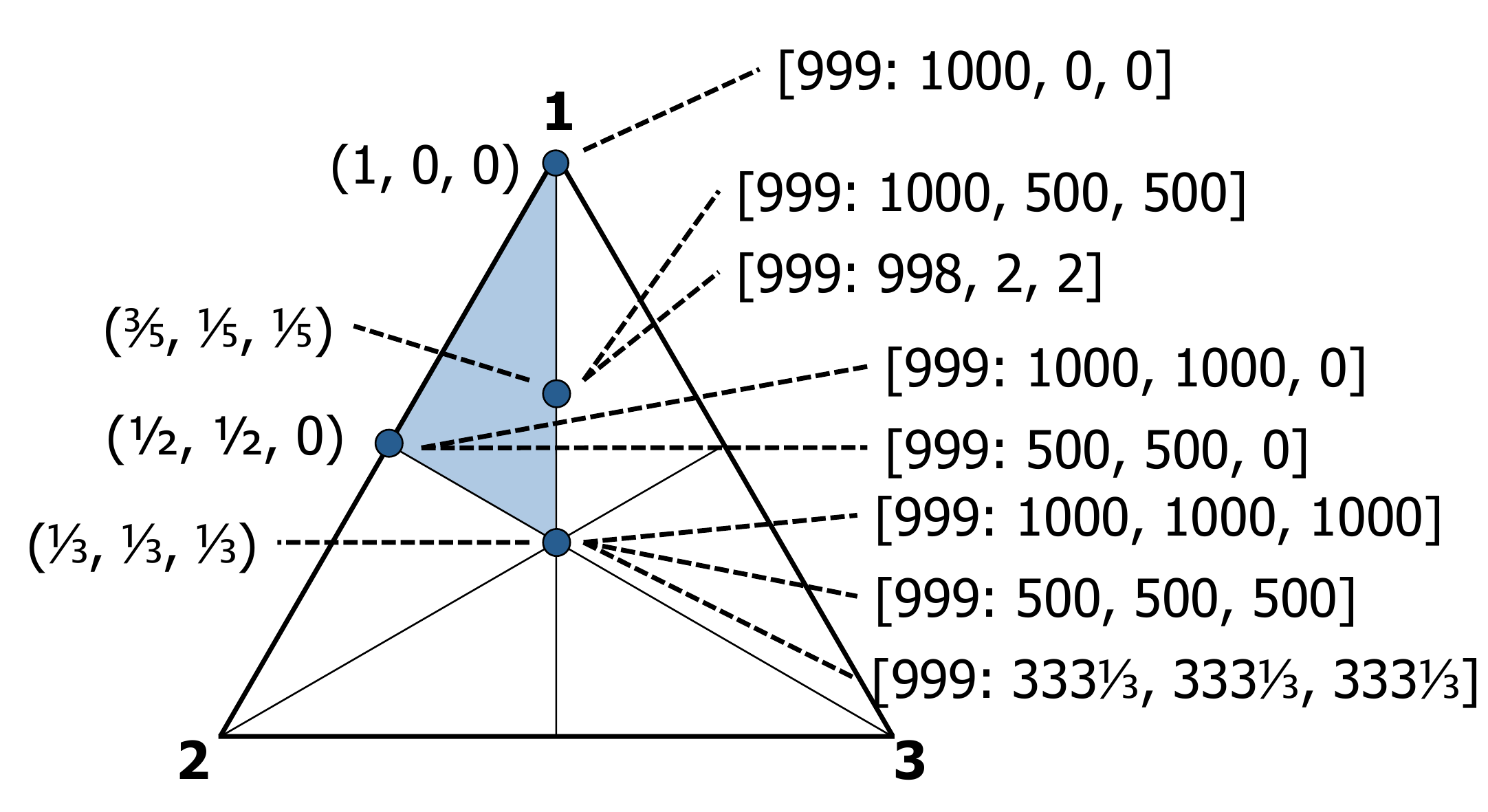}
\caption[short caption]{The games on three players, and their power indices.}
\label{fig:simplex}
\end{figure}
Because we focus on canonical weighted voting games, only the shaded part of the simplex contains games. These games are represented as dark dots. There are four dots, but, as we shall see, there are ten 3-player games. Two of these are degenerate, namely the game with no winning coalitions, and the game in which the empty set is the minimal winning coalition, so all coalitions are winning. Weighted representations for the other eight games are given on the right in the figure. There are only four distinct power indices corresponding to these games, they are indicated on the left.

In the remainder of this section, we will discuss the results obtained from some experiments that we have performed by implementing Algorithm \ref{alg:wvgoutput}, and the algorithm for $(\beta, \mathcal{G}_{\mathsf{wvg}}, \mathcal{L}_{\mathsf{weights}})$-PVGD that directly follows from it (where $\beta$ denotes the normalized Banzhaf index). There are various reasons for performing these experiments: First of all, we are interested in running our algorithm for some small choices of $n$ to see at what point our algorithm becomes intractable. A second goal of these experiments is to obtain some interesting statistics about the class of canonical weighted voting games (e.g., the number of weighted voting games on $n$ players). Thirdly, it we are interested in obtaining some statistics on the average optimal attainable error on a random instance, when we let the algorithm run to completion for small $n$. Lastly, we want to know about the error convergence rate of the algorithm for larger values of $n$, when
solving the problem to optimality is intractable.
More precisely, we want to gain insight in the following:

\begin{itemize}
\item the practical time-performance of the algorithm (for small $n$);
\item the average optimal attainable error on random instances (for small $n$);
\item the error-convergence behaviour of the algorithm (for larger $n$, when it becomes intractable to run the algorithm to completion);
\item obtaining the exact number of weighted voting games of $n$ players, in order to compare this to the theoretical bounds;
\item obtaining the number of weighted voting games for fixed numbers of players, as a function of the number of minimal winning coalitions.
\end{itemize}

In Section \ref{ch6sec1} we give some important information about the implementation of our algorithm. Section \ref{ch6sec2} describes our experiments. Lastly, in Section \ref{ch6sec3} we present the results of the experiments.

\subsection{Implementation details}\label{ch6sec1}
We have implemented Algorithm \ref{alg:wvgoutput} together with all of the optimization tricks described in Section \ref{improvingthealgorithm}. The programming language that we used is \emph{C}.

Execution of the algorithm encompasses solving a large number of linear programs. For doing this, we make use of the \emph{GNU Linear Programming Toolkit} \cite{glpk}. This is an open-source C library.

As said in the introduction of this section, our implementation solves the $(\beta, \mathcal{G}_{\mathsf{wvg}}, \mathcal{L}_{\mathsf{weights}})$-PVGD problem, where $\beta$ is the \emph{normalized} Banzhaf index. This means that for each weighted voting game that is output by our enumeration algorithm, we must invoke a procedure for computing the normalized Banzhaf index. The algorithm we use for this is simply the naive brute-force approach.

Two variants of the enumeration algorithm have been implemented: The first one uses the standard breadth-first approach, that sequentially generates all weighted voting games of $i$ minimal winning coalitions, for increasing $i$. The second one uses the depth-first method mentioned in Remark \ref{depthfirst} (in Section \ref{sec:thealgorithm}).

\subsection{Experiments}\label{ch6sec2}
We perform our experiments on a computer with an Intel Core2 Quad Q9300 2.50GHz CPU with 2GB SDRAM Memory. The operating system is Windows Vista. We compiled our source code using gcc 3.4.4, included in the DJGPP C/C++ Development System. We compiled our code with the \emph{--O3} compiler flag.

For doing the experiments, we need input data: instances that we use as input for the algorithm. An instance is a target banzhaf index for a canonical weighted voting game, i.e., a point $p$ in the unit simplex such that $p_i \geq p_j$ if $i < j$, for all $i,j$ between $1$ and $n$. Our instances therefore consist of samples of such vectors that were taken uniformly at random. These samples are generated according to the procedure described in \cite{uniformsimplex}.

The experiments are as follows:
\begin{description}
\item[Experiment 1:] For up to 8 players, we measured the CPU time it takes for the enumeration algorithm to output all games, for both the breadth-first and the depth-first method. From these experiments we obtain the exact number of canonical weighted voting games of $n$ players for all $n$ between 1 and 8. We also measure the additional runtime that is necessary when we include the computation of the Banzhaf index in the algorithm.
\item[Experiment 2:] We use the enumeration algorithm to compute for all $n$ with $1 \leq n \leq 8$ and all $m$ with $0 \leq m \leq \binom{n}{\lfloor n/2 \rfloor}$, the exact number of canonical weighted voting games on $n$ players with $m$ minimal winning coalitions.
\item[Experiment 3:] For $n$ between 1 and 7, we compute for 1000 random instances the \emph{average optimal error}. That is, the average error that is attained out of 1000 random instances (i.e., uniform random vectors in the $(n-1)$-dimensional unit-simplex), when the algorithm is allowed to run to completion on these instances. We also report the worst error that is attained among these 1000 instances. The error function we use is the square root of the sum of squared errors, as stated in Definition \ref{pvgdproblemdef}. The reason for using this specific error measure is because it has a nice geometric interpretation: it is the Euclidean distance between the target (input) vector and the closest point in the unit simplex that is a normalized Banzhaf index of a weighted voting game.
\item[Experiment 4:] For $n \in \{10, 15, 20\}$, we measure the error-convergence behaviour of the algorithm: the Euclidean error as a function of the amount of time that the algorithm runs. We again do this experiment for both the breadth-first and the depth-first version of the algorithm. For each of these three choices of $n$, we perform this experiment for $10$ random instances, and for each instance we allow the algorithm to run for one minute.
\end{description}

\subsection{Results}\label{ch6sec3}
For Experiment 1, the runtimes are given in Figure \ref{graph1}. From the graph we see that for all four versions of the algorithm, there is relatively not much difference in the runtimes. This means that the inclusion of the Banzhaf index computation procedure does not add a significant amount of additional runtime. Nonetheless, one should not forget that these results are displayed on a logarithmic scale. When we compare the runtimes for 8 players with each other for example, we see that the runtime of the depth-first search version without Banzhaf index computation is 21 minutes, while it is 26 minutes when we include the computation of the Banzhaf index into the algorithm. When we use the breadth-first search approach instead, the runtime is only 16 minutes. In general, the breadth-first search method is a lot faster than the depth-first search method.

\begin{figure}
\center
\includegraphics[scale=1.0,angle=0]{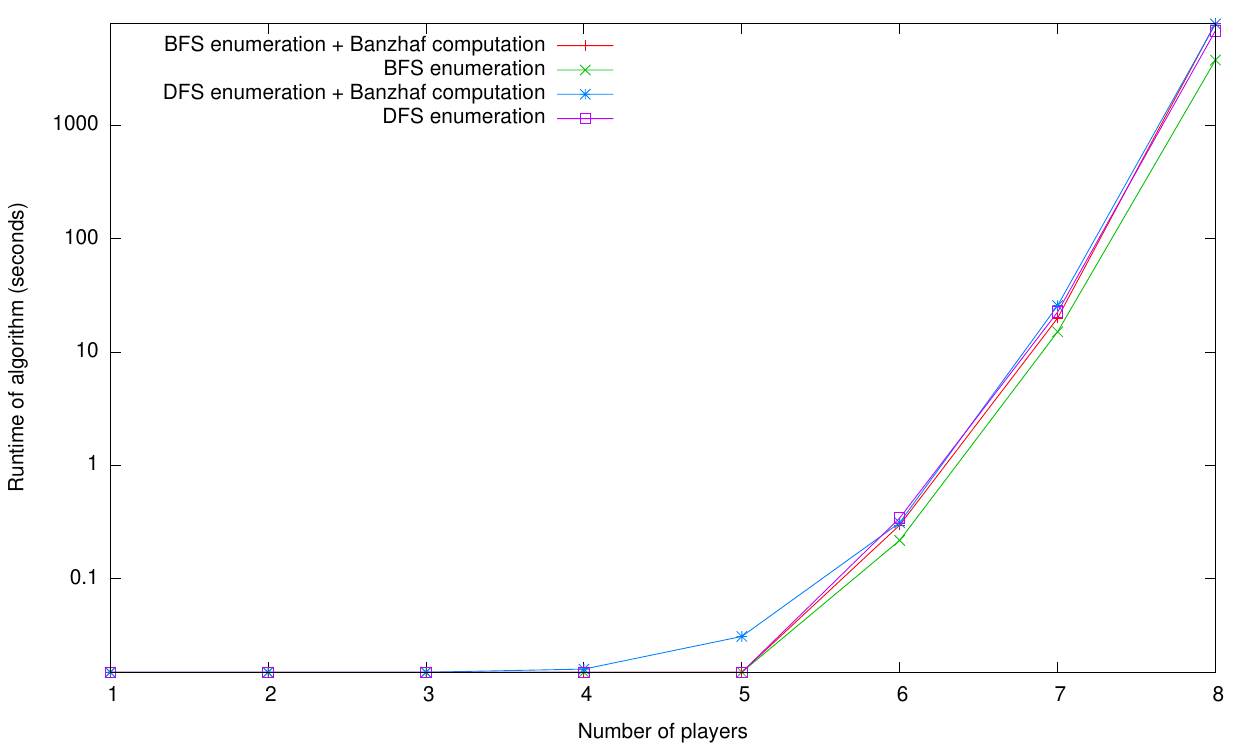}
\caption[Runtimes of various versions of Algorithm \ref{alg:wvgoutput}, for $1 \leq n \leq 8$.]{Runtimes of Algorithm \ref{alg:wvgoutput} for 1 to 8 players, for both the breadth-first search and the depth-first search variant of the algorithm, both with and without the Banzhaf index computation procedure included.}
\label{graph1}
\end{figure}

The number of canonical weighted voting games on $n$ players, for $1 \leq n \leq 8$, is displayed in Figure \ref{graph2}. Even for these small values of $n$, we can already clearly see the quadratic curve of the graph on this log-scale, just as the theoretical bounds from Section \ref{cardinalities} predict. In Table \ref{numberofgames}, we state the exact numbers of canonical weighted voting games on $n$ players as numbers, for $1 \leq n \leq 8$.

\begin{figure}
\center
\includegraphics[scale=1.0,angle=0]{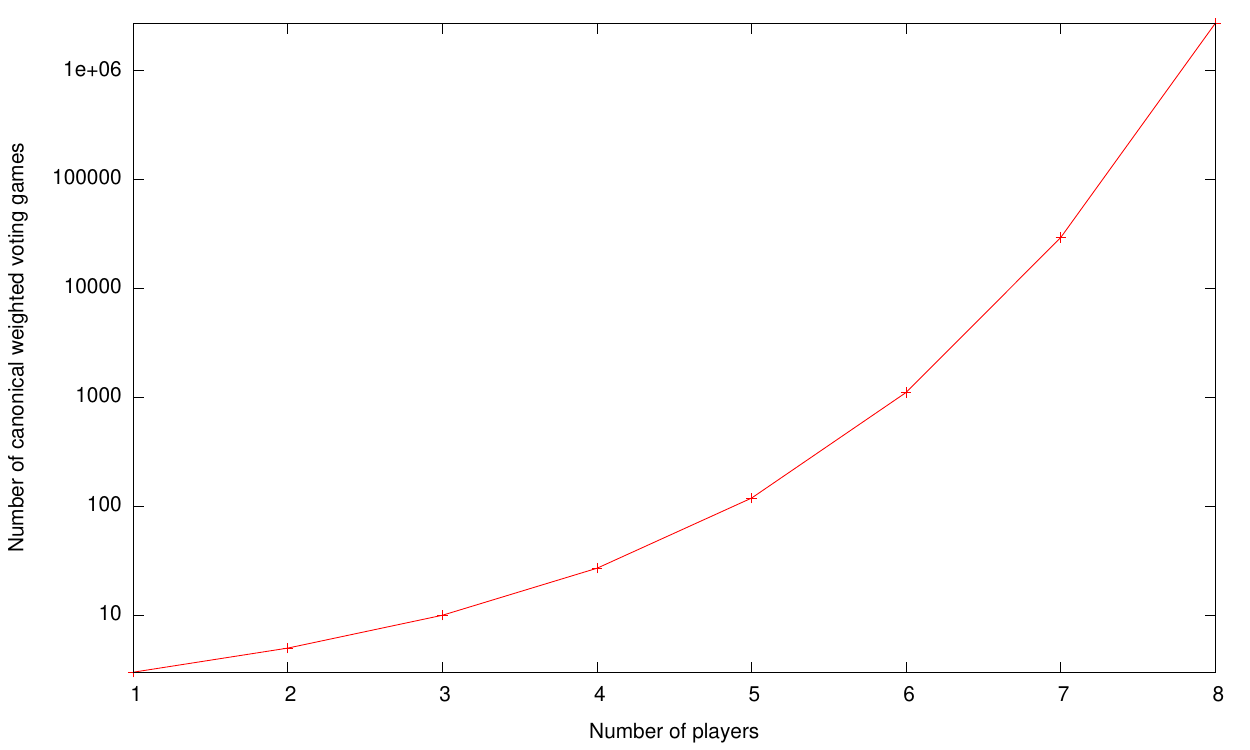}
\caption{The number of canonical weighted voting games on $n$ players, for $1 \leq n \leq 8$.}
\label{graph2}
\end{figure}

\begin{table}
\caption{Exact values for the number of weighted voting games on $n$ players, for $1 \leq n \leq 8$.}
\label{numberofgames}
\begin{center}
\begin{tabular}{|c|c|}
$n$ & $|\mathcal{G}_{\mathsf{cwvg}}(n)|$ \\
\hline \hline
1 & 3\\
\hline
2 & 5\\
\hline
3 & 10\\
\hline
4 & 27\\
\hline
5 & 119\\
\hline
6 & 1113\\
\hline
7 & 29375\\
\hline
8 & 2730166\\
\hline
\end{tabular}
\end{center}
\end{table}

For Experiment 2, the results are displayed in Figure \ref{graph3}. Note that on the vertical axis we have again a log-scale.
We see that for each of these choices of $n$, most of the canonical weighted voting games have a relatively low number of minimal winning coaltions relative to the maximum number of winning coalitions $\binom{n}{\lfloor n/2 \rfloor}$.

\begin{figure}
\center
\includegraphics[scale=1.0,angle=0]{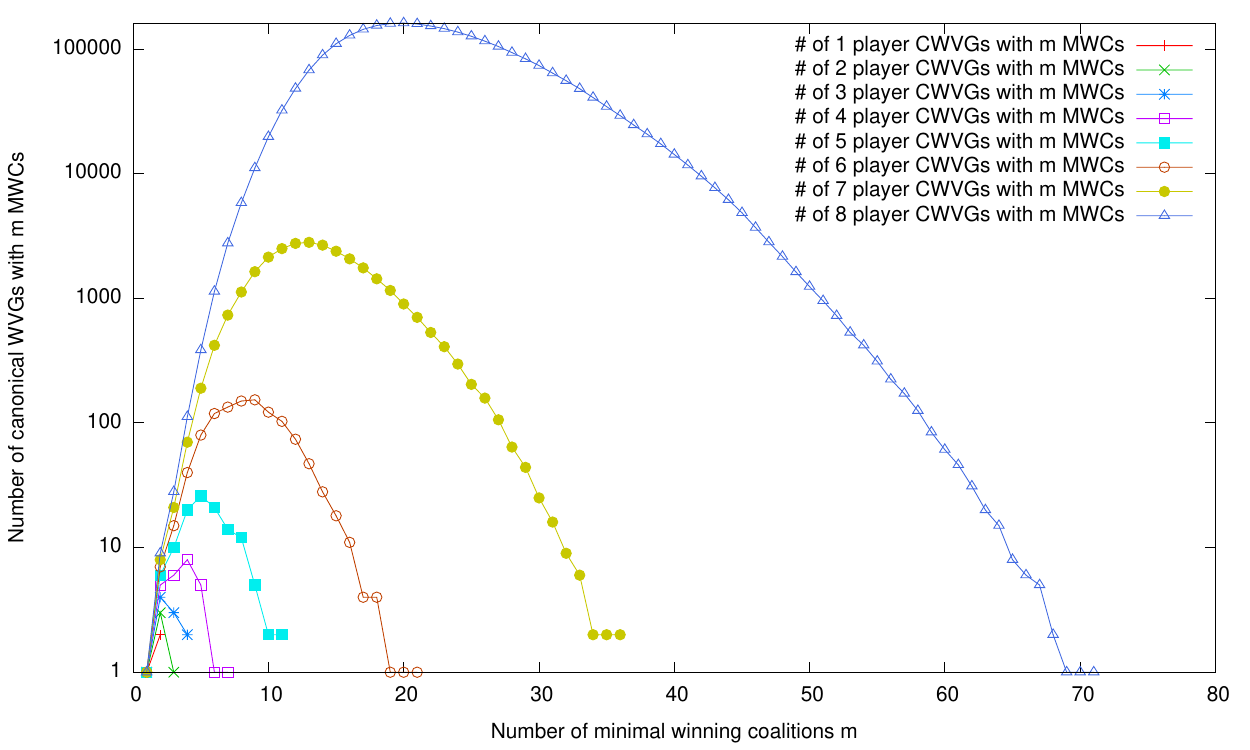}
\caption[The number of canonical weighted voting games of $n$ players with $m$ MWCs.]{The number of canonical weighted voting games (y-axis) on $n$ players, for $1 \leq n \leq 8$, with $m$ minimal winning coalitions (x-axis).}
\label{graph3}
\end{figure}

The Euclidean errors computed in Experiment 3 are displayed in Figure \ref{graph4}.
We see that the errors decrease as $n$ gets larger. We also see that the worst case optimal error can be much worse than the average case. We want to emphasize that these are results computed over only 1000 random instances. Therefore, these worst case optimal errors serve only as a lower bound for the worst case optimal error over all possible instances.
\begin{figure}
\center
\includegraphics[scale=1.0,angle=0]{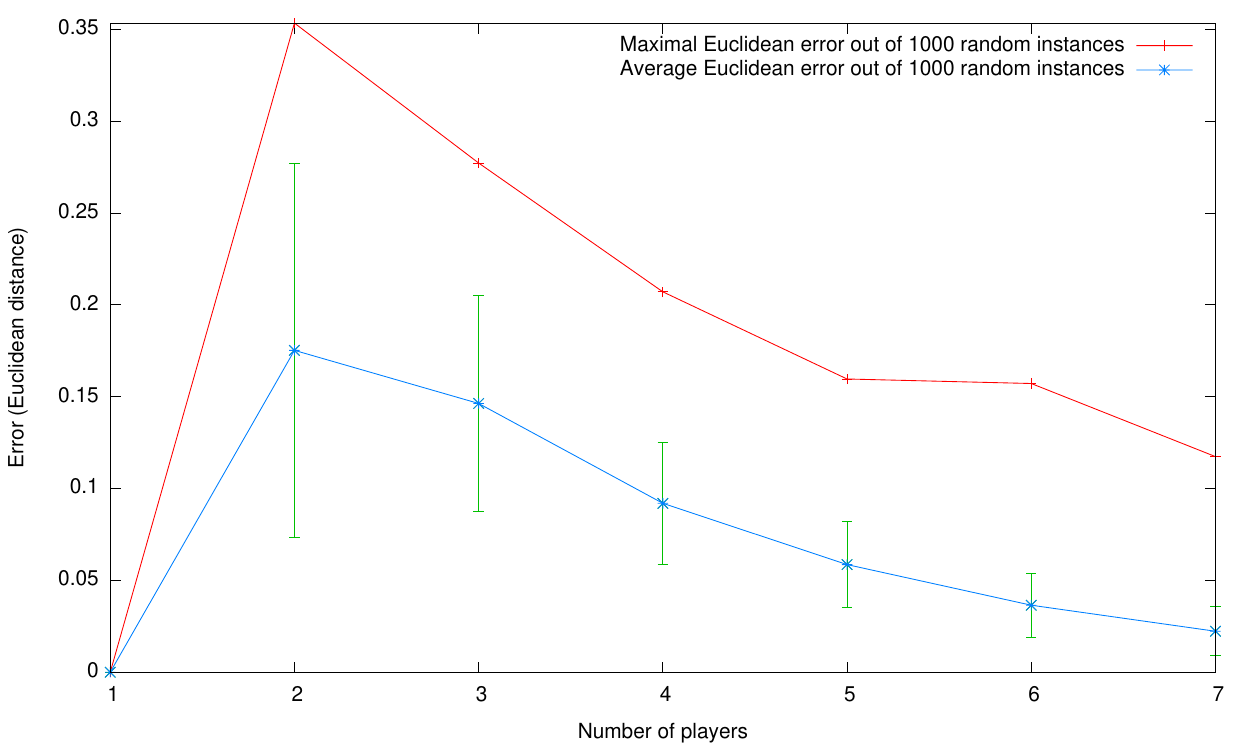}
\caption[Optimal Euclidean error of 1000 random $n$ player instances, for $1 \leq n \leq 7$.]{Optimal Euclidean error of 1000 random $n$ player instances, for $1 \leq n \leq 7$. The error bars indicate one standard deviation.}
\label{graph4}
\end{figure}

For Experiment 4, we see no possibility for a meaningful or interesting visualisation of its results.
Experiment 4 confirms to us that this enumeration-approach of solving PVGD problems quickly becomes impractical as $n$ gets larger.
Our hopes were that the anytime-property of the algorithm would account for a quick convergence to a low (but not necessarily optimal) error; even for large values of $n$. It turns out that this is not the case. In all cases (i.e., for $n = 10$, $n = 15$ and $n = 20$, for all of the $10$ random instances), the error-convergence is high during approximately the first second that the algorithm runs. After that, the frequency by which improvements in the error occur, seems to decrease exponentially. Moreover, it holds without exception that after the first second, the improvements are only tiny.
The average euclidean errors obtained after letting the algorithm run for one minute are as follows:
\begin{itemize}
\item For $n = 10$, after one minute, the average euclidean error over the 10 instances was 0.055234 for the breadth-first variant, and 0.1705204 for the depth-first variant.
\item For $n = 15$, after one minute, the average euclidean error over the 10 instances was 0.0983193 for the breadth-first variant, and 0.2018266 for the depth-first variant.
\item For $n = 20$, after one minute, the average euclidean error over the 10 instances was 0.1475115 for the breadth-first variant, and 0.2399217 for the depth-first variant.
\end{itemize}

From this, we see that for $n = 10$, the breadth-first search method still gives us reasonably nice results within a minute, but when we increase the number of players to $15$ and $20$, we see that the results quickly get worse. Especially when we compare the results to the expected average optimal error (that we obtain by extrapolation of the results of Experiment 3).

Another interesting observation is that these errors for the depth-first variant are much worse than the errors for the breadth-first variant. An explanation for this is that the Banzhaf indices of the generated games are scattered more evenly across the unit simplex in the case of the breadth-first variant: We expect the depth-first variant to enumerate a lot of games for which the Banzhaf indices are close to each other, due to the cover relation of $(\mathcal{G}_{\mathsf{cwvg}}(n), \preceq_{\mathsf{MWC}})$.

A final comment we would like to make is that when $n$ gets larger, the output rate of the enumeration algorithm goes down. Of course, this is explained by the fact that many of the operations in the algorithm must now be performed on games with more players. Especially this slowdown is caused by the computation of the Banzhaf index that is done for every game. In our current implementation, computing the Banzhaf index takes time exponential in $n$.

In general, our current implementation is crude: many procedures in this implementation are still far from optimal. We expect that it is possible to attain a significant improvement in the performance of this algorithm by optimizing the code.

\section{Conclusions \& future work}\label{sec:conclusions}
In this paper, we have derived an \emph{exact} algorithm for solving power index weighted voting game design problems.
We have shown that such a problem is always solvable for any class of games, but the guarantee on the worst-case runtime that we can give is unfortunately only doubly exponential. For the important case of weighted voting games, we have derived an anytime method that runs in exponential time, and we have developed various additional techniques that we can use to speed this algorithm up.

This algorithm is based on an enumeration procedure for the class of weighted voting games: it works by simply enumerating every game, and verifying for each game whether it lies closer to the target power index than the games that we encountered up until that point. For this reason, the algorithm has the anytime-property: as we run this algorithm for a longer period of time, the algorithm enumerates more games, and the quality of the solution will improve.

Also, due to the genericity of enumeration, we can use our algorithm not only to solve power index voting game design problems: we can use it to solve any other voting game design problem as well. The only thing we have to adapt is the error-function of the algorithm (i.e., the part of the algorithm that checks the property in question for each of the games that the enumeration procedure outputs); the enumeration procedure does not need to be changed.

Finally, we implemented a simple, non-optimized version of the algorithm in order to do some experiments and obtain some statistical information about the class of weighted voting games. We have computed some exact values for the number of canonical weighted voting games on $n$ players with $m$ minimal winning coalitions, for small choices of $n$, and every $m$. We have seen that even for small $n$, it is already obvious from the experimental results that the number of weighted voting games grows quadratically on an exponential scale, precisely according to the known asymptotic bounds.

We measured the runtime of the algorithm, and observed that running the algorithm to completion becomes intractable at approximately $n = 10$ (on the computer that we performed the experiments with, we estimate that it takes a month to run the algorithm to completion for $n = 9$). Lastly, for larger values of $n$, our algorithm (or at least our current implementation) is of little use for practical purposes because the error does not converge as quickly as we would want to. We think that we can attain a significant speedup by optimizing the code, and by using better linear programming software.

Note that in most real-life examples, the number of players in a weighted voting game is rather small: usually 10 to 50 players are involved. For future work, the goal is to get this algorithm to yield good results within a reasonable amount of time when the number of players is somewhere in this range. It would already be interesting to be able to solve the problem for ten players, as we are not aware of any enumerations of ten player canonical weighted voting games. However, we concluded that the current implementation of the algorithm is not yet fast enough to be able to handle ten players. Optimistic extrapolation tells us that this would take tens of years; pessimistic extrapolation gives us thousands of years. However, the current implementation is not really efficient, and we have hope that with some future insights, together with careful computer programming, enumerating weighted voting games for ten players or more will be within scope.

We think that it will be interesting to study in more depth the partial order we introduced in this paper, both from a from a computational perspective and from a purely mathematical perspective.
One possible prospect is the following. With regard to weighted voting game design problems, we suspect that it is possible to prune a lot of ``areas'' in this partial order: Careful analysis of the partial order and its properties might lead to results that allow us to construct an enumeration algorithm that \emph{a priori} discards certain (hopefully large) subsets of weighted voting games.

We are moreover interested to see how an algorithm performs that searches through the partial order in a greedy manner, or what will happen if we use some other (possibly heuristic) more intelligent methods to search through the partial order. We wonder if it is possible to use such a search method while still having an optimality guarantee or approximation guarantee on the quality of the solution. Lastly, we can also consider the ideas presented here as a postprocessing step to existing algorithms. In other words, it might be a good idea to first run the algorithm of \cite{shaheeninverse} or \cite{azizinverse} in order to obtain a good initial game. Subsequently, we can try to search through the ``neighborhood'' of the game to find improvements, according to the partial order introduced in this paper, .

Lastly, some related questions for which it would be interesting to obtain an answer are about the computational complexity of the power index voting game design problem, and also about the polynomial-time-approximability of the problem. It is quite straightforward to see that the decision version of this problem is in most cases in $\mathsf{NP}^{\mathsf{\#P}}$ (and therefore in $\mathsf{PSPACE}$), as one could nondeterministically guess a weight vector, and subsequently use a $\mathsf{\#P}$-oracle to obtain the particular power index of interest.\footnote{All power indices that have been proposed and that we have encountered are known to be in $\mathsf{\#P}$} On the other hand, at the moment we do not have any ideas on how to prove hardness for this problem for any complexity class whatsoever. It seems a challenge to come up with a polynomial-time reduction from any known computational problem that is hard for any nontrivial complexity class. Also, on questions related to approximability of PVGD problems we 
currently do not have an answer.

\paragraph{Acknowledgements}
We thank Fabi\'{a}n Riquelme and Andreas Polym\'{e}ris for pointing out a problem in a preliminary version of this paper (See Section \ref{sec:outputpolynomial}).

\bibliography{vgdjournal}

\end{document}